\newtheorem{theorem}{Theorem}
\newtheorem{lemma}{Lemma}
\newtheorem{definition}{Definition}
\newtheorem{remark}{Remark}
\date{}
\numberwithin{equation}{section}
\begin{document}
\begin{center}
\baselineskip .2in {\large\bf Dynamics of a non-autonomous system with age-structured growth and harvesting of prey and mutually interfering predator with reliance on alternative food} 
\end{center}
\begin{center}
\baselineskip .2in {\bf N. S. N. V. K. Vyshnavi Devi$^{1}$, Debaldev Jana$^{1}$ and M. Lakshmanan$^{2}$ }
{\small\it $^1$Department of Mathematics \& SRM Research Institute,}
\\ SRM Institute of Science and Technology, Kattankulathur-603 203, Tamil Nadu, India \\
{\small\it $^2$Centre for Nonlinear Dynamics,}
\\ School of Physics, Bharathidasan University, Tiruchirapalli - 620 024, India.\\
\end{center} 
\begin{abstract}

We perform a detailed analysis of the behaviour of a non-autonomous prey-predator model where age based growth with age discriminatory harvesting in prey and predator's reliance upon alternative food in the absence of that particular prey are considered. We begin by deriving certain sufficient conditions for permanence and positive invariance and then proceed to construct a Lyapunov function to derive some constraints for global attractivity. With the help of continuation theorem we arrive at the best fit criterion to prove the occurrence of a positive periodic solution. Moreover, using Arzela-Ascoli theorem we formulate a proof for a unique positive solution to be almost periodic and we carry out numerical simulation to verify the analytical findings. With the aid of graphs and tables we show the nature of the prey-predator system in response to the alternative food and delays.\\

\noindent{ \emph{Keywords and phrases} : Non-autonomous system; Age-structured model; Alternative food; Permanence; Global attractivity; Almost periodic solution.}
\end{abstract}

\section{Introduction}\label{Sec:1}
A relationship that involves hunting of one species by another species, for the sake of feeding, is called a prey-predator relation. The organism that gains nutrition by consuming another organism is called a predator and the organism that is being fed upon is called the prey \cite{PPR18}. Prey and predator populations respond dynamically to one another. These relationships are vital in conserving the stability of the ecosystem.\\

\par Hilsa shad (\textit{Tenualosa ilisha} (coined by Hamilton, 1822)) is a popular marine fish found in the Indo-Pacific region. This fish is commonly called as hilsa. Hilsa feeds on various Phytoplanktons and Zooplanktons \cite{HKM16}. The size at sexual maturity attained by male and female Hilsa vary from 160-400 mm and 190-430 mm respectively \cite{PVN58}. When fish reproduces for the first time (mostly when it turns one) it is concluded that it has attained maturity \cite{BU15K, SAV12}. Hilsa is said to be anadromous in nature; adults migrate from marine water to freshwater for spawning; the young rear in the freshwater before moving into the marine water for further feeding and growth. Usually found in the Bay of Bengal \cite{BU15}, for the sake of breeding and feeding, Hilsa migrates to the Ganges-Brahmaputra-Meghna (GBM) river basin, and the trans-boundary river basin is divided between India, China, Nepal, Bangladesh and Bhutan. This anadromous migration takes place during the monsoon season (July to October).\\

\par The striped spiny eel (\textit{Macrognathus pancalus} (Hamilton, 1822)), also called barred spiny eel, is a freshwater fish that is mainly noticed in rivers Ganga and Brahmaputra (in India and Bangladesh) and other countries \cite{VW10, SJ94}. They are catadromous; adult fishes migrate into marine waters from fresh water for spawning; younger fishes grow in the sea and later move to the rivers for food and development. Eel fishes are carnivores or meat eaters. Eel fishes feed on small forage fishes, aquatic insects, annelids, etc. The intensity of feeding is very high in early maturity and post-spawning periods. Younger fish consume more food in monsoon season and continue to do so post-monsoon season \cite{SMR05}. \\

\par Note that the younger specimens of eel are usually in the marine waters and the young matured hilsa and adult hilsa are also in the marine waters. This leads to the younger specimens of eel feeding on the young matured hilsa, and thus hilsa forms an important part of eel's diet. As mentioned earlier, consumption of food in younger eels is higher in monsoon but, hilsa migrates to GBM river basin during the monsoon season for spawning. Thus, during the monsoon, eel fishes are forced to depend upon alternative food, although during other seasons eel gets to feed on hilsa. We are particularly interested in modeling this prey-predator relationship between hilsa and eel and also the dependence of eel upon alternative food during the monsoon season when hilsa migrates to the GBM river basin.\\

\par Over exploitation of hilsa at the GBM basin is an alarming issue. The young matured hilsa that migrate to the GBM basin for spawning might be harvested before they spawn. Also, large stocks of small and immature hilsa are being harvested from GBM basin which is a major cause of concern \cite{SAV12}. It is a fact that hilsa cannot attain maturity instantaneously. The time required to attain maturity can be called as maturity delay. If we continue harvesting hilsa in its pre-maturation stage, then hilsa cannot be saved from becoming extinct. One preventive measure is to implement the age structured model of hilsa population.\\
\par The standard population equilibrium model, the Verhulst’s growth model \cite{V38}, explains population growth in a restricted environment. It is reported that the rate at which the population changes is dependent on the following aspects: growth and death components and intraspecific competition, where the growth factor is embedded with delay $\varsigma$ in it \cite{AWW06}. Also it was assumed that the death rate and intraspecific competition rate both together contribute to the rate of decline, which is instantaneous. Rates of death and intraspecific competition are specified by linear and quadratic terms respectively. Now we assume that individuals of hilsa fish which are born before $\varsigma$ time from present instant \textit{t} are sexually able to give birth to new offspring. Jana, Dutta and Samanta \cite{JDS19} considered $\chi(t) = \frac{b\varphi \chi(t-\varsigma)}{\varphi e^{\varphi\varsigma}+ c(e^{\varphi\varsigma}-1)\chi(t-\varsigma)}$ to be the density of hilsa fish that can give birth at time \textit{t} \cite{AWW06}. $\chi(t)$ denotes the number of individuals of prey (hilsa) at time $t$. $b$ represents birth rate of hilsa, $\varphi$ is the natural mortality rate in hilsa. $c$ indicates removal due to intraspecific competition among hilsa.\\

\par Eel being the predator, it consumes hilsa at the rate $\frac{\nu \zeta \chi(t)\Upsilon(t)}{a+\zeta \chi(t)+(1-\zeta)A+\xi \Upsilon (t)}$ with Beddington-DeAngelis type functional response \cite{BJR75,CRS01,JD14,LSB10}, where $\Upsilon(t)$ gives the number of predator (eel) at time $t$. The probability that eel fish depends upon hilsa for feeding is given by $\zeta $ and $(1-\zeta)$ is the probability that eel fish depends upon alternative food $A$ \cite{BM01,SS16} when hilsa migrates to freshwater for spawning. This explains the significance of the term $\zeta \chi (t)+(1-\zeta)A$ in the denominator of the above functional response. If $\zeta=1$ (eel feeds only on hilsa) i.e, the probability that eel depends on alternative food $(1-\zeta)$ is zero, then the functional response term becomes $\frac{\nu \chi (t) \Upsilon(t)}{a+\chi(t)+\xi \Upsilon(t)}$. If $\zeta=0$ (eel does not feed on hilsa) i.e, eel feeds only on alternative food $A$, then $\frac{\nu \zeta \chi (t) \Upsilon (t)}{a+\zeta \chi (t)+(1-\zeta)A+\xi \Upsilon (t)}$ term vanishes \cite{GC18}. $\xi \Upsilon (t)$ measures the mutual interference among eel, $a$ is the saturation constant and $\nu$ denotes the feeding rate of eel. If $\kappa$ is the coefficient of growth in eel population such that $\kappa<\nu$ then the eel population grows at the rate $\frac{\kappa \{\zeta \chi(t)+(1-\zeta)A\}\Upsilon (t)}{a+\zeta \chi(t)+(1-\zeta)A+\xi \Upsilon (t)}$ \cite{CRS01}. If $m$ is the natural mortality in eel and $m\Upsilon (t)$ gives the decline rate of eel, hilsa-eel model becomes

\begin{equation}\label{eq:1}
\begin{split}
\chi'(t)&=\dfrac{b\varphi \chi(t-\varsigma)}{\varphi e^{\varphi\varsigma}+c(e^{\varphi\varsigma}-1)\chi(t-\varsigma)}-
\varphi \chi(t)-c\chi^2(t)-\dfrac{\nu \zeta \chi(t) \Upsilon (t)}{a+\zeta \chi(t)+(1-\zeta)A+\xi \Upsilon (t)},\\
\Upsilon '(t)&=\dfrac{\kappa \{\zeta \chi(t)+(1-\zeta)A\}\Upsilon (t)}{a+\zeta \chi(t)+(1-\zeta)A+\xi \Upsilon (t)}-m\Upsilon (t),
\end{split}
\end{equation}
where $a$, $b$, $c$, $m$, $\zeta $, $A$, $\nu$, $\kappa$, $\varphi$ and $\xi$ are all positive constants. \\ 

\par Several authors \cite{SPH11, SBE16} reported that the temperature of water plays a crucial role as a spawning stimulant. This shows the dependence of birth rate on water temperature. Also, water level, turbidity, heavy rains or no frequent rains, air temperature, etc are some factors that affect the birth, growth, harvest and death in fishes \cite{SJ59}. Some physical, chemical/biochemical factors like water temperature, pH, alkalinity, hardness, etc are important to fish growth and mortality \cite{VRC05}. All the above mentioned factors are time variant. Hence, the system parameters must be time variant.
Incorporating the time factor into the parameters of the model \eqref{eq:1}, we get a non-autonomous system as follows,
\begin{equation}\label{eq:2}
\begin{split}
\chi'(t)&=\dfrac{b(t)\varphi(t) \chi(t-\varsigma)}{\varphi(t) e^{\varphi(t)\varsigma}+c(t)(e^{\varphi(t)\varsigma}-1)\chi(t-\varsigma)}-\varphi(t) \chi(t)-c(t)\chi^2(t)\\
&-\dfrac{\nu(t)\zeta(t)\chi(t) \Upsilon (t)}{a(t)+\zeta(t)\chi(t)+(1-\zeta(t))A(t)+\xi(t) \Upsilon (t)},\\
\Upsilon '(t)&=\dfrac{\kappa(t) \{\zeta(t)\chi(t)+(1-\zeta(t))A(t)\} \Upsilon (t)}{a(t)+\zeta(t)\chi(t)+(1-\zeta(t))A(t)+\xi(t) \Upsilon (t)}-m(t) \Upsilon (t).
\end{split}
\end{equation}

\par Hilsa forms an essential component of the inland fish resource of India \cite{MBP11}. Socially and culturally Hilsa is very important to people living in West Bengal, Odisha, etc., states in India \cite{S17}. The abundance of hilsa is declining each year owing to the excessive fishing of juveniles. In view of its economic importance, there is a need for enhancing the stock of hilsa by adopting sustainable management practices \cite{MBP11}. It is beneficial if hilsa is harvested when it attains certain body size and weight after spawning. This time delay can be called as delay in harvest. Let us say that we can harvest
$\frac{q(t)E(t)\varphi(t)\chi(t-\varsigma)}{\varphi(t) e^{\varphi(t)\varsigma}+c(t)(e^{\varphi(t)\varsigma}-1)\chi(t-\varsigma)}$ \cite{JDS19} density of hilsa fish at time $t$, where both the catchability coefficient ($q$) and fishing effort ($E$) are time variant. Properties of fishing equipment, extent of fishing, etc details are taken care of while considering the value of $q$ \cite{ASF96}. Fishing effort $E$ represents the number of fishermen, number of active boats, etc \cite{RAD01}. If $\varsigma_1$ and $\varsigma_2$ are considered to be delays in maturity and harvest respectively
i.e, individuals of hilsa born before $\varsigma_1$ time from present instant $t$ are able to give birth and we harvest for our economic profit those hilsa that are born before $\varsigma_2$ time from now ($t$), therefore system \eqref{eq:2} is written as follows 
\begin{equation}\label{eq:k}
\begin{split}
\chi'(t)&=\dfrac{b(t)\varphi(t) \chi(t-\varsigma_1)}{\varphi(t) e^{\varphi(t)\varsigma_1}+c(t)(e^{\varphi(t)\varsigma}-1)\chi(t-\varsigma_1)}-\varphi(t) \chi(t)-c(t)\chi^2(t)\\
&-\dfrac{\nu(t)\zeta(t)\chi(t) \Upsilon (t)}{a(t)+\zeta(t)\chi(t)+(1-\zeta(t))A(t)+\xi(t) \Upsilon (t)}-\dfrac{q(t)E(t)\varphi(t)\chi(t-\varsigma_2)}{\varphi(t) e^{\varphi(t)\varsigma_2}+c(t)(e^{\varphi(t)\varsigma_2}-1)\chi(t-\varsigma_2)},\\
\Upsilon '(t)&=\dfrac{\kappa(t) \{\zeta(t)\chi(t)+(1-\zeta(t))A(t)\} \Upsilon (t)}{a(t)+\zeta(t)\chi(t)+(1-\zeta(t))A(t)+\xi(t) \Upsilon (t)}-m(t) \Upsilon (t).
\end{split}
\end{equation}

During the last two decades several authors \cite{CS08, CT06, FK04, HY15, ZF08} have discussed non-autonomous dynamical systems that follow predation of Beddington-DeAngelis type. The primary objective of the present research is to study various dynamical aspects like permanence, positive invariance and global attractivity of the non-autonomous system \eqref{eq:2} which follows predation of Beddington-DeAngelis type and age based growth of the prey and to show the occurence of positive solutions that are periodic and almost periodic in nature. This is followed by the study of dynamical aspects of system \eqref{eq:k} that incorporates two delays, namely, delay in maturity ($\varsigma_1$) and delay in harvest ($\varsigma_2$) in prey.\\

\par We systematize this study into ten sections. Basic preliminaries are provided in Section 2. For system \eqref{eq:2}, positive invariance, permanence and global attractiveness are shown in sections 3, 4 and 5, respectively. We make it evident that there occurs positive solutions to system \eqref{eq:2} that are almost periodic and periodic, in sections 6 and 7. Section 8 deals with the dynamical aspects of system \eqref{eq:k}. Graphical interpretation of numerical examples is illustrated in section 9. Section 10 marks the conclusion of this paper.

\section{Preliminaries}\label{sec:2}
This section presents some basic definitions and assumptions.
\begin{enumerate}
\item[i.] We consider only those solutions $(\chi(t), \Upsilon (t))$ $\ni$ $\chi(t_0)$ and $\Upsilon(t_0)$ are positive $\forall$ $t_0$(initial value of time) $> 0$ for biological reasons.
\item[ii.] We consider $a(t)$, $b(t)$, $c(t)$, $m(t)$, $\zeta(t)$, $A(t)$, $\nu(t)$, $\kappa(t)$, $\varphi(t)$ and $\xi(t)$ to be bounded (positive upper and lower bounds) and continuous.
\item[iii.] If $r(t)$ is a bounded (by positive constants) and continuous function on $\mathbb{R}$,
\begin{center} \vspace{-0.2cm}
$r_L=\inf\limits_{t\in\mathbb{R}}$ $r(t)$ and $r_M=\sup\limits_{t\in\mathbb{R}}$ $r(t)$. \vspace{-0.2cm}
\end{center}
\item[iv.] The co-efficients of system \eqref{eq:2} must satisfy
\begin{center} \vspace{-0.2cm}
min$\{ b_L,a_L,c_L,\zeta_L,A_L,m_L,\nu_L,\kappa_L,\varphi_L,\xi_L\}>0$ and 
max$\{ b_M,a_M,c_M,\zeta_M,A_M,m_M,\nu_M,\kappa_M,\varphi_M,\xi_M\}<\infty$.
\end{center} 
\end{enumerate}
\begin{definition} 
The set $\{(\chi(t), \Upsilon (t))/$ $||(\chi(t), \Upsilon (t))-(\chi^\ast, \Upsilon ^\ast)|| \leqslant \varepsilon \}$ with center at $(\chi^\ast, \Upsilon ^\ast)$ $\ni$ every trajectory $(\chi(t), \Upsilon (t))$ of system \eqref{eq:2}, which is lying in the set at $t=t_0$(initial time), remains in it $\forall$ $t\geqslant t_0$(initial time) and tends to $(\chi^\ast, \Upsilon ^\ast)$ as $t\to\infty$. Such a set is said to be positively invariant with reference to system \eqref{eq:2}.
\end{definition}
\begin{definition} 
System \eqref{eq:2} is permanent when $\varphi>0$ and $\varphi>0$ (constants) exist and $0<\varphi\leqslant\varphi$ $\ni$
\begin{center} \vspace{-0.2cm}
$\min\{\lim\limits_{t\to +\infty}$ inf $\chi(t),\lim\limits_{t\to +\infty}$ inf $\Upsilon (t)\}\geqslant\varphi$ and $\max\{\lim\limits_{t\to +\infty}$ sup $\chi(t),\lim\limits_{t\to +\infty}$ sup $\Upsilon (t)\}\leqslant\varphi$ \vspace{-0.2cm}
\end{center}
for all those solutions of the system that originate from positive values (at $t=t_0$).
\end{definition}
\begin{definition} 
$(\chi^\ast(t), \Upsilon ^\ast(t))$, a bounded positive solution to system \eqref{eq:2} is globally attractive when $(\chi(t), \Upsilon (t))$ (another solution of the system) satisfies \begin{center} \vspace{-0.2cm}
$\lim\limits_{t\to\infty}$ $(|\chi(t)-\chi^\ast(t)|+| \Upsilon (t)- \Upsilon ^\ast(t)|)=0$. \vspace{-0.2cm}
\end{center}
\end{definition} 
\begin{definition} 
The set of all solutions of system \eqref{eq:2} is ultimately bounded if there is a $D>0$ $\ni$ for any solution $(\chi(t), \Upsilon (t))$ of \eqref{eq:2}, $\exists$ $W>0$ $\ni$ $||(\chi(t), \Upsilon (t))||\leqslant D$ for every $t\geqslant$ (initial time)$t_0$ $+ W.$
\end{definition}
\begin{definition} 
Consider the collection of continuous functions $C(D)$ on a compact metric space $D$ and let $j \in B\subseteq C(D)$. The set of functions B is called an equi-continuous family if for every $ \varepsilon>0$ $\exists$ $\eta>0\ni$ $d(x,y)<\eta \iff $ $|f(x)-f(y)|<\varepsilon$ for every $j$ that belongs to $B$.
\end{definition} 
\begin{definition} 
If a function $f:\mathbb{R}\rightarrow\mathbb{R}$ is such that $f(t+T)=f(t)$ is accurately true for immeasurably great number of values of $T$, where $T$ takes any value from $-\infty$ to $+\infty$ in such a way that arbitrarily long empty intervals are not left, then $f$ is said to be almost periodic.
\end{definition} 
\begin{definition} 
A function $z_1:\mathbb{R}\rightarrow\mathbb{R}$ is asymptotically (as $t\rightarrow\infty$) almost periodic if $z_1(t)=z_2(t)+z_3(t)$ where $z_2(t)$ and $z_3(t)$ are almost periodic and continuous functions respectively, $\ni$ $z_3$ is a real number and function $ z_3(t)\rightarrow 0$ as $ t$(time) $\rightarrow\infty.$
\end{definition}

\section{Positive Invariance}\label{Sec:3}
\begin{theorem}\label{thm:t1}
If $ b_L\varphi_L\xi_L > \nu_M \zeta_M $, $\kappa_M>m_L$ and $(\kappa_L-m_M)\zeta_L k^{\varepsilon}_{\chi}> m_M(a_M+(1-\zeta_L)A_M)$ then the set defined by 
\begin{center}
$ \Gamma_\varepsilon=\{(\chi(t), \Upsilon (t))\in \mathbb{R}^2 / k^{\varepsilon}_{\chi} \leqslant \chi(t) \leqslant K^{\varepsilon}_{\chi}, k^{\varepsilon}_{\Upsilon } \leqslant \Upsilon (t) \leqslant K^{\varepsilon}_{\Upsilon } \} $
\end{center}
is positively invariant 
w.r.t system \eqref{eq:2} where
\begin{center}
$ K^{\varepsilon}_{\chi} =\dfrac{b_M\varphi_M}{c_L}+\varepsilon$ ,
$ k^{\varepsilon}_{\chi}=\dfrac{b_L\varphi_L\xi_L-\nu_M \zeta_M }{c_M \xi_L} -\varepsilon $,
$K^{\varepsilon}_{\Upsilon } =\dfrac{(\kappa_M-m_L)\zeta_M K^{\varepsilon}_{\chi}} {m_L \xi_L}+\varepsilon$, \vspace{0.4cm}\\ $k^{\varepsilon}_{\Upsilon } =\dfrac{(\kappa_L-m_M)\zeta_L k^{\varepsilon}_{\chi}-m_M(a_M+(1-\zeta_L)A_M}{m_M \xi_M}-\varepsilon$,
\end{center}
and $ \varepsilon \geqslant 0 $ is sufficiently small so that $ k^{\varepsilon}_{\chi} > 0 . $
\end{theorem}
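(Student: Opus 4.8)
The plan is to run the standard ``the flow is inward on the boundary'' argument, modified to accommodate the delay $\varsigma$. First I would fix an admissible history --- a continuous map $t\mapsto(\chi(t),\Upsilon(t))$ on $[t_0-\varsigma,t_0]$ whose graph lies in $\Gamma_\varepsilon$ --- and let $(\chi(\cdot),\Upsilon(\cdot))$ be the solution of \eqref{eq:2} it generates. Suppose, toward a contradiction, that this solution ever leaves $\Gamma_\varepsilon$, and set $t^\ast=\sup\{\,t\ge t_0:\ (\chi(s),\Upsilon(s))\in\Gamma_\varepsilon\ \text{for all}\ s\in[t_0,t]\,\}$. Then $t^\ast<\infty$, $(\chi(t^\ast),\Upsilon(t^\ast))\in\partial\Gamma_\varepsilon$ by continuity, and there is a sequence $t_n\downarrow t^\ast$ with $(\chi(t_n),\Upsilon(t_n))\notin\Gamma_\varepsilon$. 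The reason for using a supremum rather than an infinitesimal perturbation argument is that on all of $[t_0-\varsigma,t^\ast]$ one still has $k^{\varepsilon}_{\chi}\le\chi\le K^{\varepsilon}_{\chi}$ and $k^{\varepsilon}_{\Upsilon}\le\Upsilon\le K^{\varepsilon}_{\Upsilon}$; in particular $k^{\varepsilon}_{\chi}\le\chi(t^\ast-\varsigma)\le K^{\varepsilon}_{\chi}$, so the delayed recruitment term at $t^\ast$ can be estimated from these bounds. It then suffices to show that on whichever face of $\partial\Gamma_\varepsilon$ the point $(\chi(t^\ast),\Upsilon(t^\ast))$ sits, the corresponding derivative at $t^\ast$ is strictly inward-pointing --- this contradicts the exit sequence $t_n$.

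For the two $\chi$-faces the structural facts I would use are: (i) for each fixed $t$ the recruitment functional $x\mapsto b(t)\varphi(t)x/[\varphi(t)e^{\varphi(t)\varsigma}+c(t)(e^{\varphi(t)\varsigma}-1)x]$ is increasing in $x\ge 0$, with per-capita value controlled by $b_M$ and $\varphi_M$; and (ii) the Beddington--DeAngelis loss term is non-negative and, because the single summand $\xi(t)\Upsilon(t)$ of its denominator already dominates it, is at most $\nu_M\zeta_M\chi(t)/\xi_L$. On the face $\chi=K^{\varepsilon}_{\chi}$: bound the recruitment functional above via $\chi(t^\ast-\varsigma)\le K^{\varepsilon}_{\chi}$, discard the non-negative loss term, and use $\varphi(t^\ast)\chi(t^\ast)\ge\varphi_L K^{\varepsilon}_{\chi}$ and $c(t^\ast)\chi^2(t^\ast)\ge c_L(K^{\varepsilon}_{\chi})^2$; substituting $K^{\varepsilon}_{\chi}=b_M\varphi_M/c_L+\varepsilon$ makes the right-hand side of \eqref{eq:2} at most $-\varphi_L K^{\varepsilon}_{\chi}<0$, so $\chi'(t^\ast)<0$. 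On the face $\chi=k^{\varepsilon}_{\chi}$: bound the loss term above by $\nu_M\zeta_M k^{\varepsilon}_{\chi}/\xi_L$, bound the recruitment functional below using $\chi(t^\ast-\varsigma)\ge k^{\varepsilon}_{\chi}$, and estimate $\varphi(t^\ast)\chi(t^\ast)\le\varphi_M k^{\varepsilon}_{\chi}$, $c(t^\ast)\chi^2(t^\ast)\le c_M(k^{\varepsilon}_{\chi})^2$; after these estimates the stated value $k^{\varepsilon}_{\chi}=(b_L\varphi_L\xi_L-\nu_M\zeta_M)/(c_M\xi_L)-\varepsilon$, which is positive precisely because $b_L\varphi_L\xi_L>\nu_M\zeta_M$, yields $\chi'(t^\ast)>0$.

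For the two $\Upsilon$-faces I would write $\Upsilon'(t)=\Upsilon(t)\,[\mathcal R(t)-m(t)]$ with $\mathcal R(t)=\kappa(t)\{\zeta(t)\chi(t)+(1-\zeta(t))A(t)\}/[a(t)+\zeta(t)\chi(t)+(1-\zeta(t))A(t)+\xi(t)\Upsilon(t)]$, noting $0<\mathcal R(t)<\kappa(t)$, so that $\operatorname{sgn}\Upsilon'(t^\ast)=\operatorname{sgn}(\mathcal R(t^\ast)-m(t^\ast))$. On $\Upsilon=K^{\varepsilon}_{\Upsilon}$: estimate $\mathcal R(t^\ast)$ from above --- its numerator via $\chi(t^\ast)\le K^{\varepsilon}_{\chi}$ and its denominator by retaining only $\xi_L K^{\varepsilon}_{\Upsilon}$ --- so that the hypothesis $\kappa_M>m_L$ (the only regime in which the numerator is not trivially dominated) together with $K^{\varepsilon}_{\Upsilon}=(\kappa_M-m_L)\zeta_M K^{\varepsilon}_{\chi}/(m_L\xi_L)+\varepsilon$ gives $\mathcal R(t^\ast)<m_L\le m(t^\ast)$, i.e. $\Upsilon'(t^\ast)<0$. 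On $\Upsilon=k^{\varepsilon}_{\Upsilon}$: estimate $\mathcal R(t^\ast)$ from below, retaining only the prey-derived part $\zeta(t^\ast)\chi(t^\ast)\ge\zeta_L k^{\varepsilon}_{\chi}$ in the numerator and bounding the denominator by the relevant suprema (here the terms $a_M$ and $(1-\zeta_L)A_M$ appear); the third hypothesis $(\kappa_L-m_M)\zeta_L k^{\varepsilon}_{\chi}>m_M(a_M+(1-\zeta_L)A_M)$ --- which in particular forces $\kappa_L>m_M$ --- is precisely what makes $\mathcal R(t^\ast)>m_M\ge m(t^\ast)$ with the stated value of $k^{\varepsilon}_{\Upsilon}$, i.e. $\Upsilon'(t^\ast)>0$.

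Since each of the four faces produces a strictly inward-pointing derivative at $t^\ast$, the exit sequence $t_n$ cannot exist; hence there is no exit time and $\Gamma_\varepsilon$ is positively invariant with respect to \eqref{eq:2}. I expect the main obstacle to be the bookkeeping around the delayed recruitment term: since it depends on $\chi(t^\ast-\varsigma)$ rather than on $\chi(t^\ast)$, one must first propagate the box bounds on $\chi$ across all of $[t_0-\varsigma,t^\ast]$ and only then estimate it, exploiting monotonicity in the delayed argument; and one must check that all four boundary inequalities are \emph{strict}, which is exactly where the positive slack $\varepsilon$ (and the strictness of the three hypotheses) is spent. Among the four faces, the lower bound on $\Upsilon$ is the most delicate, because forcing the predator's per-capita intake to exceed its mortality at $\Upsilon=k^{\varepsilon}_{\Upsilon}$ requires the prey to be already bounded below by $k^{\varepsilon}_{\chi}$ --- which is why the third hypothesis is written through $k^{\varepsilon}_{\chi}$ and is the binding constraint.
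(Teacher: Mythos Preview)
Your proposal is correct and rests on the same differential inequalities the paper uses --- the predation term bounded above by $\nu_M\zeta_M\chi/\xi_L$, the recruitment term bounded via the extremal parameter values, and the predator equation rewritten as $\Upsilon'=\Upsilon(\mathcal R-m)$ --- so the algebraic content is identical. The organizational difference is that the paper proceeds by direct comparison: it writes the one-sided inequalities $\chi'\le b_M\varphi_M\chi(t-\varsigma)-c_L\chi^2$, $\chi'\ge b_L\varphi_L\chi(t-\varsigma)-c_M\chi^2-\nu_M\zeta_M\chi/\xi_L$, etc., and then asserts the corresponding bounds on $\chi(t)$ and $\Upsilon(t)$ for all $t\ge t_0$, essentially treating the ratio $\chi(t-\varsigma)/\chi(t)$ as if it were $1$ and invoking an implicit comparison principle. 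You instead run an explicit first-exit-time contradiction, which buys you two things the paper leaves tacit: a clean mechanism for propagating the box bounds to the delayed argument $\chi(t^\ast-\varsigma)$ (by insisting the full history on $[t_0-\varsigma,t_0]$ lie in $\Gamma_\varepsilon$ and using the definition of $t^\ast$), and a transparent accounting of where the slack $\varepsilon$ and the strict hypotheses are spent to make the boundary derivatives \emph{strictly} inward. Your version is the more rigorous packaging of the same estimates.
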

\begin{proof}
Suppose, a solution of system \eqref{eq:2} is given by $ (\chi(t), \Upsilon (t)) $ (with positive initial conditions $ \chi(t_0)$ and $\Upsilon (t_0)$)
with $ k^{\varepsilon}_{\chi} \leqslant \chi(t_0) \leqslant K^{\varepsilon}_{\chi} $ and $ k^{\varepsilon}_{\Upsilon } \leqslant \Upsilon (t_0) \leqslant K^{\varepsilon}_{\Upsilon } $. That is, $ (\chi(t_0), \Upsilon (t_0)) \in \Gamma_\varepsilon $. \\
From system \eqref{eq:2}, when $ t \geqslant t_0 $ we obtain that 
\begin{center}
$ \chi'(t) \leqslant b_M\varphi_M\chi(t-\varsigma)- c_L \chi^2 (t) $.
\end{center}
Rearranging the terms, we get, 
\begin{center}
$\dfrac{\chi'(t)}{c_L \chi(t)} \leqslant 
\dfrac{b_M\varphi_M}{c_L}\dfrac{\chi(t-\varsigma)}{\chi(t)}-\chi(t) $. 
\end{center}
Then, for every $ t \geqslant t_0$(initial time), 
\begin{center}
$\chi(t)\leqslant\dfrac{b_M\varphi_M}{c_L}\cdot$
\end{center}
It means that for any $\varepsilon >0 $ we can say 
\begin{center}
$\chi(t)\leqslant\dfrac{b_M\varphi_M}{c_L}+\varepsilon=K^{\varepsilon}_{\chi}.$
\end{center}
From system \eqref{eq:2}, when $ t\geqslant t_0 $ we obtain
\begin{center}
$ \chi'(t)\geqslant b_L\varphi_L\chi(t-\varsigma) - c_M \chi^2(t)-\dfrac{\nu_M \zeta_M \chi(t)}{\xi_L}\cdot $
\end{center}
This is same as
\begin{center}
$ \dfrac{\chi'(t)}{\chi(t)} \geqslant b_L\varphi_L \dfrac{\chi(t-\varsigma)}{\chi(t)}- c_M \chi(t) -\dfrac{\nu_M \zeta_M}{\xi_L}\cdot $
\end{center}
We can say that
\begin{center}
$ \chi(t) \geqslant \dfrac{b_L\varphi_L }{c_M}-\dfrac{\nu_M \zeta_M}{\xi_L c_M}\cdot $
\end{center}
And hence we get,
\begin{center}
$ \chi(t) \geqslant \dfrac{b_L\varphi_L\xi_L -\nu_M \zeta_M}{\xi_L c_M}\cdot $ 
\end{center}
For any $ \varepsilon>0 $ we can say that
\begin{center}
$ \chi(t) \geqslant \dfrac{b_L\varphi_L\xi_L -\nu_M \zeta_M}{\xi_L c_M}-\varepsilon=k^{\varepsilon}_{\chi}. $ 
\end{center}
Hence $ k^{\varepsilon}_{\chi} \leqslant \chi(t) \leqslant K^{\varepsilon}_{\chi} $ for every $ t \geqslant t_0$. \\
From system \eqref{eq:2}, when $ t \geqslant t_0 $ we obtain
\begin{center}
$ \Upsilon '(t) \leqslant \dfrac{\kappa_M \zeta_M K^{\varepsilon}_{\chi} \Upsilon (t)}{\xi_L \Upsilon (t)+K^{\varepsilon}_{\chi}\zeta_M}-m_L \Upsilon (t) $.
\end{center}
This means
\begin{center}
$ \Upsilon '(t) \leqslant \dfrac{(\kappa_M \zeta_M-m_L \zeta_M) K^{\varepsilon}_{\chi}\Upsilon (t)-m_L \xi_L \Upsilon ^2(t)}{\xi_L \Upsilon (t)+K^{\varepsilon}_{\chi}\zeta_M}\cdot $
\end{center}
For any $\varepsilon >0 $ we can say,
\begin{center}
$ \Upsilon '(t) \leqslant [K^{\varepsilon}_{\Upsilon }-\Upsilon (t)]\dfrac{m_L\xi_L \Upsilon (t)}{\xi_L \Upsilon (t)+K^{\varepsilon}_{\chi}\zeta_M}\cdot $
\end{center}
Hence for every $t\geqslant t_0$, 
\begin{center}
$ \Upsilon (t) \leqslant K^{\varepsilon}_{\Upsilon }. $
\end{center}
From system \eqref{eq:2}, when $ t \geqslant t_0 $ we obtain 
\begin{center}
$ \Upsilon '(t) \geqslant \dfrac{\kappa_L \zeta_L k^{\varepsilon}_{\chi} \Upsilon (t)}{a_M+\zeta_L k^{\varepsilon}_{\chi}+(1-\zeta_L)A_M+\xi_M \Upsilon (t)}-m_M \Upsilon (t) $.
\end{center}
Hence we get,
\begin{center}
$ \Upsilon '(t) \geqslant \dfrac{\Bigg[\dfrac{[(\kappa_L \zeta_L-m_M \zeta_L)k^{\varepsilon}_{\chi}-m_M(a_M+A_M(1-\zeta_L))]}{m_M\xi_M}-\Upsilon (t)\Bigg] \Upsilon (t)m_M \xi_M}{a_M+\zeta_L k^{\varepsilon}_{\chi}+(1-\zeta_L)A_M+\xi_M \Upsilon (t)}\cdot $ 
\end{center}
This implies that
\begin{center}
$\Upsilon '(t) \geqslant[k^{\varepsilon}_{\Upsilon }-\Upsilon (t)]\dfrac{\Upsilon (t)m_M \xi_M}{a_M+\zeta_L k^{\varepsilon}_{\chi}+(1-\zeta_L)A_M+\xi_M \Upsilon (t)}\cdot$ 
\end{center}
Hence for every $t\geqslant t_0$, 
\begin{center}
$ k^{\varepsilon}_{\Upsilon } \leqslant \Upsilon (t). $
\end{center}
Hence $ k^{\varepsilon}_{\Upsilon } \leqslant \Upsilon (t) \leqslant K^{\varepsilon}_{\Upsilon } $ for every $ t$ $ \geqslant $ (initial time)$t_0$.
That is, $ (\chi(t), \Upsilon (t)) \in \Gamma_\varepsilon $ for every $ t $ $\geqslant$ (initial time)$t_0$. \\
Hence $ \Gamma_\varepsilon $ is positively invariant w.r.t system \eqref{eq:2}.
\end{proof}

\section{Permanence}\label{Sec:4}
\begin{theorem}\label{thm:t2}
If $ b_L\varphi_L\xi_L > \nu_M \zeta_M $, $\kappa_M>m_L$ and $(\kappa_L-m_M)\zeta_L k^{0}_{\chi}> m_M(a_M+(1-\zeta_L)A_M)$\vspace{0.2cm}\\ hold then system \eqref{eq:2} is permanent, where $k^{0}_{\chi}=\dfrac{b_L\varphi_L\xi_L-\nu_M \zeta_M }{c_M \xi_L}\cdot$
\end{theorem}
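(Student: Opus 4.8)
The plan is to run the same cascade of one-sided differential inequalities that appears in the proof of Theorem~\ref{thm:t1}, but to replace the (essentially invariance-based) deductions there with genuine asymptotic comparison arguments that hold for \emph{every} solution issuing from positive data, not just for those started inside $\Gamma_\varepsilon$. Equivalently, since Theorem~\ref{thm:t1} already shows each $\Gamma_\varepsilon$ is positively invariant, permanence amounts to showing that every positive solution is eventually absorbed into $\Gamma_\varepsilon$ for every $\varepsilon>0$; that absorption is exactly what the comparison estimates deliver. The key structural point is that the four bounds decouple into a directed chain and no simultaneous/fixed-point argument is needed: first an a~priori upper bound for $\chi$; then an upper bound for $\Upsilon$ (using the $\chi$-bound); then a lower bound for $\chi$ (using only the $\chi$-upper bound, because the predation loss is dominated by $\nu_M\zeta_M\chi/\xi_L$ independently of $\Upsilon$); and finally a lower bound for $\Upsilon$ (using the $\chi$-lower bound).

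For the upper bound on $\chi$, start from $\chi'(t)\le b_M\varphi_M\,\chi(t-\varsigma)-c_L\chi^2(t)$. First prove $\chi$ is bounded: if not, evaluate this inequality along a sequence $t_n\to\infty$ at which $\chi$ achieves successive maxima over $[t_0,t_n]$, so $\chi'(t_n)\ge 0$ and $\chi(t_n-\varsigma)\le\chi(t_n)$, forcing $\chi(t_n)\le b_M\varphi_M/c_L$, a contradiction. With boundedness secured, the fluctuation lemma at a sequence where $\chi(t_n)\to\limsup_t\chi(t)=:\bar\chi$ and $\chi'(t_n)\to0$ (using $\limsup_t\chi(t-\varsigma)\le\bar\chi$) yields $\bar\chi\le b_M\varphi_M/c_L=K^{0}_{\chi}$. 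For the upper bound on $\Upsilon$: fix a small $\varepsilon>0$, so that $\chi(t)\le K^{0}_{\chi}+\varepsilon$ for $t$ large; monotonicity of the Beddington--DeAngelis growth response in $\chi$ and $\Upsilon$ lets one dominate the $\Upsilon$-equation by a scalar non-autonomous logistic inequality $\Upsilon'(t)\le\Upsilon(t)\bigl(\alpha(\varepsilon)-\beta\,\Upsilon(t)\bigr)$ with $\beta>0$ and $\alpha(\varepsilon)>0$ precisely because $\kappa_M>m_L$; a standard scalar comparison principle, followed by $\varepsilon\downarrow0$, gives $\limsup_t\Upsilon(t)\le K^{0}_{\Upsilon}$.

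Steps three and four mirror this. For the lower bound on $\chi$, the inequality $\chi'(t)\ge b_L\varphi_L\,\chi(t-\varsigma)-c_M\chi^2(t)-\dfrac{\nu_M\zeta_M}{\xi_L}\chi(t)$ is handled by the same boundedness-plus-fluctuation-plus-comparison scheme, and strict positivity of the resulting bound $k^{0}_{\chi}=\dfrac{b_L\varphi_L\xi_L-\nu_M\zeta_M}{c_M\xi_L}$ is exactly the hypothesis $b_L\varphi_L\xi_L>\nu_M\zeta_M$. Finally, once $\liminf_t\chi(t)\ge k^{0}_{\chi}$, on the set $\{\chi\ge k^{0}_{\chi}-\varepsilon\}$ the $\Upsilon$-equation is minorised by yet another scalar logistic inequality whose positive equilibrium exists iff $(\kappa_L-m_M)\zeta_L k^{0}_{\chi}>m_M\bigl(a_M+(1-\zeta_L)A_M\bigr)$, which yields $\liminf_t\Upsilon(t)\ge k^{0}_{\Upsilon}>0$. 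Taking the permanence constants to be $\min\{k^{0}_{\chi},k^{0}_{\Upsilon}\}$ and $\max\{K^{0}_{\chi},K^{0}_{\Upsilon}\}$ then matches Definition~2 and finishes the proof.

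The main obstacle is the discrete delay $\varsigma$ in the recruitment term: the damping contributions ($c\chi^2$, $m\Upsilon$, $\xi\Upsilon$) act instantaneously while the positive feedback is delayed, so one cannot simply quote an autonomous logistic comparison. The device above is the resolution — first obtain mere boundedness and boundedness away from $0$ by a peak/trough-point argument, then pass to $\limsup$/$\liminf$ via the fluctuation lemma, at which point $\chi(t-\varsigma)$ may be replaced by $\bar\chi$ (respectively by the lower limit) and the delay drops out of the limiting inequality. A secondary, purely bookkeeping, difficulty is to choose the crude constants in the second through fourth steps so that the Beddington--DeAngelis fractions are genuinely dominated (respectively minorised) by the advertised logistic right-hand sides uniformly in $t$; this is routine given assumptions (ii)--(iv) of Section~\ref{sec:2} together with the one-sided bounds already in hand, and the three displayed hypotheses of the theorem are exactly the conditions making $k^{0}_{\chi}$ and $k^{0}_{\Upsilon}$ strictly positive.
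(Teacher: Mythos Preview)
Your proposal is correct and follows essentially the same approach as the paper: both derive the same cascade of one-sided differential inequalities for $\chi$ and $\Upsilon$ and pass to $\limsup$/$\liminf$ to obtain the bounds $K^{0}_{\chi}$, $k^{0}_{\chi}$, $K^{0}_{\Upsilon}$, $k^{0}_{\Upsilon}$. Your treatment is in fact more careful than the paper's, which simply states the inequalities and immediately asserts the asymptotic conclusions without explaining how the delayed term is handled; your boundedness-then-fluctuation-lemma device makes that passage rigorous.
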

\begin{proof}
As seen in Theorem \ref{thm:t1}, from system \eqref{eq:2} we have 
\begin{center}
$ \chi'(t)< b_M\varphi_M\chi(t-\varsigma)-c_L \chi^2(t)$.
\end{center}
Hence 
\begin{center}
$ \lim\limits_{t\rightarrow+\infty} $ sup $ \chi(t) \leqslant \dfrac{b_M\varphi_M}{c_L}=K^{0}_{\chi}. $
\end{center}
From system \eqref{eq:2} we also have
\begin{center}
$ \chi'(t)>b_L\varphi_L\chi(t-\varsigma) - c_M \chi^2(t)-\dfrac{\nu_M \zeta_M \chi(t)}{\xi_L}\cdot $
\end{center}
Since $b_L\varphi_L\xi_L>\nu_M \zeta_M$, we can say that 
\begin{center}
$ \lim\limits_{t\rightarrow+\infty} $ inf $ \chi(t) \geqslant \dfrac{b_L\varphi_L\xi_L-\nu_M \zeta_M}{c_M \xi_L}=k^{0}_{\chi}. $
\end{center}
From system \eqref{eq:2} we have 
\begin{center}
$ \Upsilon '(t)<\dfrac{\kappa_M \zeta_M K^{0}_{\chi} \Upsilon (t)}{\xi_L \Upsilon (t)+K^{0}_{\chi}\zeta_M}-m_L \Upsilon (t)$.
\end{center}
Hence
\begin{center}
$ \lim\limits_{t\rightarrow+\infty} $ sup $ \Upsilon (t) \leqslant \dfrac{(b_M-m_L)\zeta_M K^{0}_{\chi}}{m_L\xi_L}=K^{0}_{\Upsilon}. $
\end{center}
From system \eqref{eq:2} we have
\begin{center}
$ \Upsilon '(t)> \dfrac{\kappa_L \zeta_L k^{0}_{\chi} \Upsilon (t)}{a_M+\zeta_L k^{0}_{\chi}+(1-\zeta_L)A_M+\xi_M \Upsilon (t)}-m_M \Upsilon (t) $.
\end{center}
Thus we have
\begin{center}
$ \lim\limits_{t\rightarrow+\infty} $ inf $ \Upsilon (t) \geqslant \dfrac{(\kappa_M-m_L) \zeta_M k^{0}_{\chi}-m_M(a_M+A_M(1-\zeta_L)}{m_M\xi_M}=k^{0}_{\Upsilon}.$
\end{center}
Since, all the conditions in the definition of permanence are satisfied, system \eqref{eq:2} is permanent. 
\end{proof}

\begin{remark}
All the solutions of system \eqref{eq:2} are ultimately bounded above under the conditions given in Theorem \ref{thm:t2}. We can also prove that $\Gamma_\varepsilon \ne \phi$. That is, $\exists$ at least one bounded positive solution to system \eqref{eq:2}.
\end{remark} 

\section{Global Attractivity}\label{Sec:5}
\begin{theorem}\label{thm:t3}
Let $ (\chi^\ast(t), \Upsilon ^\ast(t)) $ (with $\chi^\ast>0$, $\Upsilon ^\ast>0$) be a bounded solution to system \eqref{eq:2}. Suppose the conditions $ b_L\varphi_L\xi_L > \nu_M \zeta_M $, $\kappa_M>m_L$, $(\kappa_L-m_M)\zeta_L k^{\varepsilon}_{\chi}> m_M(a_M+(1-\zeta_L)A_M)$, 
\begin{equation}\label{eq:3}
\begin{split}
\dfrac{-b(t)\varphi^2(t)e^{\varphi(t)\varsigma}}{(\varphi(t)e^{\varphi(t)\varsigma}+c(t)(e^{\varphi(t)\varsigma}-1)K^{\varepsilon}_{\chi})^2}+4c(t)k^{\varepsilon}_{\chi}+2\varphi(t) \\ +\dfrac{\nu(t) \zeta ^2(t)(k^{\varepsilon}_{\chi})^2}{((a(t)+\zeta(t)k^{\varepsilon}_{\chi}+(1-\zeta(t))A(t)+\xi(t)K^{\varepsilon}_{\Upsilon })^2} > 0
\end{split}
\end{equation}
and
\begin{equation}\label{eq:4}
\begin{split}
\dfrac{- \kappa(t) [\zeta ^2(t)(K^{\varepsilon}_{\chi})^2+a(t)A(t)(1-\zeta(t))+(1-\zeta(t))^2 A^2(t)]}{(a(t)+\zeta(t)K^{\varepsilon}_{\chi}+(1-\zeta(t))A(t)+\xi(t)k^{\varepsilon}_{\Upsilon })^2} + m(t) > 0
\end{split}
\end{equation}
hold, then $ (\chi^\ast(t), \Upsilon ^\ast(t)) $ is globally attractive.
\end{theorem}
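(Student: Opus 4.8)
The plan is to construct a Lyapunov functional on the region $\Gamma_\varepsilon$ guaranteed invariant and attracting by Theorems~\ref{thm:t1} and~\ref{thm:t2}, and to show its upper-right Dini derivative along the difference of the two solutions is bounded above by a negative-definite quantity, forcing $|\chi(t)-\chi^\ast(t)|+|\Upsilon(t)-\Upsilon^\ast(t)|\to 0$. First I would fix an arbitrary solution $(\chi(t),\Upsilon(t))$ with positive initial data; by permanence there is $T_0\geqslant t_0$ such that for $t\geqslant T_0$ both $(\chi(t),\Upsilon(t))$ and $(\chi^\ast(t),\Upsilon^\ast(t))$ lie in $\Gamma_\varepsilon$, so all the denominators appearing in the functional responses are bounded below by positive constants and the birth/harvest terms are smooth on the relevant range. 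I would then set
\begin{equation*}
V(t)=|\ln\chi(t)-\ln\chi^\ast(t)|+|\ln\Upsilon(t)-\ln\Upsilon^\ast(t)|+V_1(t),
\end{equation*}
where $V_1(t)$ is an integral term of the form $\int_{t-\varsigma}^{t}(\cdots)\,ds$ chosen to absorb the delayed contributions $\chi(s-\varsigma)$ coming from the age-structured birth term (and, for system~\eqref{eq:k}, the harvesting term); the log form is natural here because the predator equation and the $-\varphi\chi-c\chi^2$ part of the prey equation are most cleanly linearised that way. Using $\tfrac{d}{dt}|\ln\chi-\ln\chi^\ast|=\mathrm{sgn}(\chi-\chi^\ast)\,(\chi'/\chi-\chi^{\ast\prime}/\chi^{\ast})$ and substituting~\eqref{eq:2}, I would estimate each resulting term by the mean value theorem applied to the functions $u\mapsto \dfrac{b\varphi u}{\varphi e^{\varphi\varsigma}+c(e^{\varphi\varsigma}-1)u}$, $u\mapsto \varphi u+cu^2$, and the Beddington--DeAngelis response in each of its arguments, evaluating the derivatives at the worst-case endpoints $k^\varepsilon_\chi,K^\varepsilon_\chi,k^\varepsilon_\Upsilon,K^\varepsilon_\Upsilon$; the derivative of the birth term is where the factor $-b(t)\varphi^2(t)e^{\varphi(t)\varsigma}/(\varphi(t)e^{\varphi(t)\varsigma}+c(t)(e^{\varphi(t)\varsigma}-1)K^\varepsilon_\chi)^2$ in~\eqref{eq:3} comes from, and the partial derivatives of the predator response with respect to $\chi$ and $\Upsilon$ produce the bracketed numerator in~\eqref{eq:4}.

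After collecting terms, $D^+V(t)\leqslant -\gamma_1(t)|\chi(t)-\chi^\ast(t)|-\gamma_2(t)|\Upsilon(t)-\Upsilon^\ast(t)|$ for $t\geqslant T_0$, where $\gamma_1(t)$ is exactly the left-hand side of~\eqref{eq:3} and $\gamma_2(t)$ the left-hand side of~\eqref{eq:4} (the delay terms from $V_1$ cancelling the $\chi(t-\varsigma)$ contributions and contributing the $4c(t)k^\varepsilon_\chi$-type pieces after the shift of integration variable). By hypotheses~\eqref{eq:3}--\eqref{eq:4} and continuity plus the boundedness assumptions of Section~\ref{sec:2}, $\inf_{t}\gamma_i(t)=:\gamma>0$. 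Integrating, $V(t)+\gamma\int_{T_0}^{t}(|\chi-\chi^\ast|+|\Upsilon-\Upsilon^\ast|)\,ds\leqslant V(T_0)<\infty$, so the integral $\int_{T_0}^\infty(|\chi-\chi^\ast|+|\Upsilon-\Upsilon^\ast|)\,ds$ converges; since permanence makes $\chi,\Upsilon,\chi^\ast,\Upsilon^\ast$ and hence (via~\eqref{eq:2}) their derivatives bounded, $|\chi-\chi^\ast|+|\Upsilon-\Upsilon^\ast|$ is uniformly continuous, and Barbalat's lemma gives $|\chi(t)-\chi^\ast(t)|+|\Upsilon(t)-\Upsilon^\ast(t)|\to 0$ as $t\to\infty$, which is the definition of global attractivity.

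The main obstacle I anticipate is the bookkeeping in the delayed-birth term: the quantity $\mathrm{sgn}(\chi(t)-\chi^\ast(t))\,\big[g(\chi(t-\varsigma))/\chi(t)-g(\chi^\ast(t-\varsigma))/\chi^\ast(t)\big]$ with $g(u)=b\varphi u/(\varphi e^{\varphi\varsigma}+c(e^{\varphi\varsigma}-1)u)$ must be split into a part comparable to $|\chi(t)-\chi^\ast(t)|$ and a part comparable to $|\chi(t-\varsigma)-\chi^\ast(t-\varsigma)|$, and the latter must be exactly matched by the jump of the compensating functional $V_1$ so that nothing of order $|\chi(t-\varsigma)-\chi^\ast(t-\varsigma)|$ survives in $D^+V$. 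Getting the monotonicity of $g$ right on $[k^\varepsilon_\chi,K^\varepsilon_\chi]$ (it is increasing, which is why its derivative is evaluated at $K^\varepsilon_\chi$ to get the most negative, i.e.\ worst, contribution to $\gamma_1$) and keeping the signs straight through the absolute values is the delicate part; once that is pinned down, the predator terms and the final Barbalat argument are routine. The same scheme, with a second integral term handling the $\varsigma_2$-delayed harvesting term, will later cover system~\eqref{eq:k}.
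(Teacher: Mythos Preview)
Your overall strategy is sound and standard in the non-autonomous predator--prey literature, but it is \emph{not} the route the paper takes, and this causes a concrete mismatch with the hypotheses~\eqref{eq:3}--\eqref{eq:4} that you need to be aware of.

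The paper uses the \emph{quadratic} Lyapunov function
\[
H(t)=\tfrac12(\chi-\chi^\ast)^2+\tfrac12(\Upsilon-\Upsilon^\ast)^2,
\]
differentiates it directly along~\eqref{eq:2}, bounds the cross terms (including the delayed birth term, which produces $[\chi(t)-\chi^\ast(t)][\chi(t-\varsigma)-\chi^\ast(t-\varsigma)]$) via $xy\le\tfrac12(x^2+y^2)$, and arrives at $H'(t)\le -M\,H(t)$, whence $H(t)\le e^{-Mt}\to 0$. There is no integral compensator $V_1$ and no Barbalat argument; the delay is absorbed by the elementary inequality rather than by a shifted integral. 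The specific coefficients in~\eqref{eq:3}--\eqref{eq:4} are exactly the by-products of this quadratic computation: the $4c(t)k^\varepsilon_\chi$ comes from $-c(t)(\chi-\chi^\ast)^2(\chi+\chi^\ast)\le -2c(t)k^\varepsilon_\chi(\chi-\chi^\ast)^2$, the $2\varphi(t)$ from $-\varphi(t)(\chi-\chi^\ast)^2$, and the squared factors $(k^\varepsilon_\chi)^2$, $(K^\varepsilon_\chi)^2$ in the functional-response pieces from multiplying numerator contributions of the form $\chi\chi^\ast$.

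Your logarithmic functional $|\ln\chi-\ln\chi^\ast|+|\ln\Upsilon-\ln\Upsilon^\ast|+V_1$ would give different coefficients: the $-\varphi(t)\chi$ term cancels entirely in $\chi'/\chi-\chi^{\ast\prime}/\chi^\ast$ (so no $2\varphi(t)$), the $-c(t)\chi^2$ term contributes $c(t)$ rather than $4c(t)k^\varepsilon_\chi$, and the Beddington--DeAngelis pieces lose one power of $\chi$ or $\Upsilon$ in each numerator. So your claim that ``$\gamma_1(t)$ is exactly the left-hand side of~\eqref{eq:3}'' cannot hold as written; you would prove global attractivity under a \emph{different} pair of sufficient conditions, not under~\eqref{eq:3}--\eqref{eq:4}. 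If the task is to prove the theorem \emph{as stated}, switch to the quadratic $H(t)$ and drop $V_1$ and Barbalat; if you prefer your route, be explicit that you are replacing~\eqref{eq:3}--\eqref{eq:4} by the analogous inequalities your computation actually produces.
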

\begin{proof}
Let us say that $(\chi(t), \Upsilon (t))$ satisfies system \eqref{eq:2}. Since $ \Gamma_\varepsilon $ is ultimately bounded region of this system, $ \exists $ $ T $ \textgreater $ 0 $ $\ni$ $ (\chi(t), \Upsilon (t))\in \Gamma_\varepsilon $ and $ (\chi^\ast(t), \Upsilon ^\ast(t))\in \Gamma_\varepsilon $ for all $ t\geqslant$ (initial time)$t_0$ $+T . $
Let us construct a function $H(t)$ such that 
\begin{center}
$ H(t)=\frac{1}{2}(\chi(t)-\chi^\ast(t))^2 + \frac{1}{2}(\Upsilon (t)- \Upsilon ^\ast(t))^2$.
\end{center}
For simplicity let 
\\
$ \varphi(t,t-\varsigma,\chi(t-\varsigma),\chi^\ast(t-\varsigma))= $ 
\begin{center}
$ [\varphi(t)e^{\varphi(t)\varsigma}+c(t)(e^{\varphi(t)\varsigma}-1)\chi(t-\varsigma)][\varphi(t)e^{\varphi(t)\varsigma}+c(t)(e^{\varphi(t)\varsigma}-1)\chi^\ast(t-\varsigma)] $ and
\end{center}
$ \varphi(t,\chi(t),\chi^\ast(t), \Upsilon (t), \Upsilon ^\ast(t))= $ 
\begin{center}
$ [a(t)+\zeta(t)\chi(t)+(1-\zeta(t))A(t)+\xi(t) \Upsilon (t)][a(t)+\zeta(t)\chi^\ast(t)+(1-\zeta(t))A(t)+\xi(t) \Upsilon ^\ast(t)] $.
\end{center}
Differentiation of $ H(t) $ w.r.t $t$ along system \eqref{eq:2} is
\begin{equation}\nonumber
\begin{split}
H’(t)=\dfrac{b(t)\varphi^2(t)e^{\varphi(t)\varsigma}}{\varphi(t,t-\varsigma,\chi(t-\varsigma),\chi^\ast(t-\varsigma))}[\chi(t)-\chi^\ast(t)][\chi(t-\varsigma)-\chi^\ast(t-\varsigma)] \vspace{0.1cm}\\
-\varphi(t)[\chi(t)-\chi^\ast(t)]^2 -c(t)[\chi(t)-\chi^\ast(t)]^2 [\chi(t)+\chi^\ast(t)]\\ -\dfrac{\nu(t)\zeta(t)[a(t)+(1-\zeta(t))A(t)]}{\varphi(t,\chi(t),\chi^\ast(t), \Upsilon (t), \Upsilon ^\ast(t))}[\chi(t)-\chi^\ast(t)][\chi(t) \Upsilon (t)-\chi^\ast(t) \Upsilon ^\ast(t)] \\
-\dfrac{\nu(t)p^2(t)\chi(t)\chi^\ast(t)}{\varphi(t,\chi(t),\chi^\ast(t), \Upsilon (t), \Upsilon ^\ast(t))}[\chi(t)-\chi^\ast(t)][ \Upsilon (t)- \Upsilon ^\ast(t)] \\
-\dfrac{\nu(t)\zeta(t)\xi(t) \Upsilon (t) \Upsilon ^\ast(t)}{\varphi(t,\chi(t),\chi^\ast(t), \Upsilon (t), \Upsilon ^\ast(t))}[\chi(t)-\chi^\ast(t)]^2 - m(t)[ \Upsilon (t)- \Upsilon ^\ast(t)]^2\\
+\dfrac{\kappa(t)\zeta(t)[a(t)+(1-\zeta(t))A(t)]}{\varphi(t,\chi(t),\chi^\ast(t), \Upsilon (t), \Upsilon ^\ast(t))}[\chi(t) \Upsilon (t)-\chi^\ast(t) \Upsilon ^\ast(t)][ \Upsilon (t)- \Upsilon ^\ast(t)]\\
+ \dfrac{\kappa(t)[\zeta ^2(t)\chi(t)\chi^\ast(t)+a(t)A(t)(1-\zeta(t))+(1-\zeta(t))^2 A^2(t)]}{\varphi(t,\chi(t),\chi^\ast(t), \Upsilon (t), \Upsilon ^\ast(t))}[ \Upsilon (t)- \Upsilon ^\ast(t)]^2\\
+\dfrac{\kappa(t)\zeta(t)\xi(t) \Upsilon (t) \Upsilon ^\ast(t)}{\varphi(t,\chi(t),\chi^\ast(t), \Upsilon (t), \Upsilon ^\ast(t))}[\chi(t)-\chi^\ast(t)][ \Upsilon (t)- \Upsilon ^\ast(t)]\\
+\dfrac{\kappa(t)\zeta(t)(1-\zeta(t))A(t)}{\varphi(t,\chi(t),\chi^\ast(t), \Upsilon (t), \Upsilon ^\ast(t))}[ \Upsilon (t)\chi^\ast(t)- \Upsilon ^\ast(t)\chi(t)][ \Upsilon (t)- \Upsilon ^\ast(t)].
\end{split}
\end{equation}
Using equations \eqref{eq:3}, \eqref{eq:4} and the inequality $xy\leqslant\frac{1}{2}(x^2+y^2)$ we get,
\begin{equation}\nonumber
\begin{split}
H’(t)\leqslant-\dfrac{1}{2}\Bigg[\frac{-b(t)\varphi^2(t)e^{\varphi(t)\varsigma}}{(\varphi(t)e^{\varphi(t)\varsigma}+c(t)(e^{\varphi(t)\varsigma}-1)K^{\varepsilon}_{\chi})^2}+4c(t)k^{\varepsilon}_{\chi}+2\varphi(t)\\
+\frac{\nu(t)p^2(t)(k^{\varepsilon}_{\chi})^2}{((a(t)+\zeta(t)k^{\varepsilon}_{\chi}+(1-\zeta(t))A(t)+\xi(t)K^{\varepsilon}_{\Upsilon })^2}\Bigg][\chi(t)-\chi^\ast(t)]^2\\
-\dfrac{1}{2}\Bigg[\frac{-2 \kappa(t) [\zeta ^2(t)(K^{\varepsilon}_{\chi})^2+a(t)A(t)(1-\zeta(t))+(1-\zeta(t))^2 A^2(t)]}{(a(t)+\zeta(t)K^{\varepsilon}_{\chi}+(1-\zeta(t))A(t)+\xi(t)k^{\varepsilon}_{\Upsilon })^2} \\
+ 2m(t)\Bigg][ \Upsilon (t)- \Upsilon ^\ast(t)]^2.
\end{split}
\end{equation}
Let 
\begin{center}
$M_1=$min$\Bigg[\dfrac{-b(t)\varphi^2(t)e^{\varphi(t)\varsigma}}{(\varphi(t)e^{\varphi(t)\varsigma}+c(t)(e^{\varphi(t)\varsigma}-1)K^{\varepsilon}_{\chi})^2}+4c(t)k^{\varepsilon}_{\chi}+2\varphi(t)$\\
$+\dfrac{\nu(t) \zeta ^2(t)(k^{\varepsilon}_{\chi})^2}{[a(t)+\zeta(t)k^{\varepsilon}_{\chi}+(1-\zeta(t))A(t)+\xi(t)K^{\varepsilon}_{\Upsilon }]^2}\Bigg]$, \vspace{0.2cm}\\
$M_2=$min$\Bigg[\dfrac{-2 \kappa(t) [\zeta ^2(t)(K^{\varepsilon}_{\chi})^2+a(t)A(t)(1-\zeta(t))+(1-\zeta(t))^2 A^2(t)]}{(a(t)+\zeta(t)K^{\varepsilon}_{\chi}+(1-\zeta(t))A(t)+\xi(t)k^{\varepsilon}_{\Upsilon })^2}+2m(t)\Bigg]$,
\end{center}
such that
\begin{center}
$ H’(t)\leqslant-\dfrac{1}{2}M_1[\chi(t)-\chi^\ast(t)]^2-\dfrac{1}{2}M_2[\Upsilon (t)- \Upsilon ^\ast(t)]^2 $.\\
\end{center}
For $M=$min$\{M_1,M_2\}$, we have
\begin{center}
$ H’(t)\leqslant-\dfrac{M}{2}\Bigg[[\chi(t)-\chi^\ast(t)]^2+[\Upsilon (t)- \Upsilon ^\ast(t)]^2\Bigg]$.\\
\end{center}
By the definition of $ H(t) $ we get,
\begin{center}
$ H’(t)\leqslant-MH(t) $. 
\end{center}
Integrating the above inequality we have
\begin{center}
$ \mbox{log}H(t)\leqslant-Mt $,
\end{center}
which is same as
\begin{center}
$ H(t)\leqslant e^{-Mt} $.
\end{center}
Since $ H(t) $ \textgreater $ 0 $, 
\begin{center}
$ \lim\limits_{t\rightarrow+\infty}H(t)=0 $.
\end{center}
This implies
\begin{center}
$ \lim\limits_{t\rightarrow+\infty}(|\chi(t)-\chi^\ast(t)|+| \Upsilon (t)- \Upsilon ^\ast(t)|)=0 $.
\end{center}
Hence the solution $(\chi^\ast(t), \Upsilon ^\ast(t)) $ is globally attractive. 
\end{proof}
\section{Existence of periodic solution}\label{Sec:6}
Owing to seasonality, factors like weather conditions, mating habits, availability of food, etc have an impact on the nature of every parameter of the system \eqref{eq:2}. These parameters can be periodic of some common period. Thus, incorporating a periodic environment to system \eqref{eq:2} we consider every parameter in system \eqref{eq:2} to be $ \rho $ periodic in $t$, \\
i.e, $ a(t+\rho)=a(t)$, $ b(t+\rho)=b(t)$, $ c(t+\rho)=c(t)$, $ m(t+\rho)=m(t) $, $ \zeta(t+\rho)=\zeta(t) $, $ A(t+\rho)=A(t) $, 
$\nu(t+\rho)=\nu(t) $, $\kappa(t+\rho)=\kappa(t)$, $ \varphi(t+\rho)=\varphi(t) $,
and $\xi(t+\rho)=\xi(t) $.\\

To derive the next theorem, we make use of the following details regarding the degree theory:
For normed vector spaces $X$ and $Z$ let us consider a linear operator $L$ that maps Domain $L$ to the normed vector space $Z$ where $X$ is a superset of Domain $L$. Consider an operator $S$ mapping from $X$ to $Z$ in such a way that $S$ is continuous. If the number of dissimilar members in Kernel $L$, called as dim of Kernel $L$, has the same value as that of co-dim of Image $L$ (a finite number $<$ infinity) and if $Z$ is a superset of the closed set Image $L$, then $L$ is called an index zero Fredholm mapping. If $B$ and $Q$ mapping from $X$ to itself and $Z$ to itself, respectively, are continuous linear transformations such that sets Image $B$ and Kernel $L$ are equal and sets Image $L$, Image $(I-Q)$ and Kernel $Q$ are equal. $L|$Domain $L$ $\cap$ Kernel $B$ mapping set $(I-B)X$ to set Image $L$ is said to be invertible if $L$ is an index zero Fredholm mapping and the inverse mapping of $L|$Domain $L$ $\cap$ Kernel $B$ is indicated by $K_B$. $S$ is called $L$-compact over $\Bar{\Sigma}$ if $K_B(I-Q)S$ mapping from $\Bar{\Sigma}$ to $X$ is compact and $QS(\Bar{\Sigma})$ is bounded, when $X$ is superset to $\Sigma$ and $\Sigma$ is open and bounded in it. There exists $J$ mapping from Image $Q$ to the set Kernel $L$, an isomorphism, if there is an isomorphism from Image $Q$ to Kernel $L$. 

\begin{lemma}\label{lem:1}
(Continuation theorem)\normalfont{\cite{GM77}} \textit{If $L$ is an index zero Fredholm mapping and $S$ is a L-compact set on $ \Bar{\Sigma} $, if constraints (i) and (ii) are satisfied then the equation $ Lx=Sx $ must have a minimum of one solution in the set Domain $L\cap\Bar{\Sigma}. $ \\
(i) Every solution $x$ of the equation $ Lx=\lambda Sx $ does not belong to the set $\partial\Sigma $ where $0 < \lambda < 1 $, \\
(ii) For every $ x$ lying in the set $\partial\Sigma\cap$ Kernel $L $, the values of $ deg\{JQS,\Sigma\cap$ Kernel $L,0\}$ (Brouwer degree) and $ QSx$ are not zeroes.}
\end{lemma}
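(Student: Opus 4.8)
\noindent\emph{Proof strategy.} The plan is to recast the coincidence equation $Lx=Sx$ as a fixed-point problem for a compact operator and then invoke Leray--Schauder degree theory. Since $L$ is a Fredholm mapping of index zero one has $\dim\ker L=\operatorname{codim}\operatorname{Im}L<\infty$, so $X=\ker L\oplus\ker B$ and $Z=\operatorname{Im}L\oplus\operatorname{Im}Q$, and the restriction $L_B:=L|_{\operatorname{Dom}L\cap\ker B}$ is a linear bijection onto $\operatorname{Im}L$ with continuous inverse $K_B$; fix also the isomorphism $J:\operatorname{Im}Q\to\ker L$ provided by the hypothesis. The first step is the elementary equivalence: for $\lambda\in(0,1]$, a point $x\in\operatorname{Dom}L$ solves $Lx=\lambda Sx$ if and only if $QSx=0$ and $x=Bx+\lambda K_B(I-Q)Sx$; combining these, such $x$ is exactly a fixed point of
\begin{center}
$M_\lambda x:=Bx+JQSx+\lambda K_B(I-Q)Sx$,
\end{center}
and conversely every fixed point of $M_\lambda$ automatically satisfies $QSx=0$ (apply $B$ to the fixed-point identity and use $JQSx\in\ker L=\operatorname{Im}B$ and $K_B(I-Q)Sx\in\ker B$). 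Hence solving $Lx=Sx$ on $\operatorname{Dom}L\cap\bar\Sigma$ reduces to finding a fixed point of $M_1$ in $\bar\Sigma$.

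Next I would check that for each $\lambda\in[0,1]$ the operator $M_\lambda$ is a compact perturbation of the identity on $\bar\Sigma$: the term $Bx$ has finite-dimensional range, $JQSx$ takes values in the finite-dimensional space $\ker L$ and is bounded because $QS(\bar\Sigma)$ is bounded, and $K_B(I-Q)S$ is compact on $\bar\Sigma$ since $S$ is $L$-compact there. Therefore $\deg_{LS}(I-M_\lambda,\Sigma,0)$ is defined as soon as $0\notin(I-M_\lambda)(\partial\Sigma)$. This non-vanishing is exactly where hypotheses (i) and (ii) enter: for $\lambda\in(0,1)$ a boundary fixed point of $M_\lambda$ would, by the equivalence above, yield a solution of $Lx=\lambda Sx$ on $\partial\Sigma$, contradicting (i); for $\lambda=0$ a fixed point of $M_0x=Bx+JQSx$ lies in $\ker L$ and forces $JQSx=0$, hence $QSx=0$, which is ruled out on $\partial\Sigma\cap\ker L$ by (ii). Consequently $0\notin(I-M_\lambda)(\partial\Sigma)$ for all $\lambda\in[0,1]$, so by homotopy invariance of the Leray--Schauder degree $\deg_{LS}(I-M_1,\Sigma,0)=\deg_{LS}(I-M_0,\Sigma,0)$.

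Finally, because $M_0$ maps $X$ into the finite-dimensional subspace $\ker L$, the reduction property of the Leray--Schauder degree rewrites the right-hand side as a Brouwer degree on $\Sigma\cap\ker L$: on $\ker L$ one has $Bx=x$, so $(I-M_0)|_{\ker L}=-JQS$, whence $\deg_{LS}(I-M_0,\Sigma,0)=\deg_B(-JQS,\Sigma\cap\ker L,0)=\pm\deg_B(JQS,\Sigma\cap\ker L,0)$, which is nonzero by hypothesis (ii). Thus $\deg_{LS}(I-M_1,\Sigma,0)\neq0$, and the existence property of the degree yields $x\in\Sigma$ with $M_1x=x$; by the equivalence established at the outset this $x$ belongs to $\operatorname{Dom}L\cap\bar\Sigma$ and satisfies $Lx=Sx$.

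The step I expect to demand the most care is the algebraic bookkeeping around $M_\lambda$: proving in both directions the equivalence between fixed points of $M_\lambda$ and solutions of $Lx=\lambda Sx$, verifying that $M_\lambda$ is genuinely $L$-compact on $\bar\Sigma$ (the only place the structural hypothesis on $S$ is used), and justifying the passage from the infinite-dimensional Leray--Schauder degree to the finite-dimensional Brouwer degree through the reduction property. Once the non-vanishing of $I-M_\lambda$ on $\partial\Sigma$ has been extracted from (i) and (ii), the homotopy argument itself is routine.
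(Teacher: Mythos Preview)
The paper does not prove this lemma; it is quoted from Gaines--Mawhin \cite{GM77} and used as a black box in the proof of Theorem~\ref{thm:t4}, so there is no in-paper argument to compare against. Your outline is the standard coincidence-degree proof from that reference: recast $Lx=\lambda Sx$ as the fixed-point problem $x=M_\lambda x$ with $M_\lambda=B+JQS+\lambda K_B(I-Q)S$, use $L$-compactness of $S$ to make $I-M_\lambda$ a compact perturbation of the identity, extract admissibility on $\partial\Sigma$ from (i) and (ii), apply homotopy invariance of the Leray--Schauder degree, and then reduce $\deg_{LS}(I-M_0,\Sigma,0)$ to the Brouwer degree on $\Sigma\cap\ker L$. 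This is correct and is exactly the argument the cited monograph gives.

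One small point worth tightening: your admissibility check handles $\lambda\in[0,1)$ but says nothing about $\lambda=1$. The usual fix is to note that if $Lx=Sx$ already had a solution on $\partial\Sigma$ the conclusion of the lemma would be immediate, so one may assume without loss of generality that $I-M_1$ does not vanish on $\partial\Sigma$; you should state this explicitly.
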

For any continuous and $\rho $-periodic function $j(t)$,
$ \Tilde{j}=\frac{1}{\rho}\int_{0}^{\rho}j(t)dt $.
\begin{theorem}\label{thm:t4}
If $ \widetilde{b\varphi\xi}>\widetilde{\nu \zeta }$, $\widetilde{\kappa}>\widetilde{m}$ and $ \dfrac{(\widetilde{b\varphi\xi}-\widetilde{\nu \zeta })\zeta_L(\widetilde{\kappa}-\widetilde{m})}{\widetilde{\xi c}}exp\{-2\widetilde{\varphi}\rho\}>\widetilde{m}a_M $ hold then \vspace{0.2cm}\\system \eqref{eq:2} has at the minimum one positive solution that is $\rho$-periodic. 
\end{theorem}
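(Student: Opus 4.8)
The plan is to apply Lemma \ref{lem:1} (Mawhin's continuation theorem) after the exponential change of variables $\chi(t)=e^{x(t)}$, $\Upsilon(t)=e^{y(t)}$, which turns system \eqref{eq:2} into an equivalent system for $(x,y)$ whose solutions automatically correspond to strictly positive solutions of \eqref{eq:2}; the delayed age-structured term becomes $e^{x(t-\varsigma)}$, a bounded continuous functional of the $\rho$-periodic state. I would take $X=Z=\{u=(x,y)\in C(\mathbb{R},\mathbb{R}^2):u(t+\rho)=u(t)\}$ with the supremum norm, set $Lu=u'$ on $\mathrm{Dom}\,L=C^1\cap X$, and let $Nu$ be the vector of right-hand sides of the transformed system. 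Then $\mathrm{Ker}\,L=\mathbb{R}^2$ and $\mathrm{Im}\,L=\{z\in Z:\int_0^\rho z(t)\,dt=0\}$ is closed with codimension $2$, so $L$ is a Fredholm mapping of index zero; the averaging projectors $Pu=Qu=\frac{1}{\rho}\int_0^\rho u(t)\,dt$ work, and $L$-compactness of $N$ on any bounded $\bar\Sigma$ follows from the uniform boundedness of the Beddington--DeAngelis and delayed-birth expressions together with the Arzel\`a--Ascoli theorem.

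The crux is to produce an open bounded set $\Sigma\subset X$ on which hypothesis (i) of Lemma \ref{lem:1} holds. For a solution $u=(x,y)$ of $Lu=\lambda Nu$ with $0<\lambda<1$, integrating each component over $[0,\rho]$ and using $\int_0^\rho x'(t)\,dt=\int_0^\rho y'(t)\,dt=0$ yields two balance identities relating the time-averages of the linear, quadratic, delayed-birth and functional-response terms. Bounding the delayed-birth term above (it never exceeds a constant of order $b_M\varphi_M/c_L$) and below in the spirit of Theorems \ref{thm:t1}--\ref{thm:t2}, and invoking $\widetilde{b\varphi\xi}>\widetilde{\nu\zeta}$, I would extract two-sided bounds for $\frac{1}{\rho}\int_0^\rho e^{x(t)}\,dt$; likewise the $y$-balance together with $\widetilde{\kappa}>\widetilde{m}$ and the third hypothesis $\frac{(\widetilde{b\varphi\xi}-\widetilde{\nu\zeta})\zeta_L(\widetilde{\kappa}-\widetilde{m})}{\widetilde{\xi c}}\exp\{-2\widetilde{\varphi}\rho\}>\widetilde{m}a_M$ yields a positive lower bound for $\frac{1}{\rho}\int_0^\rho e^{y(t)}\,dt$. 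Then bounding the oscillations through $\int_0^\rho|x'(t)|\,dt$ and $\int_0^\rho|y'(t)|\,dt$ (each dominated by the corresponding integral of the right-hand side, hence by the quantities already estimated), together with the elementary inequality $\max_t x(t)-\min_t x(t)\le\int_0^\rho|x'(t)|\,dt$ and the same for $y$, converts the average bounds into uniform bounds $|x(t)|\le R_1$, $|y(t)|\le R_2$ independent of $\lambda$; the factor $\exp\{-2\widetilde{\varphi}\rho\}$ is precisely what appears when passing from $\widetilde{e^{x}}$ to a pointwise lower bound via this oscillation estimate.

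Next I would verify hypothesis (ii). On $\mathrm{Ker}\,L=\mathbb{R}^2$ the operator $QN$ collapses to the averaged algebraic system obtained by freezing $e^{x},e^{y}$ and replacing every coefficient by its mean; the stated conditions guarantee that this system has a solution $(x^\ast,y^\ast)$ lying in the interior of the box determined by the a priori bounds, and that the associated Brouwer degree is $\pm 1$ (for instance by a homotopy to a simpler decoupled vector field, or by checking the sign of the Jacobian at $(x^\ast,y^\ast)$), hence nonzero, with $J$ taken to be the identity isomorphism from $\mathrm{Im}\,Q$ onto $\mathrm{Ker}\,L$. Choosing $\Sigma$ to be an open box strictly larger than all the a priori bounds and containing $(x^\ast,y^\ast)$ then keeps $\partial\Sigma$ free of solutions of $Lu=\lambda Nu$ and of zeros of $QN$, so Lemma \ref{lem:1} delivers a solution of $Lu=Nu$ in $\bar\Sigma$, i.e. a positive $\rho$-periodic solution of \eqref{eq:2}. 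The main obstacle is the a priori estimate step of the second paragraph: the delayed, nonmonotone birth term $\frac{b\varphi\,\chi(t-\varsigma)}{\varphi e^{\varphi\varsigma}+c(e^{\varphi\varsigma}-1)\chi(t-\varsigma)}$ couples $x(t)$ with $x(t-\varsigma)$, so obtaining clean two-sided control of $\widetilde{e^{x}}$ — and correctly tracking how the exponential correction $\exp\{-2\widetilde{\varphi}\rho\}$ enters the lower bound for $\widetilde{e^{y}}$ — is where the argument must be carried out with care.
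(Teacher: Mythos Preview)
Your proposal is correct and follows essentially the same route as the paper: the exponential change of variables, the choice of $X=Z$ as the $\rho$-periodic functions with $L=d/dt$ and averaging projectors, the a priori bounds obtained by integrating $Lu=\lambda Nu$ over a period together with the oscillation estimate $\max-\min\le\int_0^\rho|u'|$, and the degree computation via a homotopy on $\mathrm{Ker}\,L$ all match the paper's argument. The only cosmetic difference is that the paper bounds $r(\psi_1)=\min r$ and $r(\psi_2)=\max r$ directly from the balance identities rather than first controlling $\frac{1}{\rho}\int_0^\rho e^{x(t)}\,dt$, and it records the explicit oscillation bounds $\int_0^\rho|r'|\le 2\widetilde{\varphi}\rho$, $\int_0^\rho|s'|\le 2\widetilde{m}\rho$ (which is where the factor $\exp\{-2\widetilde{\varphi}\rho\}$ originates, exactly as you anticipated).
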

\begin{proof}
Denote $ \chi(t)=\exp\{r(t)\} , \chi(t-\varsigma)=\exp\{r(t-\varsigma)\}, \Upsilon (t)=\exp\{s(t)\}$.\\
Substituting the above values in system \eqref{eq:2}, we get
\begin{equation}\label{eq:z}
\begin{split}
r’(t)=\frac{b(t)\varphi(t)e^{u(t-\varsigma)}}{e^{r(t)}[\varphi(t)e^{\varphi(t)\varsigma}+c(t)(e^{\varphi(t)\varsigma}-1)e^{u(t-\varsigma)}]}-\varphi(t)-c(t)e^{r(t)} \\ -\frac{\nu(t)\zeta(t)e^{s(t)}}{a(t)+\zeta(t)e^{r(t)}+(1-\zeta(t))A(t)+\xi(t)e^{s(t)}},\\
s’(t)=\dfrac{\kappa(t)\{\zeta(t)e^{r(t)}+(1-\zeta(t))A(t)\}}{a(t)+\zeta(t)e^{r(t)}+(1-\zeta(t))A(t)+\xi(t)e^{s(t)}}-m(t).
\end{split}
\end{equation}
In order to apply Lemma \ref{lem:2} to system \eqref{eq:2}, we take
\begin{center}
$ X=Z=\{(r,s)^T\in C(\mathbb{R},\mathbb{R}^2):r(t+\rho)=r(t),s(t+\rho)=s(t) $ $\forall$ $ t\in\mathbb{R}\} $.
\end{center}
Let
\begin{center}
$ \| (r,s) \|= $ $\max\limits_{t\in[0, \rho]}|r(t)|+\max\limits_{t\in[0, \rho]}|s(t)| $ for $ (r,s)\in X $(or $Z$).
\end{center}
Equipped with $ \| \, \| $ norm, $Z$ and $X$ are complete normed spaces. \\ Consider
\begin{center}
$ S \begin{pmatrix} r \\ s \end{pmatrix} $ = $\begin{pmatrix} S_1(t) \\ S_2(t) \end{pmatrix}$
\end{center}
\begin{center}
$ = \begin{pmatrix} \frac{b(t)\varphi(t)e^{r(t-\varsigma)}}{e^{r(t)}[\varphi(t)e^{\varphi(t)\varsigma}+c(t)(e^{\varphi(t)\varsigma}-1)e^{r(t-\varsigma)}]}-\varphi(t)-c(t)e^{r(t)} -\frac{\nu(t)\zeta(t)e^{s(t)}}{a(t)+\zeta(t)e^{r(t)}+(1-\zeta(t))A(t)+\xi(t)e^{s(t)}}\\
\frac{\kappa(t)\{\zeta(t)e^{r(t)}+(1-\zeta(t))A(t)\}}{a(t)+\zeta(t)e^{r(t)}+(1-\zeta(t))A(t)+\xi(t)e^{s(t)}}-m(t) \end{pmatrix}$,
\end{center}
\begin{center}
$ L \begin{pmatrix} r \\ s \end{pmatrix} = \begin{pmatrix} r' \\ s' \end{pmatrix} $ and 
$ B \begin{pmatrix} r \\ s \end{pmatrix} = \begin{pmatrix} \frac{1}{\rho}\int_{0}^{\rho}r(t)dt \\ \frac{1}{\rho}\int_{0}^{\rho}s(t)dt \end{pmatrix} 
= Q \begin{pmatrix} r \\ s \end{pmatrix} ; 
\begin{pmatrix} r \\ s \end{pmatrix} \in X $,
\end{center}
Then
\begin{center}
Ker $ L=\{(r,s)\in X/(r(t),s(t))\equiv (h_1,h_2)\in\mathbb{R}^2 $ for $ t\in\mathbb{R}\}, $ \vspace{0.1cm}\\
Im $ L=\{(r,s)\in Z/\int_{0}^{\rho}r(t)dt=0, \int_{0}^{\rho}s(t)dt=0 \}, $
\end{center}
where $ h_1, h_2 \in\mathbb{R}, $ dimension Kernel $ L$ and co-dimension Image $ L $ are equal to 2 and the set Image $L$ $ \in Z$ is closed. \\
Thus we write $L$ is an index zero Fredholm mapping. \\
$B$ and $Q$ are continuous projections $\ni$ Image $ B= $ Kernel $ L, $ Image $ L= $ Kernel $ Q= $ Image $ (I-Q). $ \\
$ K_B: $ Image $ L\rightarrow $ Domain $ L\cap $ Kernel $ B $ exists and is given by
\begin{center}
$ K_B \begin{pmatrix} r \\ s \end{pmatrix} = \begin{pmatrix} \int_{0}^{t}r(n)dn-\frac{1}{\rho }\int_{0}^{\rho}\int_{0}^{t}r(n)dn dt \\ \int_{0}^{t}s(n)dn-\frac{1}{\rho }\int_{0}^{\rho }\int_{0}^{t}s(n)dn dt \end{pmatrix}$.
\end{center}
We know that for any bounded and open set $ \Sigma $ that has the set $X$ as its superset, $S$ happens to be $L$-compact over $ \Bar{\Sigma} $ taking into account the continuity of sets $ QS $ and $ K_B(I-Q)S $. \\
For $0 < \lambda < 1 $, the equation $ Lx =\lambda Sx $ gives
\begin{equation}\nonumber
\begin{split}
r’(t)=\lambda\bigg[\frac{b(t)\varphi(t)e^{r(t-\varsigma)}}{e^{r(t)}[\varphi(t)e^{\varphi(t)\varsigma}+c(t)(e^{\varphi(t)\varsigma}-1)e^{r(t-\varsigma)}]}-\varphi(t)-c(t)e^{r(t)} \\ -\frac{\nu(t)\zeta(t)e^{s(t)}}{a(t)+\zeta(t)e^{r(t)}+(1-\zeta(t))A(t)+\xi(t)e^{s(t)}}\bigg],\vspace{0.2cm}\\
s’(t)=\lambda\bigg[\frac{\kappa(t)\{\zeta(t)e^{r(t)}+(1-\zeta(t))A(t)\}}{a(t)+\zeta(t)e^{r(t)}+(1-\zeta(t))A(t)+\xi(t)e^{s(t)}}-m(t)\bigg].
\end{split}
\end{equation}
For certain $\lambda$ $\ni$ $0 < \lambda < 1 $, if $ (r(t),s(t))\in X $ is any solution of the above equation, \\
we get, by integrating above equation over $ [0, \rho]$,
\begin{equation}\label{eq:a}
\begin{split}
\int_{0}^{\rho}\dfrac{b(t)\varphi(t)e^{t(t-\varsigma)}}{e^{r(t)}[\varphi(t)e^{\varphi(t)\varsigma}+c(t)(e^{\varphi(t)\varsigma}-1)e^{r(t-\varsigma)}]}dt=\int_{0}^{\rho}\varphi(t)dt+\int_{0}^{\rho}c(t)e^{r(t)}dt \\
+\int_{0}^{\rho}\dfrac{\nu(t)\zeta(t)e^{s(t)}}{a(t)+\zeta(t)e^{r(t)}+(1-\zeta(t))A(t)+\xi(t)e^{s(t)}}dt,\\
\int_{0}^{\rho}\dfrac{\kappa(t)\{\zeta(t)e^{r(t)}+(1-\zeta(t))A(t)\}}{a(t)+\zeta(t)e^{r(t)}+(1-\zeta(t))A(t)+\xi(t)e^{s(t)}}dt=\int_{0}^{\rho}m(t)dt.
\end{split}
\end{equation}
Hence we have 
\begin{center}
$ \int_{0}^{\rho}|r’(t)|dt\leqslant\lambda\int_{0}^{\rho}\dfrac{b(t)\varphi(t)e^{r(t-\varsigma)}}{e^{r(t)}[\varphi(t)e^{\varphi(t)\varsigma}+c(t)(e^{\varphi(t)\varsigma}-1)e^{r(t-\varsigma)}]}dt+\lambda\int_{0}^{\rho}\varphi(t)dt-\lambda\int_{0}^{\rho}c(t)e^{r(t)}dt $ \\
$ -\lambda\int_{0}^{\rho}\dfrac{\nu(t)\zeta(t)e^{s(t)}}{a(t)+\zeta(t)e^{r(t)}+(1-\zeta(t))A(t)+\xi(t)e^{s(t)}}dt $,\\
$ \int_{0}^{\rho}|s’(t)|dt\leqslant\lambda\int_{0}^{\rho}\dfrac{\kappa(t)\{\zeta(t)e^{r(t)}+(1-\zeta(t))A(t)\}}{a(t)+\zeta(t)e^{r(t)}+(1-\zeta(t))A(t)+\xi(t)e^{s(t)}}dt+\lambda\int_{0}^{\rho}m(t)dt.$
\end{center}
Using equation \eqref{eq:a}, for $\lambda\in (0,1)$, we get
\begin{center}
$ \int_{0}^{\rho} |r’(t)| dt \leqslant 2 \lambda \int_{0}^{\rho} \varphi(t)dt<2 \int_{0}^{\rho}\varphi(t)dt $, \\
$ \int_{0}^{\rho} |s’(t) |dt \leqslant 2 \lambda \int_{0}^{\rho} m(t)dt<2 \int_{0}^{\rho}m(t)dt $.
\end{center}
Hence, we can say that
\begin{equation}\label{eq:b}
\int_{0}^{\rho}|r’(t)|dt\leqslant2\widetilde{\varphi}\rho, \, \, \int_{0}^{\rho}|s’(t)|dt\leqslant2\widetilde{m}\rho.
\end{equation}
We now denote
\begin{equation}\label{eq:d}
r(\psi_1)=\min\limits_{t\in[0, \rho]}r(t), \, \, r(\psi_2)=\max\limits_{t\in[0, \rho]}r(t), \, \, s(\psi_1)=\min\limits_{t\in[0, \rho]}s(t), \, \, s(\psi_2)=\max\limits_{t\in[0, \rho]}s(t).
\end{equation}
From equations \eqref{eq:a}, \eqref{eq:d} we get
\begin{center}
$ \bigint_{0}^{\rho}\dfrac{b(t)\varphi(t)}{e^{r(\psi_1)}c(t)(e^{\varphi(t)\varsigma}-1)}dt \geqslant \int_{0}^{\rho}c(t)e^{r(\psi_1)}dt= \widetilde{c}\rho e^{r(\psi_1)} $.
\end{center}
That is,
\begin{equation}\label{eq:c}
r(\psi_1) \leqslant \dfrac{1}{2}\mbox{log}\Bigg[{\dfrac{\widetilde{b\varphi}}{(\widetilde{c})^2}}\Bigg]=u_1.
\end{equation} 
Adding equations \eqref{eq:b}, \eqref{eq:c} we get
\begin{center}
$ r(t)\leqslant r(\psi_1)+\int_{0}^{\rho}|r’(t)|dt\leqslant \dfrac{1}{2}\mbox{log}\Bigg[{\dfrac{\widetilde{b\varphi}}{(\widetilde{c})^2}}\Bigg] +2\widetilde{\varphi}\rho=H_1 $.
\end{center}
Hence we obtain
\begin{center}
$ r(t)\leqslant H_1 $.
\end{center}
From equations \eqref{eq:a}, \eqref{eq:d} we get
\begin{center}
$ \bigint_{0}^{\rho}\dfrac{b(t)\varphi(t)}{c(t)e^{r(\psi_2)}}dt\leqslant\int_{0}^{\rho}c(t)e^{r(\psi_2)}dt+\int_{0}^{\rho}\dfrac{\nu(t)\zeta(t)}{\xi(t)}dt = e^{r(\psi_2)} \rho\widetilde{c}+\dfrac{\widetilde{\nu \zeta }}{\widetilde{\xi}}\rho $.
\end{center}
That is,
\begin{equation}\label{eq:e}
r(\psi_2) \geqslant \mbox{log}\Bigg[\dfrac{1}{\widetilde{c}}(\widetilde{b\varphi}-\dfrac{\widetilde{\nu \zeta}}{\widetilde{\xi}})\Bigg]=U_1.
\end{equation}
Subtracting equation \eqref{eq:b} from equation \eqref{eq:e} we get
\begin{center}
$ r(t)\geqslant r(\psi_2)-\int_{0}^{\rho}|r’(t)|dt\geqslant \mbox{log}\Bigg[\dfrac{1}{\widetilde{c}}(\widetilde{b\varphi}-\dfrac{\widetilde{\nu \zeta }}{\widetilde{\xi}})\Bigg]-2\widetilde{\varphi}\rho=H_2 $.
\end{center}
Hence we obtain
\begin{center}
$ r(t)\geqslant H_2 $.
\end{center}
Therefore $\max\limits_{t\in[0,\rho]}|r(t)|\leqslant $ max $ \{|H_1|,|H_2|\}=A_1 $ ($ A_1 $ is independent of $ \lambda $). \\
From equations \eqref{eq:a}, \eqref{eq:d} we get
\begin{center}
$ \widetilde{m}\rho \leqslant \bigint_{0}^{\rho} \, \dfrac{\zeta_L \kappa(t) e^{2\widetilde{\varphi}\rho}\Bigg[\dfrac{\widetilde{b\varphi}}{(\widetilde{c})^2}\Bigg]^{(1/2)}}{e^{2\widetilde{\varphi}\rho}\zeta_L\Bigg[\dfrac{\widetilde{b\varphi}}{(\widetilde{c})^2}\Bigg]^{(1/2)} +e^{s(\psi_1)} \xi_L}dt = \dfrac{e^{2\widetilde{\varphi}\rho}\Bigg[\dfrac{\widetilde{b\varphi}}{(\widetilde{c})^2}\Bigg]^{(1/2)}}{e^{2\widetilde{\varphi}\rho}\zeta_L\Bigg[\dfrac{\widetilde{b\varphi}}{(\widetilde{c})^2}\Bigg]^{(1/2)} +e^{s(\psi_1)} \xi_L}[\rho \zeta_L \widetilde{\kappa}] $.
\end{center}
That is,
\begin{equation}\label{eq:f}
s(\psi_1) \leqslant \mbox{log}\Bigg[\dfrac{[\widetilde{b\varphi}]^{(1/2)} e^{2\widetilde{\varphi}\rho}(\widetilde{\kappa}-\widetilde{m})\zeta_L}{\widetilde{mc}\,\xi_L}\Bigg]=u_2.
\end{equation} 
Adding equations \eqref{eq:b}, \eqref{eq:f} we get
\begin{center}
$ s(t)\leqslant s(\psi_1)+\int_{0}^{\rho}|s’(t)|dt\leqslant \mbox{log}\Bigg[\dfrac{[\widetilde{b\varphi}]^{(1/2)} e^{2\widetilde{\varphi}\rho}(\widetilde{\kappa}-\widetilde{m})\zeta_L}{\widetilde{mc}\,\xi_L}\Bigg]+2\widetilde{m}\rho=H_3 $.
\end{center}
Hence we obtain
\begin{center}
$ s(t)\leqslant H_3 $.
\end{center}
From equations \eqref{eq:a}, \eqref{eq:d} we get
\begin{center}
$ \widetilde{m}\rho \geqslant \bigint_{0}^{\rho}\dfrac{\kappa(t)\zeta(t)\Bigg[\dfrac{1}{\widetilde{c}}(\widetilde{b\varphi}-\dfrac{\widetilde{\nu \zeta}}{\widetilde{\xi}})\Bigg]}{a_M (2\widetilde{\varphi}\rho) +\zeta_L \Bigg[\dfrac{1}{\widetilde{c}}(\widetilde{b\varphi}-\dfrac{\widetilde{\nu \zeta}}{\widetilde{\xi}})\Bigg] +(e^{s(\psi_2)})\xi_L} dt$.
\end{center}
That is,
\begin{equation}\label{eq:g}
s(\psi_2) \geqslant \mbox{log}\Bigg[\dfrac{1}{\widetilde{m}\xi_L}\Bigg[-\widetilde{m}(2\widetilde{\varphi}\rho)a_M+\Bigg[\dfrac{1}{\widetilde{c}}(\widetilde{b\varphi}-\dfrac{\widetilde{\nu \zeta}}{\widetilde{\xi}})\Bigg] [\widetilde{\kappa}-\widetilde{m}]\zeta_L\Bigg] \Bigg]=U_2.
\end{equation}
Subtracting equation \eqref{eq:b} from equation \eqref{eq:g} we get
\begin{center}
$ s(t)\geqslant s(\psi_2)-\int_{0}^{\rho}|s’(t)|dt\geqslant \mbox{log}\Bigg[\dfrac{1}{\widetilde{m}\xi_L}\Bigg[-\widetilde{m}(2\widetilde{\varphi}\rho)a_M+\Bigg[\dfrac{1}{\widetilde{c}}(\widetilde{b\varphi}-\dfrac{\widetilde{\nu \zeta}}{\widetilde{\xi}})\Bigg] [\widetilde{\kappa}-\widetilde{m}]\zeta_L\Bigg] \Bigg] -2\widetilde{m}\rho=H_4 $.
\end{center}
Hence we obtain
\begin{center}
$ s(t)\geqslant H_4 $.
\end{center}
Therefore $\max\limits_{t\in[0,\rho]}|s(t)|\leqslant $ max $ \{|H_3|,|H_4|\}=A_2 $ ($ A_2 $ is independent of $ \lambda $). \\
Take $ A_4=A_1+A_2+A_3 $, where $ A_3>0 $ such that $ A_3>|u_1|+|u_2|+|U_1|+|U_2| $. \\
Now let us see the following equations,
\begin{center}
$ \vspace{0.5cm}\bigint_{0}^{\rho}\dfrac{b(t)\varphi(t)e^{r(t-\varsigma)}}{e^{r}[\varphi(t)e^{\varphi(t)\varsigma}+c(t)(e^{\varphi(t)\varsigma}-1)e^{r(t-\varsigma)}]}dt-\widetilde{\varphi}-e^{r}\widetilde{c}-\dfrac{1}{\rho} \bigint_{0}^{\rho}\dfrac{\mu \nu(t)\zeta(t)e^{s}}{a(t)+\zeta(t)e^{r}+(1-\zeta(t))A(t)+\xi(t)e^{s}}dt=0 $, \\
$-\widetilde{m}+\dfrac{1}{\rho}\bigint_{0}^{\rho}\dfrac{\kappa(t)\{\zeta(t)e^{r}+(1-\zeta(t))A(t)\}}{a(t)+\zeta(t)e^{r}+(1-\zeta(t))A(t)+\xi(t)e^{s}}dt=0$
\end{center}
for $(r,s) \in \mathbb{R}^2 $, wherein $ \mu\in [0,1] $.
We can prove that every solution $ (r^\ast,s^\ast) $ of the above equations satisfies
\begin{equation}\label{eq:h}
u_1\leqslant r^\ast\leqslant U_1, u_2\leqslant s^\ast\leqslant U_2 
\end{equation}
by following similar procedure as above. Let 
\begin{center}
$ \Sigma=\{(r,s)^T\in X/ $ $ \|(r,s)\|< A_4\}. $ 
\end{center}
We can easily show that the set $ \Sigma $ fits into constraint (i) of Lemma \ref{lem:2}. \\
When $ (r,s)\in\partial\Sigma$ $\cap $ Kernal $ L=\partial\Sigma\cap\mathbb{R}^2, $ $(r,s)$ is a stationary vector in $\mathbb{R}^2$ such that $ \|(r,s)\|=|r|+|s|=A_4 $. \\
From the definition of $A_4$ and equation \eqref{eq:h}, we have
\begin{center}
$ QN \begin{pmatrix} r \\ s \end{pmatrix} $ \\
= $ \begin{pmatrix} \bigint_{0}^{\rho}\frac{b(t)\varphi(t)e^{r(t-\varsigma)}}{e^{r}[\varphi(t)e^{\varphi(t)\varsigma}+c(t)(e^{\varphi(t)\varsigma}-1)e^{r(t-\varsigma)}]}dt-\widetilde{\varphi}-e^{r}\widetilde{c} - \frac{1}{\rho} \bigint_{0}^{\rho}\frac{ \nu(t)\zeta(t)e^{s}}{a(t)+\zeta(t)e^{r}+(1-\zeta(t))A(t)+\xi(t)e^{s}}dt \\
-\widetilde{m}+\frac{1}{\rho}\bigint_{0}^{\rho}\frac{\kappa(t)\{\zeta(t)e^{r}+(1-\zeta(t))A(t)\}}{a(t)+\zeta(t)e^{r}+(1-\zeta(t))A(t)+\xi(t)e^{s}}dt \end{pmatrix} $ \vspace{0.1cm}\\
$\ne\begin{pmatrix} 0 \\ 0 \end{pmatrix}$.
\end{center}
That is the first part of condition (ii) of Lemma \ref{lem:2} is valid. \\
For any $\mu\in [0,1] $ we write a homotopy as given below
\begin{center}
$ H_\mu((r,s)^T)=\mu QS((r,s)^T)+(1-\mu)G((r,s)^T), $
\end{center}
where
\begin{center}
$ G((r,s)^T)= \begin{pmatrix} \bigint_{0}^{\rho}\dfrac{b(t)\varphi(t)e^{r(t-\varsigma)}}{e^{r}[\varphi(t)e^{\varphi(t)\varsigma}+c(t)(e^{\varphi(t)\varsigma}-1)e^{r(t-\varsigma)}]}dt-\widetilde{\varphi}-e^{r}\widetilde{c}\\ \widetilde{m}-\dfrac{1}{\rho}\bigint_{0}^{\rho}\dfrac{\kappa(t)\{\zeta(t)e^{r}+(1-\zeta(t))A(t)\}}{a(t)+\zeta(t)e^{r}+(1-\zeta(t))A(t)+\xi(t)e^{s}}dt \end{pmatrix}$. \vspace{0.4cm}
\end{center}
From equation \eqref{eq:h} it follows that $ 0\notin H_\mu(\partial\Sigma $ $\cap $ Ker $ L) $ for $ 0 \leqslant \mu \leqslant 1. $ 
$ G((r,s)^T)=0$ equation posesses a solution in $ \mathbb{R}^2 that is unique. $
Since the sets Image $Q$ and Kernel $L$ are equal, $ J=I $. Making use of homotopy invariance property, 
\begin{center}
$\deg\{JQS,\Sigma $ $ \cap $ Kernel $ L,0\}=\deg\{QS,\Sigma $ $ \cap $ Kernel $ L,0\}=\deg\{G,\Sigma $ $ \cap $ Kernel $ L,0\}\ne0 $.
\end{center}
Therefore from Lemma \ref{lem:2}, we know that $ Lx=Sx $ has atleast one solution lying in Domain $ L\cap\Bar{\Sigma}. $
\, That is, equation \eqref{eq:z} has atleast one $\rho$-periodic solution in Domain $ L\cap\Bar{\Sigma} $, say $ (r^\ast(t),s^\ast(t)). $ \\
Set $ \chi^\ast(t)=\exp\{r^\ast(t)\}, $ $ \Upsilon ^\ast(t)=\exp\{s^\ast(t)\} $ and $ (\chi^\ast(t), \Upsilon ^\ast(t)) $ is an $\rho$-periodic solution of system \eqref{eq:2} with strictly positive components. Hence proved.
\end{proof}
\textbf{Remark:} We just showed that $ (\chi^\ast(t), \Upsilon ^\ast(t))$ is an $ \rho $ periodic solution of system \eqref{eq:2} where $ \chi^\ast(t)=\exp\{r^\ast(t)\} $ and $ \Upsilon ^\ast(t)=\exp\{s^\ast(t)\} $. The values of $r^\ast$ and $ s^\ast$ are such that $ u_1 \leqslant r^\ast \leqslant U_1 $ and $ u_2 \leqslant s^\ast \leqslant U_2$ where \vspace{0.2cm}\\
$ u_1 = \dfrac{1}{2}\mbox{log}\Bigg[{\dfrac{\widetilde{b\varphi}}{(\widetilde{c})^2}}\Bigg]$, $ u_2 = \mbox{log}\Bigg[\dfrac{[\widetilde{b\varphi}]^{(1/2)} e^{2\widetilde{\varphi}\rho}(\widetilde{\kappa}-\widetilde{m})\zeta_L}{\widetilde{mc}\,\xi_L}\Bigg] $, \vspace{0.2cm}\\
$ U_1 = \mbox{log}\Bigg[\dfrac{1}{\widetilde{c}}(\widetilde{b\varphi}-\dfrac{\widetilde{\nu \zeta }}{\widetilde{\xi}})\Bigg] $ and $ U_2 = \mbox{log}\Bigg[\dfrac{1}{\widetilde{m}\xi_L}\Bigg[-\widetilde{m}(2\widetilde{\varphi}\rho)a_M+\Bigg[\dfrac{1}{\widetilde{c}}(\widetilde{b\varphi}-\dfrac{\widetilde{\nu \zeta }}{\widetilde{\xi}})\Bigg] [\widetilde{\kappa}-\widetilde{m}]\zeta_L\Bigg] \Bigg] $. \vspace{0.2cm}\\
\textbf{Remark:} One can show that $b_L \varphi_L \xi_L>\nu_M \zeta_M$ implies $\widetilde{b\varphi\xi}>\widetilde{\nu \zeta }$, $\kappa_M>m_L$ implies $\widetilde{\kappa}>\widetilde{m}$ and $(\kappa_L-m_M)\zeta_L k^{\varepsilon}_{\chi}> m_M(a_M+(1-\zeta_L)A_M)$ implies $\dfrac{(\widetilde{b\varphi\xi}-\widetilde{\nu \zeta })\zeta_L(\widetilde{\kappa}-\widetilde{m})}{\widetilde{\xi c}}exp\{-2\widetilde{\varphi}\rho\}>\widetilde{m}a_M $ if $\varepsilon$ is chosen appropriately, ie,\\
\begin{center}
$\dfrac{b_L \varphi_L \xi_L-\nu_M \zeta_M}{c_M \xi_L}(1-exp\{-2\widetilde{\varphi}\rho\})<\varepsilon<\dfrac{b_L \varphi_L \xi_L-\nu_M \zeta_M}{c_M \xi_L}$.
\end{center}
\section{Existence of almost periodic solution}\label{sec:7}
The perturbations in the prey-predator interactions are not always periodic. For rationally independent periods, the perturbations caused are not periodic, they are said to be quasi periodic or almost periodic \cite{FK04}. Let the functions $\varphi(t)$, $\nu(t)$, $c(t)$, $m(t)$, $b(t)$, $\zeta(t)$, $A(t)$, $\kappa(t)$, $a(t)$ and $\xi(t)$ be almost periodic in $t$.
\begin{lemma}\label{lem:2}
(Arzela-Ascoli theorem)\normalfont{\cite{BB64}} \textit{ Suppose that $f$, $g$ are any two positive integers, $K$ $\subset$ $\mathbb{R}^f$ and $K$ is compact in it and $\sigma=\{n/ n \in C(K,\mathbb{R}^g)\}$, then below mentioned properties are identical: \\
(i) $\sigma$ is a bounded set and is equi-continuous on $K$. \\
(ii) Each sequence in $\sigma$ contains a subsequence that becomes convergent in $K$, uniformly.}
\end{lemma}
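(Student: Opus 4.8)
The plan is to prove the equivalence by establishing the two implications separately. The substantive direction is (i)$\Rightarrow$(ii), which is the classical Arzela--Ascoli extraction; the converse (ii)$\Rightarrow$(i) will follow from two short arguments by contradiction. Throughout, $C(K,\mathbb{R}^g)$ is equipped with the supremum norm $\|\cdot\|_\infty$, under which it is complete since $\mathbb{R}^g$ is complete and $K$ is compact.

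For (i)$\Rightarrow$(ii): since $K\subset\mathbb{R}^f$ is compact it is totally bounded, hence separable, and I would fix a countable dense subset $\{x_k\}_{k\geq1}\subset K$. Given a sequence $(f_j)$ in $\sigma$, boundedness of $\sigma$ places $(f_j(x_1))_j$ in a bounded subset of $\mathbb{R}^g$, so Bolzano--Weierstrass yields a convergent subsequence; repeating this successively at $x_2,x_3,\dots$ and passing to the diagonal subsequence (still denoted $(f_j)$) produces a sequence that converges at every $x_k$. The key step is to upgrade this pointwise convergence on the dense set to uniform convergence on all of $K$: given $\varepsilon>0$, equi-continuity of $\sigma$ supplies a single $\eta>0$ with $|f(x)-f(y)|<\varepsilon/3$ for all $f\in\sigma$ whenever $d(x,y)<\eta$; total boundedness of $K$ lets me cover $K$ by finitely many $\eta$-balls centred at $x_{k_1},\dots,x_{k_m}$; since the diagonal sequence converges at each of these finitely many points, there is an index beyond which $|f_j(x_{k_i})-f_l(x_{k_i})|<\varepsilon/3$ for all $i$; the triangle inequality through the nearest centre then gives $\sup_{x\in K}|f_j(x)-f_l(x)|<\varepsilon$ for $j,l$ large. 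Thus $(f_j)$ is uniformly Cauchy, and completeness of $C(K,\mathbb{R}^g)$ forces it to converge uniformly on $K$ to a function in $C(K,\mathbb{R}^g)$.

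For (ii)$\Rightarrow$(i): boundedness of $\sigma$ is immediate, because an unbounded $\sigma$ would contain a sequence $(f_j)$ with $\|f_j\|_\infty\to\infty$, and no subsequence of such a sequence can converge uniformly (a uniformly convergent sequence is bounded), contradicting (ii). For equi-continuity I would argue by contradiction: if it fails, there exist $\varepsilon_0>0$, points $x_j,y_j\in K$ with $d(x_j,y_j)\to0$, and functions $f_j\in\sigma$ with $|f_j(x_j)-f_j(y_j)|\geq\varepsilon_0$; by (ii) pass to a subsequence along which $f_j\to f$ uniformly, so $f$ is continuous and hence uniformly continuous on the compact set $K$, and by compactness of $K$ pass to a further subsequence with $x_j\to x^\ast$, whence $y_j\to x^\ast$ too; then
\[
|f_j(x_j)-f_j(y_j)|\leq\|f_j-f\|_\infty+|f(x_j)-f(x^\ast)|+|f(x^\ast)-f(y_j)|+\|f-f_j\|_\infty\longrightarrow 0,
\]
contradicting $|f_j(x_j)-f_j(y_j)|\geq\varepsilon_0$. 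Hence $\sigma$ is equi-continuous on $K$, which closes the equivalence.

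I expect the main obstacle to be the diagonalisation-and-patching step in (i)$\Rightarrow$(ii): turning pointwise control on a countable dense set into a genuinely uniform Cauchy estimate requires simultaneously exploiting total boundedness of $K$ (to reduce to finitely many reference points), equi-continuity (to obtain one modulus $\eta$ valid for the entire family), and pointwise Cauchyness of the diagonal sequence, after which completeness of $C(K,\mathbb{R}^g)$ supplies the limit. The remaining parts are routine.
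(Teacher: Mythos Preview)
Your proof plan is correct and follows the standard route to the Arzela--Ascoli theorem: diagonal extraction over a countable dense subset followed by an $\varepsilon/3$ upgrade via equi-continuity and total boundedness for (i)$\Rightarrow$(ii), and short contradiction arguments for (ii)$\Rightarrow$(i).

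However, there is nothing to compare against: the paper does not prove this lemma. It is quoted verbatim as a classical result with a citation to Bartle's textbook, and is then invoked as a black box inside the proof of Theorem~5 (existence of an almost periodic solution). So your write-up supplies strictly more than the paper does here; if anything, it would serve as a self-contained replacement for the bare citation.
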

\begin{theorem}\label{thm:t5}
If all the conditions in Theorem \ref{thm:t3} hold, there is a unique solution to system \eqref{eq:2} that is almost periodic in nature.
\end{theorem}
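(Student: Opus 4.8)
The plan is to combine the permanence result of Theorem~\ref{thm:t2}, which traps a positive solution in the compact invariant region $\Gamma_\varepsilon$, with the global attractivity estimate of Theorem~\ref{thm:t3}, and then to run a translation--compactness argument along the lines of Bochner's characterisation of almost periodic functions. First I would record what Theorems~\ref{thm:t1}--\ref{thm:t2} give: there is a solution $(\chi(t),\Upsilon(t))$ of \eqref{eq:2} which, after a transient, lies in $\Gamma_\varepsilon$ throughout its domain; substituting the bounds $k^\varepsilon_\chi\le\chi\le K^\varepsilon_\chi$ and $k^\varepsilon_\Upsilon\le\Upsilon\le K^\varepsilon_\Upsilon$ back into \eqref{eq:2} shows that $\chi'$ and $\Upsilon'$ are uniformly bounded, so $(\chi,\Upsilon)$ is Lipschitz and the family of translates $\{(\chi(\cdot+h),\Upsilon(\cdot+h))\}_{h\ge 0}$ is uniformly bounded and equicontinuous on every compact interval.

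Next, given any sequence $h_n\to\infty$, I would use the almost periodicity of $a,b,c,m,\zeta,A,\nu,\kappa,\varphi,\xi$ to pass to a subsequence along which all ten coefficient functions converge uniformly on $\mathbb{R}$ (Bochner), and then apply the Arzela--Ascoli theorem (Lemma~\ref{lem:2}) with a diagonal argument over an exhausting family of compact intervals to extract a further subsequence with $(\chi(t+h_n),\Upsilon(t+h_n))\to(\phi(t),\psi(t))$ uniformly on compacta. Passing to the limit in the integrated form of \eqref{eq:2} --- the delayed term $\chi(t+h_n-\varsigma)$ converging by the same uniform-on-compacta convergence --- shows $(\phi,\psi)$ solves the corresponding limiting system and remains in $\Gamma_\varepsilon$, hence is bounded and strictly positive.

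The crucial step is to upgrade this to almost periodicity of $(\chi,\Upsilon)$ itself by invoking Theorem~\ref{thm:t3}. For $\epsilon>0$ the almost periodicity of the coefficients yields a relatively dense set of numbers $\tau$ for which all ten coefficients are within $\epsilon$ of their $\tau$-translates uniformly in $t$; for such $\tau$ both $(\chi(t),\Upsilon(t))$ and $(\chi(t+\tau),\Upsilon(t+\tau))$ nearly satisfy \eqref{eq:2}. Feeding them into the Lyapunov functional $H(t)=\tfrac12(\chi(t)-\chi(t+\tau))^2+\tfrac12(\Upsilon(t)-\Upsilon(t+\tau))^2$ from the proof of Theorem~\ref{thm:t3}, and using that conditions \eqref{eq:3}--\eqref{eq:4} make the leading terms strictly dissipative on $\Gamma_\varepsilon$, produces a differential inequality of the form $H'(t)\le -MH(t)+C\epsilon$; since permanence allows this estimate to be started from arbitrarily far in the past, Gronwall's inequality gives $\sup_{t\in\mathbb{R}}\big(|\chi(t+\tau)-\chi(t)|+|\Upsilon(t+\tau)-\Upsilon(t)|\big)\le C'\epsilon$, which is exactly the statement that $(\chi,\Upsilon)$ is almost periodic. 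Uniqueness then follows from Theorem~\ref{thm:t3} alone: two almost periodic positive solutions have difference tending to $0$ as $t\to\infty$, but an almost periodic function that vanishes at infinity is identically zero.

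I expect the main obstacle to be precisely this last upgrade: obtaining the bound on the $\tau$-translate differences \emph{uniformly in $t\in\mathbb{R}$} rather than merely for large $t$ (which would only yield asymptotic almost periodicity). This is where permanence and global attractivity must be used in tandem --- to justify running the Lyapunov/Gronwall estimate from $t\to-\infty$ --- and some care is needed to bound the extra $O(\epsilon)$ forcing coming from the coefficient mismatch uniformly on the compact set $\Gamma_\varepsilon$, as well as to check that the limiting-system solution produced by the compactness argument is genuinely a solution (i.e.\ that the derivatives, not just the solutions, pass to the limit).
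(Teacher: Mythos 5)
Your proposal is essentially correct, but it takes a genuinely different route from the paper. The paper's proof is a pure compactness argument: it forms the translates $y_i(t+t_j)$, applies Arzela--Ascoli (Lemma~\ref{lem:2}) to extract a uniformly convergent subsequence, \emph{asserts} from this that the solution is asymptotically almost periodic, decomposes it as $y_i=y_{i1}+y_{i2}$ with $y_{i1}$ almost periodic and $y_{i2}\to 0$, and then passes to the limit in the equation to show $(y_{11},y_{21})$ solves \eqref{eq:2}; the Lyapunov functional of Theorem~\ref{thm:t3} is never invoked, so the hypotheses of Theorem~\ref{thm:t3} sit in the statement without being used in the argument. You instead make global attractivity do the real work: the translation--Lyapunov--Gronwall estimate $H'\le -MH+C\epsilon$ applied to $(\chi(t),\Upsilon(t))$ versus $(\chi(t+\tau),\Upsilon(t+\tau))$ yields the uniform-in-$t$ bound on $\tau$-translate differences that \emph{is} Bochner/Bohr almost periodicity, and uniqueness drops out of attractivity plus the fact that an almost periodic function vanishing at infinity vanishes identically. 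Your version is the standard Fan--Kuang-style argument, explains why Theorem~\ref{thm:t3}'s conditions are the right hypotheses, and closes the gap in the paper's ``hence asymptotically almost periodic'' step, at the cost of being longer. Two caveats you have already partly anticipated: (i) the uniform estimate must be run on an entire bounded solution defined on all of $\mathbb{R}$, which you must first manufacture via the pullback limit of translates (permanence alone does not extend a solution backwards); and (ii) the delayed difference $\chi(t)-\chi(t+\tau)$ evaluated at $t-\varsigma$ prevents the clean inequality $H'\le -MH+C\epsilon$ for the pointwise quadratic $H$ --- you need to augment $H$ with an integral term over $[t-\varsigma,t]$ (a defect the paper's own proof of Theorem~\ref{thm:t3} also glosses over).
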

\begin{proof}
Every solution of system \eqref{eq:2} will be ultimately bounded above under the constraints given in Theorem \ref{thm:t2}. Therefore we talk about a bounded (by positive constants) positive($>0$) solution $y(t)=(y_1(t),y_2(t))$ to system \eqref{eq:2}.\vspace{0.3cm}\\ 
Thus $\exists$ a sequence ${t_j}$, $t_j \rightarrow$ infinity as $j \rightarrow$ infinity $\ni$ $(y_1(t+t_j),y_2(t+t_j))^T$ satisfies \vspace{0.2cm}
\begin{center}
$\vspace{0.3cm}\chi’(t)=\dfrac{b(t+t_j)\varphi(t+t_j)\chi(t-\varsigma)}{\varphi(t+t_j)e^{\varphi(t+t_j)\varsigma}+c(t+t_j)(e^{\varphi(t+t_j)\varsigma}-1)\chi(t-\varsigma)}-\varphi(t+t_j)\chi(t)-c(t+t_j)\chi^2(t) $ \vspace{0.1cm}\\
$-\dfrac{\nu(t+t_j)\zeta(t+t_j)\chi(t) \Upsilon (t)}{a(t+t_j)+\zeta(t+t_j)\chi(t)+(1-\zeta(t+t_j))A(t+t_j)+\xi(t+t_j) \Upsilon (t)}$ \vspace{0.3cm}\\
$\Upsilon '(t)=\dfrac{\kappa(t+t_j) \{\zeta(t+t_j)\chi(t)+(1-\zeta(t+t_j))A(t+t_j)\} \Upsilon (t)}{a(t+t_j)+\zeta(t+t_j)\chi(t)+(1-\zeta(t+t_j))A(t+t_j)+\xi(t+t_j) \Upsilon (t)}-m(t+t_j) \Upsilon (t)\cdot$ \vspace{0.2cm}
\end{center}
Therefore ${y_i (t+t_j)}$ for $i=1,2$ and ${\Dot{y_i}(t+t_j)}$ for $i=1,2$ are bounded uniformly and equi-continuous. Using Lemma \ref{lem:2} we can say that $\exists$ a sub-sequence ${y_i (t+t_m)}\subset{y_i (t+t_j)}$ which is uniformly convergent and for any $\varepsilon>0$ $\exists$ $\eta>0$ such that $|y_i (t+t_k)-y_i (t+t_m)|<\eta$, $i=1,2$ if $k,m>\eta>0$. \\
Hence $y_i (t)$ for $i=1,2$ is almost periodic as $t \rightarrow \infty$ (asymptotically). \\ Therefore ${y_i (t+t_m)}$ can be expressed as
\begin{center}
$y_i (t+t_m)= y_{i1}(t+t_m)+y_{i2}(t+t_m)$ 
\end{center}
where $y_{i1}(t+t_m)$ and $y_{i2}(t+t_m)$ are almost periodic and continuous functions respectively, for all $t\in\mathbb{R}$.\\ Further
\begin{center}
$\lim\limits_{m\rightarrow\infty}y_{i2}(t+t_m)=0$, \vspace{0.2cm} \\ $\lim\limits_{m\rightarrow\infty}y_{i1}(t+t_m)=y_{i1}(t)$ 
\end{center}
where function $y_{i1}$ is almost periodic. \vspace{0.1cm}\\ Hence
\begin{center}
$\lim\limits_{m\rightarrow\infty}
y_i (t+t_m)=y_{i1}(t)$ ($i=1,2$).
\end{center}
Also, 
\begin{align*}
\vspace{0.2cm}\lim\limits_{m\rightarrow\infty}\Dot{y_i}(t+t_m)=\lim\limits_{m\rightarrow\infty}\lim\limits_{h\rightarrow 0}\dfrac{y_i(t+t_m+h)-y_i(t+t_m)}{h}\vspace{0.4cm}\\
=\lim\limits_{h\rightarrow0}\lim\limits_{m\rightarrow\infty}\dfrac{y_i(t+t_m+h)-y_i(t+t_m)}{h}
\end{align*}
\begin{align*}
=\lim\limits_{h\rightarrow0}\dfrac{y_{i1}(t+h)-y_{i1}(t)}{h}\cdot
\end{align*}
Thus $\Dot{y_{i1}}$ exists for $i=1,2$. Thus, a sequence ${t_j}$ exists $\ni$ the value $t_j\rightarrow\infty$ as $j\rightarrow\infty$. Due to this, $a(t+t_j)\rightarrow a(t)$, $b(t+t_j)\rightarrow b(t)$, $c(t+t_j)\rightarrow c(t)$, $m(t+t_j)\rightarrow m(t)$, \vspace{0.1cm}\\ $\zeta(t+t_j)\rightarrow \zeta(t)$, $A(t+t_j)\rightarrow A(t)$, 
$\nu(t+t_j)\rightarrow \nu(t)$,
$\kappa(t+t_j)\rightarrow \kappa(t)$, $\varphi(t+t_j)\rightarrow\varphi(t)$ and \vspace{0.1cm}\\
$\xi(t+t_j)\rightarrow \xi(t)$. \vspace{0.5cm} \\
$\Dot{y_{11}}=\lim\limits_{j \rightarrow\infty}\dfrac{d}{dt}y_1 (t+t_j)$\\
\begin{align*}
\vspace{0.5cm}= \lim\limits_{j \rightarrow\infty}\bigg[\dfrac{b(t+t_j)\varphi(t+t_j)y_1 (t+t_j-\varsigma)} {\varphi(t+t_j)e^{\varphi(t+t_j)\varsigma}+c(t+t_j)(e^{\varphi(t+t_j)\varsigma}-1)y_1 (t+t_j-\varsigma_1)}
\vspace{1cm}\\ -\varphi(t+t_j)y_1 (t+t_j) -c(t+t_j){y_1}^2(t+t_j)\\ \vspace{1cm} -\dfrac{\nu(t+t_j)\zeta(t+t_j)y_1 (t+t_j)y_2 (t+t_j)}{a(t+t_j)+\zeta(t+t_j)y_1 (t+t_j)+(1-\zeta(t+t_j))A(t+t_j)+\xi(t+t_j)y_2 (t+t_j)}\bigg]\cdot \vspace{1cm}
\end{align*}\\
We get
\begin{flalign*}
\Dot{y_{11}}=\dfrac{b(t)\varphi(t)y_1 (t-\varsigma)}{\varphi(t)e^{\varphi(t)\varsigma}+c(t)(e^{\varphi(t)\varsigma}-1)y_1 (t-\varsigma)}-\varphi(t)y_1 (t)-c(t){y_1}^2(t)\\
-\dfrac{\nu(t)\zeta(t)y_1 (t)y_2 (t)}{a(t)+ \zeta(t)y_1 (t)+(1-\zeta(t))A(t)+\xi(t)y_2 (t)}\cdot
\end{flalign*}
$\vspace{1cm}\Dot{y_{21}}=\lim\limits_{j\rightarrow\infty}\dfrac{d}{dt}y_2 (t+t_j)$ \vspace{-1cm}
\begin{align*}
=\lim\limits_{j\rightarrow\infty}\bigg[\dfrac{\kappa(t+t_j)\{\zeta(t+t_j)y_1 (t+t_j)+(1-\zeta(t+t_j))A(t+t_j)\}y_2 (t+t_j)}{a(t+t_j)+\zeta(t+t_j)y_1 (t+t_j)+(1-\zeta(t+t_j))A(t+t_j)+\xi(t+t_j)y_2 (t+t_j)}\\ 
-m(t+t_j)y_2 (t+t_j)\bigg]\cdot \vspace{0.1cm}
\end{align*}
We get 
\begin{flalign*}
\Dot{y_{21}}=\dfrac{\kappa(t)\{ \zeta(t)y_1 (t) +(1-\zeta(t))A(t)\}y_2 (t)}{a(t)+\zeta(t)y_1 (t) +(1-\zeta(t))A(t)+\xi(t)y_2 (t)} -m(t)y_2(t)\cdot
\end{flalign*}
Hence $(y_{11},y_{21})$ satisfies system \eqref{eq:2} and it is almost periodic. Thus, system \eqref{eq:2} possesses a unique positive solution that is almost periodic.
\end{proof}

\begin{remark}
Results proved in Theorems \ref{thm:t1}, \ref{thm:t2} and \ref{thm:t3} for positive invariance, permanence and global attractivity remain valid for system \eqref{eq:2} with almost periodic coefficients.
\end{remark}

\section{Delayed harvesting in prey}\label{sec:8}
As mentioned in Section \ref{Sec:1}, for biological and economic benefits, we must harvest hilsa when it crosses its maturity age and harvestable yield. This way we can assure that the population will be healthy and sustainable. Such a scenario has been modeled into system \eqref{eq:k} where $\varsigma_1$ is the delay in maturity and $\varsigma_2$ is the delay in harvest. Dynamical aspects of system \eqref{eq:k} are studied in Theorems \ref{thm:t6}-\ref{thm:t10} (proofs are as that of Theorems \ref{thm:t1}-\ref{thm:t5}). 

\begin{theorem}\label{thm:t6}
If $ b_L\varphi_L\xi_L > \nu_M \zeta_M $, $\kappa_M>m_L$ and $(\kappa_L-m_M)\zeta_L k^{\varepsilon}_{\chi}> m_M(a_M+(1-\zeta_L)A_M)$ then the set defined by 
\begin{center}
$ \Gamma_\varepsilon=\{(\chi(t), \Upsilon (t))\in \mathbb{R}^2 / k^{\varepsilon}_{\chi} \leqslant \chi(t) \leqslant K^{\varepsilon}_{\chi}, k^{\varepsilon}_{\Upsilon } \leqslant \Upsilon (t) \leqslant K^{\varepsilon}_{\Upsilon } \} $
\end{center}
is positively invariant 
w.r.t system \eqref{eq:k} where
$ K^{\varepsilon}_{\chi}$ ,
$ k^{\varepsilon}_{\chi}$,
$K^{\varepsilon}_{\Upsilon }$ and $k^{\varepsilon}_{\Upsilon }$ are as defined in Theorem \ref{thm:t1} and $ \varepsilon \geqslant 0 $ is sufficiently small so that $ k^{\varepsilon}_{\chi} > 0.$
\end{theorem}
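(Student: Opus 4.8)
The plan is to reproduce, almost verbatim, the fourfold sandwiching argument of Theorem~\ref{thm:t1}, the only structural novelty being the age-discriminatory harvesting term
\[
\mathcal H(t)=\frac{q(t)E(t)\varphi(t)\chi(t-\varsigma_2)}{\varphi(t)e^{\varphi(t)\varsigma_2}+c(t)(e^{\varphi(t)\varsigma_2}-1)\chi(t-\varsigma_2)},
\]
which is nonnegative and, being of the form $\alpha x/(\beta+\gamma x)$ in $x=\chi(t-\varsigma_2)$, increasing and bounded above by the constant $q_ME_M\varphi_M/\big(c_L(e^{\varphi_L\varsigma_2}-1)\big)$. Fix a solution $(\chi(t),\Upsilon(t))$ of \eqref{eq:k} with positive initial data and $(\chi(t_0),\Upsilon(t_0))\in\Gamma_\varepsilon$; the task is to show $(\chi(t),\Upsilon(t))\in\Gamma_\varepsilon$ for all $t\geqslant t_0$.

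First I would obtain $\chi(t)\leqslant K^\varepsilon_\chi$ by discarding the predation and harvesting terms (both nonnegative), which returns the inequality $\chi'(t)\leqslant b_M\varphi_M\,\chi(t-\varsigma_1)-c_L\chi^2(t)$ from Theorem~\ref{thm:t1} and hence the same conclusion. Then $\chi(t)\geqslant k^\varepsilon_\chi$: bound the predation term above by $\nu_M\zeta_M\chi(t)/\xi_L$ and $\mathcal H(t)$ above as noted, obtaining the delayed-logistic differential inequality $\chi'(t)\geqslant b_L\varphi_L\,\chi(t-\varsigma_1)-c_M\chi^2(t)-\nu_M\zeta_M\chi(t)/\xi_L-\mathcal H(t)$, from which the claimed lower bound follows under the hypothesis $b_L\varphi_L\xi_L>\nu_M\zeta_M$ (with $\varepsilon$ small as required). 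Since the $\Upsilon$-equation of \eqref{eq:k} is identical to that of \eqref{eq:2}, the remaining two steps are literal copies of those in Theorem~\ref{thm:t1}: $\Upsilon(t)\leqslant K^\varepsilon_\Upsilon$ from $\chi\leqslant K^\varepsilon_\chi$ and $\kappa_M>m_L$, and $\Upsilon(t)\geqslant k^\varepsilon_\Upsilon$ from $\chi\geqslant k^\varepsilon_\chi$ and $(\kappa_L-m_M)\zeta_L k^\varepsilon_\chi>m_M(a_M+(1-\zeta_L)A_M)$. Combining the four inequalities gives $(\chi(t),\Upsilon(t))\in\Gamma_\varepsilon$ for every $t\geqslant t_0$, i.e.\ positive invariance.

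The only place the argument genuinely departs from Theorem~\ref{thm:t1}, and hence the main (though mild) obstacle, is the lower estimate for $\chi$, where $\mathcal H(t)$ must be absorbed into the delayed-logistic inequality without destroying the positive lower bound; the point is that $\mathcal H$ is uniformly bounded, so it contributes at most a bounded downward shift that the quadratic damping $-c_M\chi^2$ together with $b_L\varphi_L\xi_L>\nu_M\zeta_M$ can accommodate (if one insists on literally the same constant $k^\varepsilon_\chi$ as in Theorem~\ref{thm:t1} it suffices to keep $q(t)E(t)$ correspondingly small, otherwise one simply records the slightly smaller lower bound produced by $\mathcal H$). A secondary, purely technical issue already implicit in Theorem~\ref{thm:t1} and inherited here is that these are \emph{delay} differential inequalities, so the comparison is not an ODE comparison: one argues at a first exit time $t_1>t_0$ from $\Gamma_\varepsilon$, using the strict slack $\varepsilon>0$ and the control on $\chi$ over the history window $[\,t_1-\max\{\varsigma_1,\varsigma_2\},\,t_1\,]$ to contradict the sign of the relevant one-sided derivative at $t_1$.
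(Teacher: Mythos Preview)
Your proposal is correct and follows precisely the approach the paper intends: the paper gives no separate proof of Theorem~\ref{thm:t6} but simply states that the proofs of Theorems~\ref{thm:t6}--\ref{thm:t10} ``are as that of Theorems~\ref{thm:t1}--\ref{thm:t5}'', so your explicit recapitulation of the four-step sandwiching argument from Theorem~\ref{thm:t1}, with the harvesting term $\mathcal H$ isolated as the sole novelty, is exactly what is required. Your caveat about the lower bound for $\chi$---that retaining literally the same constant $k^\varepsilon_\chi$ tacitly requires $q(t)E(t)$ small enough for $\mathcal H$ to be absorbed---is a point the paper passes over in silence, and your first-exit-time remark on the delay comparison is likewise more careful than anything in the original proof of Theorem~\ref{thm:t1}.
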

\begin{theorem}\label{thm:t7}
If $ b_L\varphi_L\xi_L > \nu_M \zeta_M $, $\kappa_M>m_L$ and $(\kappa_L-m_M)\zeta_L k^{0}_{\chi}> m_M(a_M+(1-\zeta_L)A_M)$\vspace{0.2cm}\\ hold then system \eqref{eq:k} is permanent, where $k^{0}_{\chi}=\dfrac{b_L\varphi_L\xi_L-\nu_M \zeta_M }{c_M \xi_L}\cdot$
\end{theorem}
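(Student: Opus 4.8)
The plan is to follow the same four-step comparison scheme used for Theorem~\ref{thm:t2}, the only structural difference between \eqref{eq:k} and \eqref{eq:2} being the extra delayed harvesting term
\[
\mathcal{H}(t)=\frac{q(t)E(t)\varphi(t)\chi(t-\varsigma_2)}{\varphi(t)e^{\varphi(t)\varsigma_2}+c(t)(e^{\varphi(t)\varsigma_2}-1)\chi(t-\varsigma_2)}
\]
in the prey equation (here $q,E$ are taken, as usual, to be continuous with positive upper and lower bounds). Since every factor in $\mathcal{H}(t)$ is nonnegative, $\mathcal{H}(t)\geq 0$, so for the two \emph{upper} estimates the term is simply discarded: from \eqref{eq:k} one still gets $\chi'(t)<b_M\varphi_M\chi(t-\varsigma_1)-c_L\chi^2(t)$, hence $\limsup_{t\to\infty}\chi(t)\leq b_M\varphi_M/c_L=K^{0}_{\chi}$, and because the $\Upsilon$-equation of \eqref{eq:k} is identical to that of \eqref{eq:2}, the same comparison argument gives $\limsup_{t\to\infty}\Upsilon(t)\leq (\kappa_M-m_L)\zeta_M K^{0}_{\chi}/(m_L\xi_L)=K^{0}_{\Upsilon}$.

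For the \emph{lower} bound on the prey I would first record that $\mathcal{H}(t)$ is uniformly bounded: the map $z\mapsto qE\varphi z/(\varphi e^{\varphi\varsigma_2}+c(e^{\varphi\varsigma_2}-1)z)$ is increasing in $z$ and $\varphi(t)e^{\varphi(t)\varsigma_2}\geq\varphi_Le^{\varphi_L\varsigma_2}>0$, so $\mathcal{H}(t)\leq q_ME_M\varphi_M/(c_L(e^{\varphi_L\varsigma_2}-1))$, and it is even smaller once the upper estimate $\chi(t)\leq K^{0}_{\chi}+\varepsilon$ holds for large $t$. Plugging this into \eqref{eq:k} gives, for $t$ large,
\[
\chi'(t)>b_L\varphi_L\chi(t-\varsigma_1)-c_M\chi^2(t)-\frac{\nu_M\zeta_M}{\xi_L}\chi(t)-\mathcal{H}(t),
\]
and dividing by $\chi(t)$ and arguing at the instants where $\chi$ approaches its lower limit exactly as in Theorem~\ref{thm:t2}, the harvesting term is absorbed into the intraspecific-competition term (it drops out of the limiting inequality in the same way the nonlinear part of the birth term does), leaving $\liminf_{t\to\infty}\chi(t)\geq (b_L\varphi_L\xi_L-\nu_M\zeta_M)/(c_M\xi_L)=k^{0}_{\chi}$, positive by $b_L\varphi_L\xi_L>\nu_M\zeta_M$. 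With $k^{0}_{\chi}$ in hand the lower estimate for the predator is transcribed from Theorem~\ref{thm:t2} — the $\Upsilon$-equation not having changed — so that $\kappa_M>m_L$ together with $(\kappa_L-m_M)\zeta_Lk^{0}_{\chi}>m_M(a_M+(1-\zeta_L)A_M)$ yields $\liminf_{t\to\infty}\Upsilon(t)\geq k^{0}_{\Upsilon}>0$.

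Collecting the four inequalities $0<k^{0}_{\chi}\leq\liminf\chi\leq\limsup\chi\leq K^{0}_{\chi}$ and $0<k^{0}_{\Upsilon}\leq\liminf\Upsilon\leq\limsup\Upsilon\leq K^{0}_{\Upsilon}$, one checks the requirements in the definition of permanence with lower persistence constant $\min\{k^{0}_{\chi},k^{0}_{\Upsilon}\}$ and upper bound $\max\{K^{0}_{\chi},K^{0}_{\Upsilon}\}$, which finishes the proof. I expect the only genuine obstacle to be the lower bound for $\chi$: one must verify that the extra harvesting term — which carries the \emph{other} delay $\varsigma_2$ and hence is not merely a rescaling of the birth term — is dominated near the prey's persistence level, so that the threshold $k^{0}_{\chi}$ is left unchanged; everything else is a routine transcription of the proof of Theorem~\ref{thm:t2} with $\varsigma$ replaced by $\varsigma_1$.
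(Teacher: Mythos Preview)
Your approach is essentially the paper's own: the paper gives no separate argument for Theorem~\ref{thm:t7} at all, stating only that the proofs of Theorems~\ref{thm:t6}--\ref{thm:t10} ``are as that of Theorems~\ref{thm:t1}--\ref{thm:t5}'', so your four-step transcription of the scheme from Theorem~\ref{thm:t2} with $\varsigma$ replaced by $\varsigma_1$ is exactly what is intended.

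You actually go further than the paper in explicitly isolating the harvesting term $\mathcal{H}(t)$ and noting that it is nonnegative (so the two upper estimates are unaffected) and uniformly bounded. The point you flag as the ``only genuine obstacle'' --- whether $\mathcal{H}(t)$ really leaves the lower threshold $k^{0}_{\chi}$ unchanged --- is a fair concern, and the paper does not address it either: at the level of rigor used in Theorems~\ref{thm:t1}--\ref{thm:t2} (where, for instance, the birth term is replaced by $b_L\varphi_L\chi(t-\varsigma)$ without further justification of the denominator), the harvesting term is evidently meant to be absorbed in the same heuristic way. Your explicit acknowledgment that this step needs verification is more careful than the paper, not less; but a fully rigorous bound would require either a smallness assumption on $q_ME_M$ or a modified $k^{0}_{\chi}$, neither of which appears in the statement.
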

\begin{theorem}\label{thm:t8}
If $ b_L\varphi_L\xi_L > \nu_M \zeta_M $, $\kappa_M>m_L$, $(\kappa_L-m_M)\zeta_L k^{\varepsilon}_{\chi}> m_M(a_M+(1-\zeta_L)A_M)$,
\begin{equation}\nonumber
\begin{split}
\frac{-b(t)\varphi^2(t)e^{\varphi(t)\varsigma_1}}{(\varphi(t)e^{\varphi(t)\varsigma_1}+c(t)(e^{\varphi(t)\varsigma_1}-1)K^{\varepsilon}_{\chi})^2}+2\varphi(t)
+\frac{\nu(t) \zeta ^2(t)(k^{\varepsilon}_{\chi})^2}{((a(t)+\zeta(t)k^{\varepsilon}_{\chi}+(1-\zeta(t))A(t)+\xi(t)K^{\varepsilon}_{\Upsilon })^2} \\
+4c(t)k^{\varepsilon}_{\chi}+\frac{q(t)E(t)\varphi^2(t)e^{\varphi(t)\varsigma_2}}{(\varphi(t)e^{\varphi(t)\varsigma_2}+c(t)(e^{\varphi(t)\varsigma_2}-1)k^{\varepsilon}_{\chi})^2} > 0
\end{split}
\end{equation}
and
\begin{equation}\nonumber
\begin{split}
\frac{-\kappa(t)[\zeta ^2(t)(K^{\varepsilon}_{\chi})^2+a(t)A(t)(1-\zeta(t))+(1-\zeta(t))^2 A^2(t)]}{(a(t)+\zeta(t)K^{\varepsilon}_{\chi}+(1-\zeta(t))A(t)+\xi(t)k^{\varepsilon}_{\Upsilon })^2} + m(t) > 0
\end{split}
\end{equation}
hold, then, a bounded positive solution of system \eqref{eq:k},
$ (\chi^\ast(t), \Upsilon ^\ast(t)) $ is globally attractive.
\end{theorem}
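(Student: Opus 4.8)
The plan is to run the Lyapunov-function argument of Theorem~\ref{thm:t3} almost verbatim, the only genuinely new ingredient being the delayed harvesting term. First I would invoke Theorems~\ref{thm:t6} and~\ref{thm:t7}: under the three standing hypotheses the set $\Gamma_\varepsilon$ is positively invariant and system~\eqref{eq:k} is permanent, so there is a $T>0$ such that both the reference solution $(\chi^\ast(t),\Upsilon^\ast(t))$ and an arbitrary positive solution $(\chi(t),\Upsilon(t))$ lie in $\Gamma_\varepsilon$ for all $t\geqslant t_0+T$. In particular $k^{\varepsilon}_{\chi}\leqslant \chi,\chi^\ast\leqslant K^{\varepsilon}_{\chi}$ and $k^{\varepsilon}_{\Upsilon}\leqslant \Upsilon,\Upsilon^\ast\leqslant K^{\varepsilon}_{\Upsilon}$, which is exactly what keeps every denominator appearing below bounded away from zero and from infinity. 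I would then take the same $H(t)=\tfrac12(\chi(t)-\chi^\ast(t))^2+\tfrac12(\Upsilon(t)-\Upsilon^\ast(t))^2$.

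Differentiating $H$ along~\eqref{eq:k}, and noting that the predator equation and the non-harvesting part of the prey equation are identical to those of~\eqref{eq:2}, every such term reproduces exactly the contribution found in the proof of Theorem~\ref{thm:t3} (with $\varsigma$ replaced by the maturity delay $\varsigma_1$); this already accounts for the whole of the second displayed inequality, which coincides with~\eqref{eq:4}. The one extra term is $-\,q(t)E(t)\varphi(t)\chi(t-\varsigma_2)\big/\big(\varphi(t)e^{\varphi(t)\varsigma_2}+c(t)(e^{\varphi(t)\varsigma_2}-1)\chi(t-\varsigma_2)\big)$; subtracting the same expression evaluated at $\chi^\ast(t-\varsigma_2)$ and factoring out $\chi(t-\varsigma_2)-\chi^\ast(t-\varsigma_2)$ shows that it contributes to $H'(t)$ a term of the form $\dfrac{q(t)E(t)\varphi^{2}(t)e^{\varphi(t)\varsigma_2}}{D_2(t)}\,[\chi(t)-\chi^\ast(t)]\,[\chi(t-\varsigma_2)-\chi^\ast(t-\varsigma_2)]$, where $D_2(t)$ is the product of the harvesting denominator at $\chi(t-\varsigma_2)$ and at $\chi^\ast(t-\varsigma_2)$. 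Applying $xy\leqslant\tfrac12(x^2+y^2)$ to this cross term and to the maturity cross term, absorbing the resulting $[\chi(t-\varsigma_i)-\chi^\ast(t-\varsigma_i)]^2$ pieces into the coefficient of $[\chi(t)-\chi^\ast(t)]^2$ exactly as in Theorem~\ref{thm:t3}, and bounding every coefficient by its extreme value over $\Gamma_\varepsilon$ (using $k^{\varepsilon}_{\chi}$ in a denominator when that maximises the fraction and $K^{\varepsilon}_{\chi}$ when that does, and likewise for $\Upsilon$), one arrives at
\[
H'(t)\;\leqslant\;-\tfrac12 M_1\,[\chi(t)-\chi^\ast(t)]^2-\tfrac12 M_2\,[\Upsilon(t)-\Upsilon^\ast(t)]^2,
\]
where $M_1$ is the infimum over $t$ of the first bracketed expression in the statement and $M_2$ the infimum of the second, so that the two displayed hypotheses say precisely $M_1>0$ and $M_2>0$.

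Finally, with $M=\min\{M_1,M_2\}>0$ the estimate becomes $H'(t)\leqslant -MH(t)$ for $t\geqslant t_0+T$; hence $H(t)\leqslant H(t_0+T)\,e^{-M(t-t_0-T)}\to 0$, which forces $|\chi(t)-\chi^\ast(t)|+|\Upsilon(t)-\Upsilon^\ast(t)|\to 0$ and establishes global attractivity of $(\chi^\ast(t),\Upsilon^\ast(t))$. The step I expect to be the main obstacle is the sign bookkeeping, exactly as in Theorem~\ref{thm:t3}: one must check that after the Young split the two delayed squared differences (one for $\varsigma_1$, one for $\varsigma_2$) can indeed be absorbed into the $[\chi(t)-\chi^\ast(t)]^2$ coefficient, and that the harvesting contribution enters with the sign and with the worst-case denominator $\big(\varphi(t)e^{\varphi(t)\varsigma_2}+c(t)(e^{\varphi(t)\varsigma_2}-1)k^{\varepsilon}_{\chi}\big)^{2}$ shown in the first inequality; verifying this, together with the analogous extreme-value choices in the remaining terms of $H'(t)$, is routine but is the only place where care is needed. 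Note that keeping these denominators bounded away from zero relies essentially on permanence, i.e.\ on $\chi,\chi^\ast\geqslant k^{\varepsilon}_{\chi}>0$.
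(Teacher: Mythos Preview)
Your proposal is correct and follows essentially the same approach as the paper: the paper gives no separate proof of Theorem~\ref{thm:t8} but simply declares that it is proved ``as that of'' Theorem~\ref{thm:t3}, and your plan is precisely to rerun the Lyapunov argument of Theorem~\ref{thm:t3} with $\varsigma$ replaced by $\varsigma_1$ and the one additional harvesting cross term handled in the same manner. One small point to watch in your write-up: the harvesting term enters the prey equation with a minus sign, so the resulting cross term in $H'(t)$ carries a minus sign as well (it is a stabilising contribution, which is why it appears with a \emph{plus} sign and with the \emph{smallest} admissible denominator $(\varphi(t)e^{\varphi(t)\varsigma_2}+c(t)(e^{\varphi(t)\varsigma_2}-1)k^{\varepsilon}_{\chi})^{2}$ inside the bracket defining $M_1$); your own caveat about ``sign bookkeeping'' already anticipates this.
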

\begin{theorem}\label{thm:t9}
If $ \widetilde{b\varphi\xi}>\widetilde{\nu \zeta }$, $\widetilde{\kappa}>\widetilde{m}$ and $ \dfrac{(\widetilde{b\varphi\xi}-\widetilde{\nu \zeta })\zeta_L(\widetilde{\kappa}-\widetilde{m})}{\widetilde{\xi c}}exp\{-2\widetilde{\varphi}\rho\}>\widetilde{m}a_M $ hold then \vspace{0.2cm}\\system \eqref{eq:k} has a minimum of one positive $\rho$-periodic solution. 
\end{theorem}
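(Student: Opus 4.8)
The plan is to mirror the argument of Theorem \ref{thm:t4} almost verbatim, since system \eqref{eq:k} differs from system \eqref{eq:2} only by the additional harvesting term $\tfrac{q(t)E(t)\varphi(t)\chi(t-\varsigma_2)}{\varphi(t)e^{\varphi(t)\varsigma_2}+c(t)(e^{\varphi(t)\varsigma_2}-1)\chi(t-\varsigma_2)}$, which is nonnegative, bounded, and of the same structural form as the recruitment term. First I would substitute $\chi(t)=\exp\{r(t)\}$, $\Upsilon(t)=\exp\{s(t)\}$ to pass to the logarithmic system, and set up the same Banach spaces $X=Z$ of $\rho$-periodic $C(\mathbb{R},\mathbb{R}^2)$ functions with the max-norm, the same Fredholm operator $L(r,s)^T=(r',s')^T$ of index zero, and the same projections $B=Q$ given by averaging over $[0,\rho]$. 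The operator $S$ now carries the extra harvesting summand in its first component; since that summand is continuous and bounded on bounded sets, $S$ remains $L$-compact on $\bar{\Sigma}$ for any bounded open $\Sigma$, exactly as before.

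Next I would carry out the a priori bound step for solutions of $Lx=\lambda Sx$, $0<\lambda<1$. Integrating the first equation over $[0,\rho]$ gives $\int_0^\rho\big[\text{recruitment}\big]\,dt=\int_0^\rho\varphi(t)\,dt+\int_0^\rho c(t)e^{r(t)}\,dt+\int_0^\rho\big[\text{predation}\big]\,dt+\int_0^\rho\big[\text{harvest}\big]\,dt$, so that $\int_0^\rho|r'(t)|\,dt\leqslant 2\lambda\big(\int_0^\rho\varphi(t)\,dt+\int_0^\rho[\text{harvest}]\,dt\big)$; one bounds the harvest integrand crudely by $q_M E_M/(e^{\varphi_L\varsigma_2}-1)\cdot(1/c_L)$ or similar, absorbing it into a new constant so that $\int_0^\rho|r'(t)|\,dt\leqslant 2\widetilde{\varphi}\rho+C_0$ for an explicit $C_0$; the bound $\int_0^\rho|s'(t)|\,dt\leqslant 2\widetilde{m}\rho$ is unchanged since the predator equation is untouched. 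The upper bound $r(\psi_1)\leqslant u_1$ still follows from dropping the nonpositive terms (recruitment bounded above by $\tfrac{b(t)\varphi(t)}{e^{r(\psi_1)}c(t)(e^{\varphi(t)\varsigma_1}-1)}$ versus $\widetilde{c}\rho e^{r(\psi_1)}$), and the lower bound $r(\psi_2)\geqslant U_1$ follows from dropping the nonpositive predation and harvest terms, so $u_1,U_1$ — and hence the $s$-bounds $u_2,U_2$ obtained from them via the predator equation — are formally identical to those in Theorem \ref{thm:t4}, only the passage from $r(\psi_i)$ to the global bounds $H_1,\dots,H_4$ now uses $2\widetilde{\varphi}\rho+C_0$ in place of $2\widetilde{\varphi}\rho$. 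This yields a $\lambda$-independent a priori bound $A_4$ and the open set $\Sigma=\{(r,s):\|(r,s)\|<A_4\}$.

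Finally I would verify the two hypotheses of Lemma \ref{lem:1}: condition (i) holds by the a priori bounds just established, and for condition (ii) one checks on $\partial\Sigma\cap\operatorname{Ker}L=\partial\Sigma\cap\mathbb{R}^2$ that $QS(r,s)^T\neq(0,0)^T$ using \eqref{eq:h}-type estimates for the algebraic system, then uses the same homotopy $H_\mu=\mu\,QS+(1-\mu)G$ to a map $G$ whose second component is linear-in-structure (the predator averaged equation, which has a unique zero) and whose first component retains the recruitment/decay/harvest balance; since $\operatorname{Im}Q=\operatorname{Ker}L$ one takes $J=I$, and homotopy invariance of the Brouwer degree gives $\deg\{JQS,\Sigma\cap\operatorname{Ker}L,0\}=\deg\{G,\Sigma\cap\operatorname{Ker}L,0\}\neq 0$. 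Lemma \ref{lem:1} then yields a $\rho$-periodic solution $(r^\ast(t),s^\ast(t))$ of the logarithmic system, and $\chi^\ast=\exp\{r^\ast\}$, $\Upsilon^\ast=\exp\{s^\ast\}$ is the desired strictly positive $\rho$-periodic solution of \eqref{eq:k}. The only genuinely new bookkeeping — and the step most likely to hide an error — is confirming that the harvesting term does not spoil the \emph{lower} bound $U_1$ on $r(\psi_2)$: one needs the harvest integrand to be dominated so that $\widetilde{c}e^{r(\psi_2)}\rho+\text{(harvest)}\geqslant\int_0^\rho\tfrac{b(t)\varphi(t)}{c(t)e^{r(\psi_2)}}\,dt$ still forces $e^{r(\psi_2)}$ below a computable threshold, i.e.\ that the hypothesis $\widetilde{b\varphi\xi}>\widetilde{\nu\zeta}$ (augmented, if necessary, by a harvest-dependent strengthening) keeps the right-hand side positive; everything else transfers mechanically from Theorem \ref{thm:t4}.
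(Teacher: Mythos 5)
Your proposal follows exactly the route the paper takes: the paper gives no separate proof of Theorem \ref{thm:t9}, stating only that the proofs of Theorems \ref{thm:t6}--\ref{thm:t10} are ``as that of Theorems \ref{thm:t1}--\ref{thm:t5},'' i.e.\ Theorem \ref{thm:t9} is to be proved by repeating the continuation-theorem argument of Theorem \ref{thm:t4} with the harvesting term carried along, which is precisely what you outline. Your closing caveat --- that the negatively-signed harvest term must be checked not to spoil the lower bound $U_1$ on $r(\psi_2)$ and the estimate on $\int_0^\rho |r'(t)|\,dt$, and may force a harvest-dependent strengthening of the hypotheses --- is well taken, since the paper states Theorem \ref{thm:t9} with hypotheses identical to those of Theorem \ref{thm:t4} and never addresses this point.
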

\begin{theorem}\label{thm:t10}
If all the conditions in Theorem \ref{thm:t9} are fulfilled, there is a positive solution to system \eqref{eq:k} that is distinct and almost periodic.
\end{theorem}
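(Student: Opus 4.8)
The plan is to mirror the argument of Theorem~\ref{thm:t5}, with the only genuinely new feature being the extra harvesting term of \eqref{eq:k} carrying the second delay $\varsigma_2$, which is bounded on $\Gamma_\varepsilon$ and therefore causes no essential difficulty. First I would invoke Theorem~\ref{thm:t7}: every solution of \eqref{eq:k} starting from positive data is ultimately bounded, and (by Theorem~\ref{thm:t6}) the box $\Gamma_\varepsilon$, with the same bounds $k^\varepsilon_\chi\le\chi\le K^\varepsilon_\chi$, $k^\varepsilon_\Upsilon\le\Upsilon\le K^\varepsilon_\Upsilon$ as in Theorem~\ref{thm:t1}, is a positively invariant absorbing set. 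Fix one bounded positive solution $y(t)=(y_1(t),y_2(t))$ of \eqref{eq:k}, so $y(t)\in\Gamma_\varepsilon$ for all large $t$; for a sequence $t_j\to\infty$ the translate $y(\cdot+t_j)$ solves the system obtained from \eqref{eq:k} by replacing every coefficient $a(t),b(t),\dots,\xi(t)$ (and $q(t),E(t)$) by $a(t+t_j),b(t+t_j),\dots$ evaluated at the shifted time.

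Next I would verify the hypotheses of the Arzela-Ascoli theorem (Lemma~\ref{lem:2}) for the families $\{y_i(\cdot+t_j)\}$ and $\{\dot{y}_i(\cdot+t_j)\}$, $i=1,2$. Uniform boundedness of $y_i(\cdot+t_j)$ is immediate from $y\in\Gamma_\varepsilon$; uniform boundedness of $\dot{y}_i(\cdot+t_j)$ holds because on $\Gamma_\varepsilon$ the right-hand side of \eqref{eq:k} is uniformly bounded — in particular the new fraction $\tfrac{q(t)E(t)\varphi(t)\chi(t-\varsigma_2)}{\varphi(t)e^{\varphi(t)\varsigma_2}+c(t)(e^{\varphi(t)\varsigma_2}-1)\chi(t-\varsigma_2)}$ is bounded once $q,E,\varphi,c$ are bounded and $\chi(t-\varsigma_2)\in[k^\varepsilon_\chi,K^\varepsilon_\chi]$ — and equicontinuity of $\{y_i(\cdot+t_j)\}$ then follows from the mean value theorem. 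Lemma~\ref{lem:2} produces a subsequence $\{t_m\}$ along which $y_i(\cdot+t_m)\to y_{i1}(\cdot)$ uniformly on compacta; the decomposition $y_i(t+t_m)=y_{i1}(t+t_m)+y_{i2}(t+t_m)$ with $y_{i1}$ almost periodic and $y_{i2}(t+t_m)\to0$ exhibits $y_i$ as asymptotically almost periodic, exactly as in Theorem~\ref{thm:t5}, and $\dot{y}_{i1}$ exists by interchanging $\lim_m$ with the difference quotient.

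Then I would refine $\{t_j\}$. Because $a,b,c,m,\zeta,A,\nu,\kappa,\varphi,\xi$ are almost periodic, a single (diagonal) subsequence can be chosen along which all of them converge, uniformly in $t$, to themselves, $a(t+t_j)\to a(t),\ b(t+t_j)\to b(t),\dots,\xi(t+t_j)\to\xi(t)$, while simultaneously $y_1(t+t_j-\varsigma_1)\to y_{11}(t-\varsigma_1)$ and $y_1(t+t_j-\varsigma_2)\to y_{11}(t-\varsigma_2)$. Writing the translated system in integrated form and letting $t_j\to\infty$ — using dominated convergence together with the continuity of the Beddington-DeAngelis response and of the two delayed fractions in their arguments on $\Gamma_\varepsilon$ — shows that $(y_{11},y_{21})$ satisfies \eqref{eq:k}; since each $y_{i1}$ is almost periodic, it is a positive almost periodic solution. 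Uniqueness is inherited from global attractivity: under the hypotheses of Theorem~\ref{thm:t8} any two bounded positive solutions satisfy $|\chi-\chi^\ast|+|\Upsilon-\Upsilon^\ast|\to0$, so two almost periodic solutions attract each other; as the difference of two almost periodic functions is almost periodic and an almost periodic function tending to $0$ vanishes identically, the two solutions coincide. This is the $(\varsigma_1,\varsigma_2)$-analogue of the way Theorem~\ref{thm:t5} uses Theorem~\ref{thm:t3}.

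I expect the main obstacle to be the simultaneous extraction step: one must select a single sequence $t_j\to\infty$ that behaves like a common $\varepsilon$-almost period for the \emph{finite} family of coefficients \emph{and} along which both delayed translates $y_1(\cdot+t_j-\varsigma_1)$ and $y_1(\cdot+t_j-\varsigma_2)$ converge, and one must check that the limiting equation still carries both delays unchanged. This needs the relative density of the common $\varepsilon$-almost periods of finitely many almost periodic functions combined with a diagonal argument; the remaining work — bounding the harvesting term, equicontinuity, and the passage to the limit — is a routine repetition of the estimates in Theorem~\ref{thm:t5}.
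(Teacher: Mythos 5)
Your proposal follows essentially the same route as the paper: the paper gives no separate proof of Theorem~\ref{thm:t10}, stating only that it is proved exactly as Theorem~\ref{thm:t5}, i.e.\ by taking a bounded positive solution in $\Gamma_\varepsilon$, applying Arzela--Ascoli to the translates $y_i(\cdot+t_j)$ and $\dot y_i(\cdot+t_j)$, and passing to the limit in the shifted equations so that the almost periodic limit $(y_{11},y_{21})$ solves \eqref{eq:k}. Your added uniqueness argument via global attractivity is more than the paper supplies (Theorem~\ref{thm:t5} simply asserts uniqueness), though note it invokes the hypotheses of Theorem~\ref{thm:t8}, which are not literally among the conditions of Theorem~\ref{thm:t9} assumed in the statement.
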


\section{Numerical simulation}\label{sec:9}
Behaviour of systems \eqref{eq:2} and \eqref{eq:k} has been tested by computer simulation. Various dynamical aspects like positive invariance, permanence, global attractivity, periodicity and almost periodic nature of system \eqref{eq:2} have been inspected. Results are as shown in Figures \ref{fig:1}-\ref{fig:6}. The same has been repeated for system \eqref{eq:k} and results are as shown in Figures \ref{fig:8} and \ref{fig:9}. 

\par From Figures \ref{fig:1}(a) and \ref{fig:3}(a) it is noticed that two trajectories starting at different initial values (for initial conditions and parameter values refer to Table \ref{Table:1}) tend to move in a bounded region and remain in it even as $t$ tends to infinity and thus using Theorems \ref{thm:t1} and \ref{thm:t2}, one can show positive invariance and permanence of system \eqref{eq:2}. Using theorem \ref{thm:t3}, in Figures \ref{fig:1}(b),(c) and \ref{fig:3}(b),(c) time series and phase portraits of globally attractive solutions of system \eqref{eq:2} are shown. The highlighted part of the trajectories in Figures \ref{fig:1}(c) and \ref{fig:3}(c) clearly shows the solution of system \eqref{eq:2} that is globally attractive. Figures \ref{fig:2}(a) and \ref{fig:4}(a),(b) depict periodic solutions to system \eqref{eq:2} and for parameters with different rationally independent periods, existence of almost periodic solutions is clearly seen in Figures \ref{fig:2}(b) and \ref{fig:4}(c),(d) (see Table \ref{Table:1} for parameter values and initial conditions) as proved in Theorems \ref{thm:t4} and \ref{thm:t5}. Using Theorems \ref{thm:t6},\ref{thm:t7},\ref{thm:t8},\ref{thm:t9} and \ref{thm:t10} the same experiment has been repeated for system \eqref{eq:k} as well. Results are shown in Figures \ref{fig:8} and \ref{fig:9} (for parameter values and initial conditions see Table \ref{Table:1}).

\par For various values of $\varsigma$ (delay in maturity), dynamics of system \eqref{eq:2} are observed in Figure \ref{fig:5} (other parameters are mentioned in Table \ref{Table:1}). In Figure \ref{fig:5}(a) when $\varsigma=0$ (i.e, hilsa spawns instantaneously without any delay) it can be seen that hilsa is abundant and thus the density of eel increases too. When $\varsigma=2$, $\varsigma =2.5$ and $\varsigma =3$ it can be seen that there is a decline in the prey density in Figures \ref{fig:5}(b),(c) and (d) which leads to a fall in predator population. Moreover, a trajectory to system \eqref{eq:2} starting at a particular initial value and having parameters with different rationally independent periods tends to move in a bounded region but it may follow a complicated path. This is depicted in Figures \ref{fig:6}(a) and (b) (for parameter values and initial conditions refer to Table \ref{Table:1}). Also, for various $\varsigma $, $\zeta$ and $A$, extreme changes in the behaviour of model \eqref{eq:2} are observed from Figure \ref{fig:7}. The same has been tabulated in Table \ref{Table:2}. Time series when $\zeta$ takes a very low value (i.e, eel depends only on alternative food) can be observed from Figure \ref{fig:10}. In this figure, it can be seen that as $A(t)$ varies between different time dependent/periodic functions, the eel population differs both for system \eqref{eq:2} and system \eqref{eq:k} in Figures \ref{fig:10}(a),(b),(c) and \ref{fig:10}(d),(e),(f) respectively. It is observed that if $A(t)$ is very less then the eel population decreases gradually where as if the amount of alternative food is high then eel survives.

\section{Conclusion}\label{s9}
The ecological relationship between hilsa and eel in the marine water and eel's reliance on alternative food during the monsoon when hilsa migrates to GBM river basin for spawning are modeled as a dynamical system by considering the temporal inhomogeneity of the parameters involved. Also, this model deals with the age based growth of hilsa along with the predation term of Beddington-DeAngelis type in time variant parameters.
\par Some sufficient conditions for positive invariance and permanence are obtained from Theorems \ref{thm:t1} and \ref{thm:t2} respectively. It is observed that if $\varepsilon=0$ in Theorem \ref{thm:t1}, it results in the condition in Theorem \ref{thm:t2}. This indicates that if $\Gamma_\varepsilon$ is positively invariant in system \eqref{eq:2}, then the system \eqref{eq:2} must be permanent, which is clearly seen in Figures \ref{fig:1} and \ref{fig:3}. It means that if the populations of hilsa and eel lie within a bounded set and remain forever in that particular set then their populations would never become extinct. In Theorem \ref{thm:t3}, relevant criteria for global stability of a bounded positive solution are established by formulating a Lyapunov function.
\par In addition, we obtained criterion for the existence of a positive $\rho$-periodic solution in Theorem \ref{thm:t4} using continuation theorem. Theorem \ref{thm:t4} provides the range of existence for the periodic solution. Also, system \eqref{eq:2} possesses a distinctive almost periodic solution as seen in Figures \ref{fig:2} (b) and \ref{fig:4} (c), (d) and the same is derived in Theorem \ref{thm:t5}. Moreover, some complicated trajectories of system \eqref{eq:2} are observed in Figures \ref{fig:6} (a) and (b), whose behaviour is unpredictable. Further, table \ref{Table:2} clearly shows that the predator population shall face the risk of extinction due to lack of alternative food during the monsoon season.
\par From Figure \ref{fig:5} (a) it is evident that if there is no delay in maturity ($\varsigma = 0$), i.e., hilsa spawns immediately after its birth, then the population of hilsa grows immensely. Thus, eel gets excess food to feed on. This results in a rapid increase in eel population as well. Such a scenario is not ecologically beneficial. On the other hand if the maturity delay in hilsa is neglected, i.e, immature hilsa are continued to be harvested, then hilsa shall face the risk of extinction. In any food chain if one species goes extinct there will be an adverse effect on other species. That is, the abundance of eel may decline. Whereas, when $\varsigma$ is considered to be $2, 2.5$ and $3$ the prey-predator population is positively invariant, permanent, globally stable and periodic as seen in Figures \ref{fig:5} (b), (c) and (d), thereby suggesting that incorporating age based growth model of hilsa can be of great benefit to the ecosystem.
\par Moreover, system \eqref{eq:2}, which handles the age-structured growth in prey population can be extended to system \eqref{eq:k} that can handle both age-structured growth and age-selective harvesting of prey. Various dynamical aspects of system \eqref{eq:k} have been examined in Theorems \ref{thm:t6}-\ref{thm:t10} and verified using Figures \ref{fig:8}-\ref{fig:9}. Samanta \cite{SGP10} mentioned about the ineffectiveness of the time delay on permanence. But time delay does effect the global attractivity of a non-autonomous dynamical system. System \eqref{eq:2} only deals with $\varsigma_1$ whereas system \eqref{eq:k} deals with $\varsigma_1$ and $\varsigma_2$. But this additional delay $\varsigma_2$ in system \eqref{eq:k} has no effect on permanence (when compared to system \eqref{eq:2}) as observed in Figures \ref{fig:3}(a) and \ref{fig:8}(a). Inclusion of delay in harvest ($\varsigma_2$) in system \eqref{eq:k} does affect the global attractivity (when compared to system \eqref{eq:2}). The same can be observed in Figures \ref{fig:3}(b),(c) and \ref{fig:8}(b),(c). Thus, for greater benefit of the ecosystem both $\varsigma_1$ (delay in maturity) and $\varsigma_2$ (delay in harvest) can be treated with equal importance.

\begin{figure}[]\centering
\subfloat[]{\includegraphics[width=6.7in,height=3.5in]{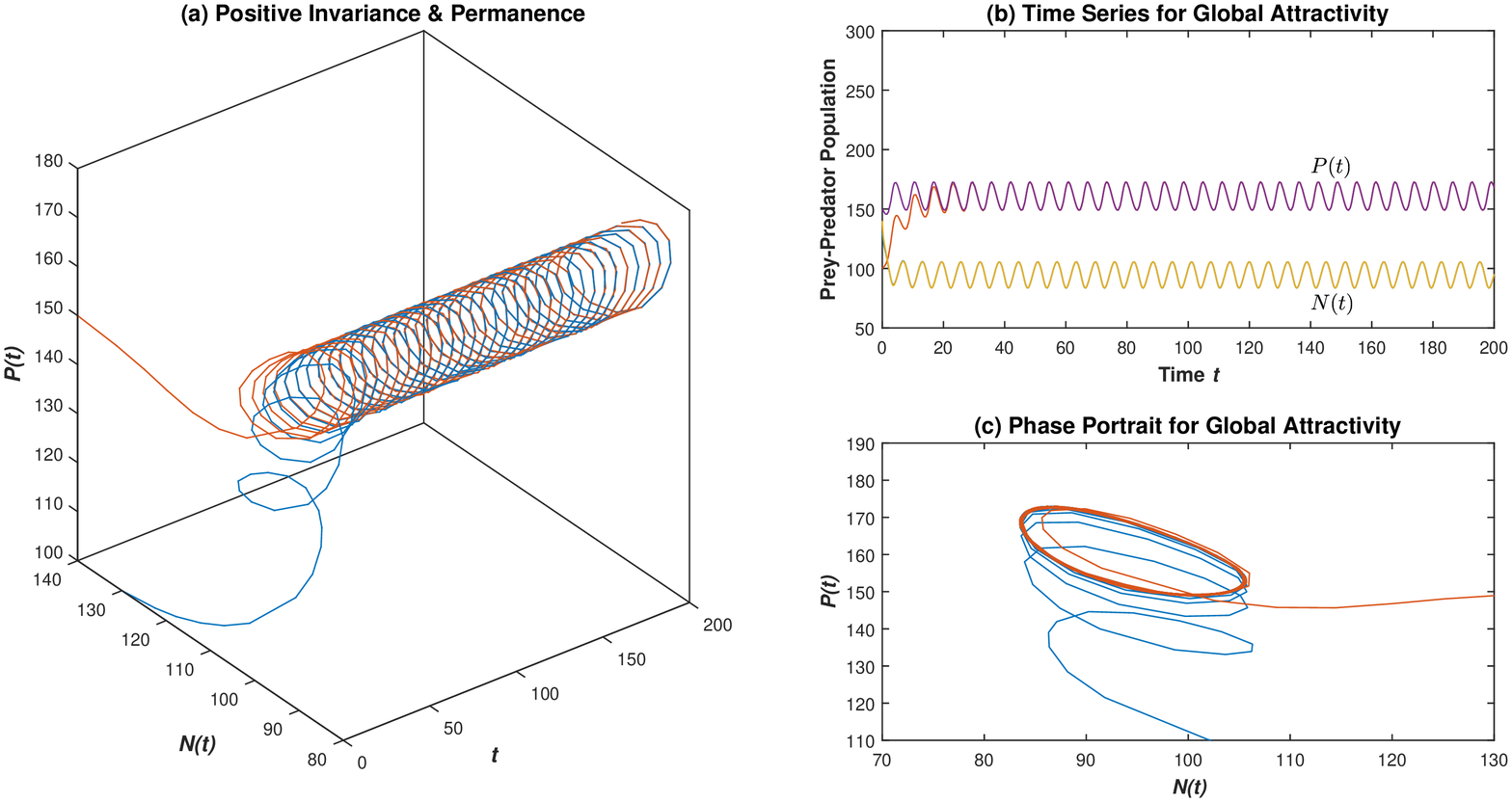}}\vspace{-0.8cm}\caption{(a) Phase space diagram of system \eqref{eq:2} showing positive invariance and permanence. (b) and (c) are time series plot and phase portrait, respectively, for a globally attractive solution of system \eqref{eq:2}.} \label{fig:1}
\end{figure}

\begin{figure}\centering
\subfloat[]{\includegraphics[width=6.7in,height=3.5in]{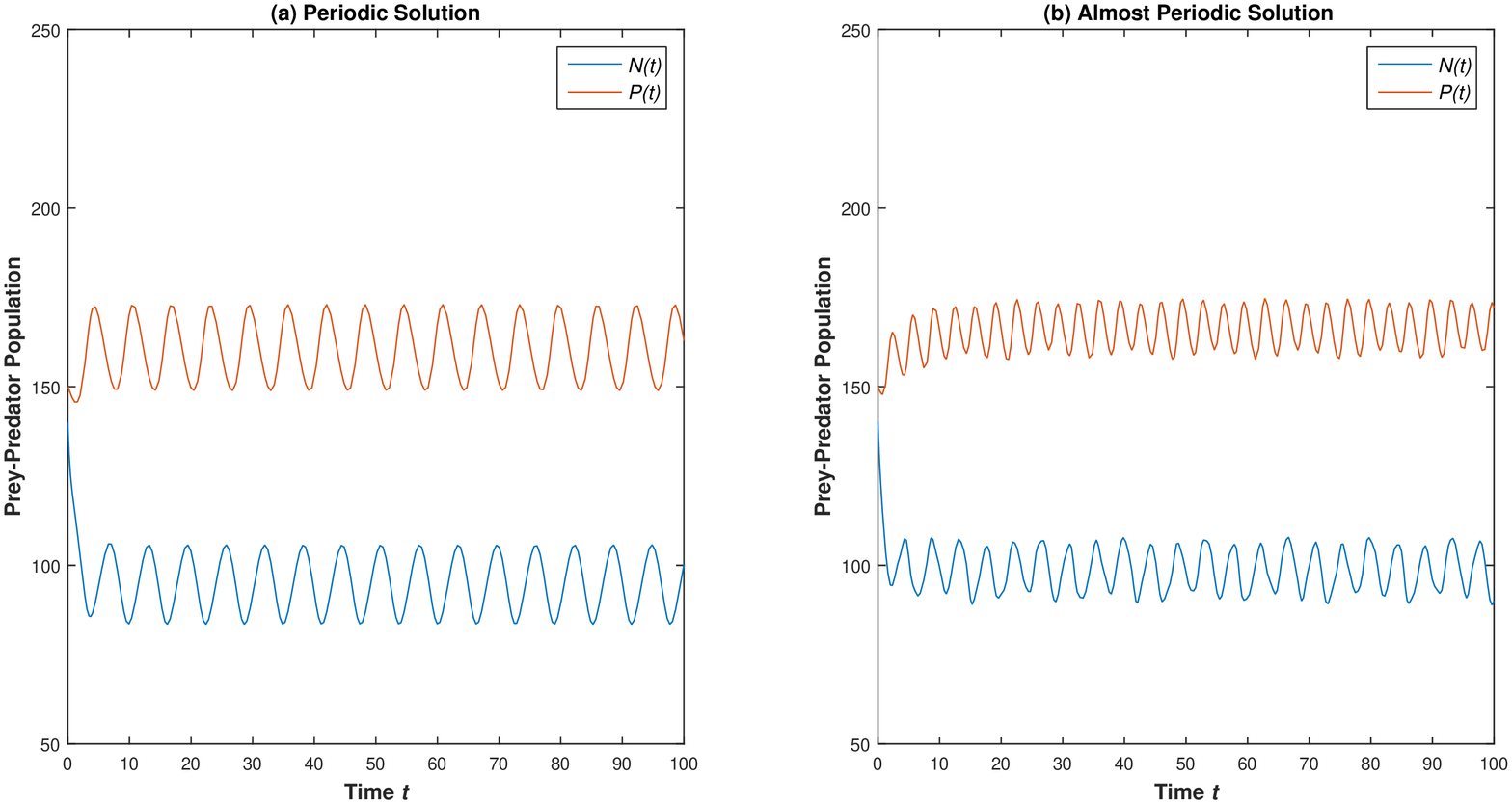}}\vspace{-0.8cm}\caption{Time series plots for Hilsa-Eel population of system \eqref{eq:2}. Figure (a) and (b) show a periodic and almost periodic solution to system \eqref{eq:2}, respectively.} \label{fig:2}
\end{figure}

\begin{figure}[]\centering
\subfloat[]{\includegraphics[width=6.7in,height=3.5in]{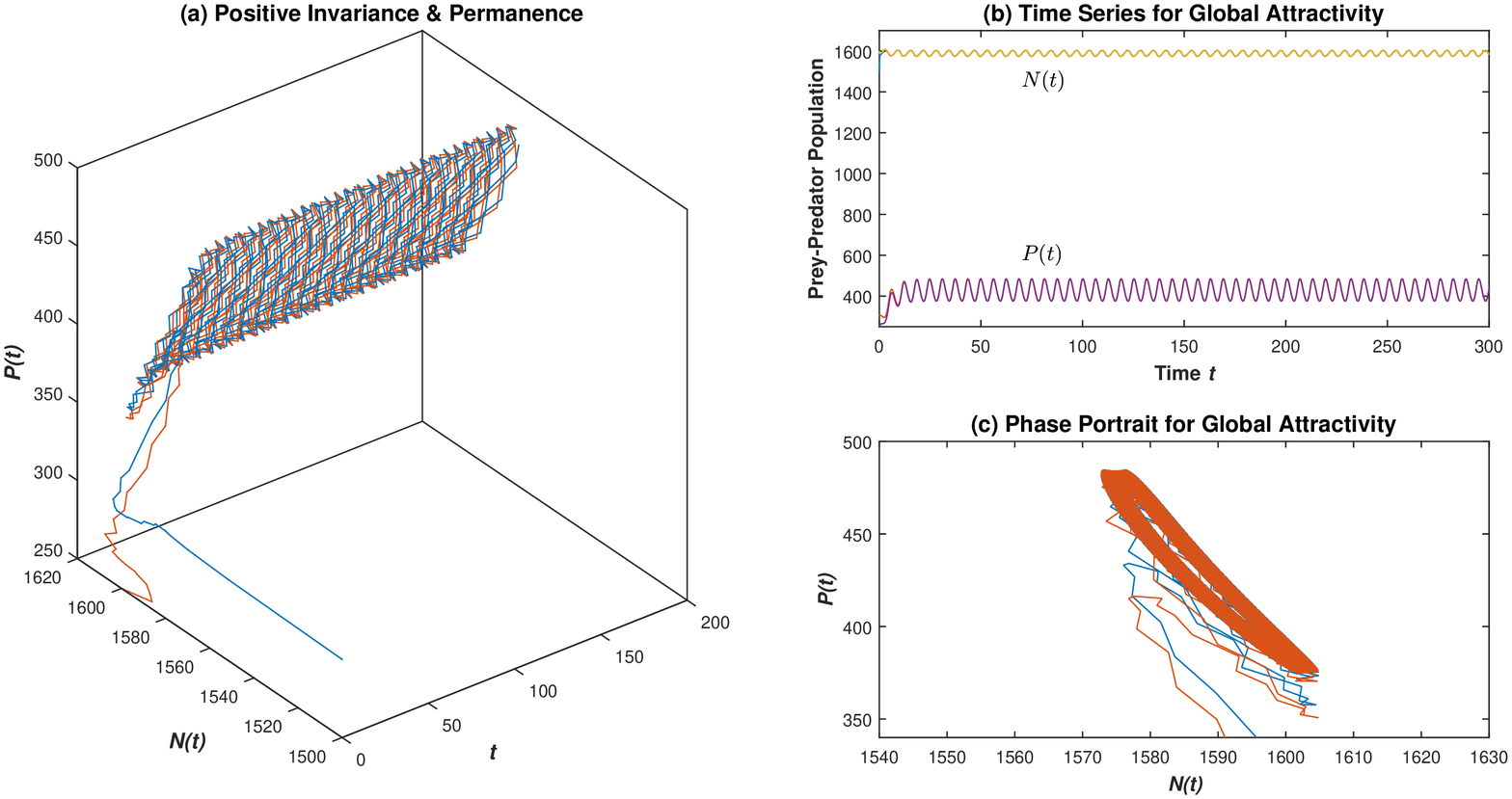}}\vspace{-0.8cm}\caption{(a) Phase space diagram of system \eqref{eq:2} showing positive invariance and permanence. (b) and (c) are time series plot and phase portrait, respectively, for a globally attractive solution of system \eqref{eq:2}.} \label{fig:3}
\end{figure}

\begin{figure}\centering
\subfloat[]{\includegraphics[width=6.7in,height=3.5in]{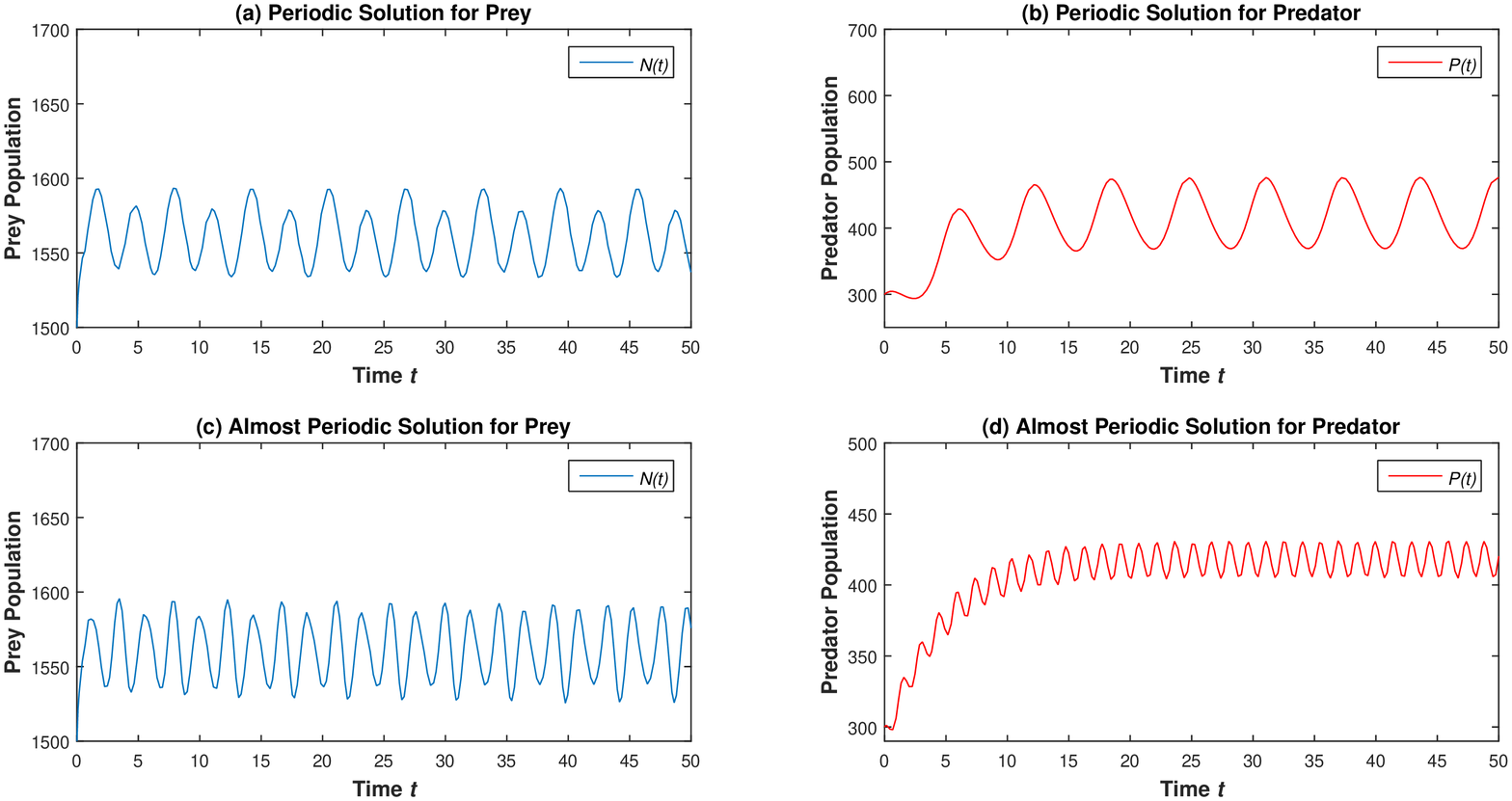}}\vspace{-0.8cm}\caption{Time series plots for Hilsa-Eel population of system \eqref{eq:2}. Figure (a),(b) and (c),(d) show periodic and almost periodic solutions to system \eqref{eq:2} respectively.} \label{fig:4}
\end{figure}

\begin{figure}\centering
\subfloat[]{\includegraphics[width=6.7in,height=3.5in]{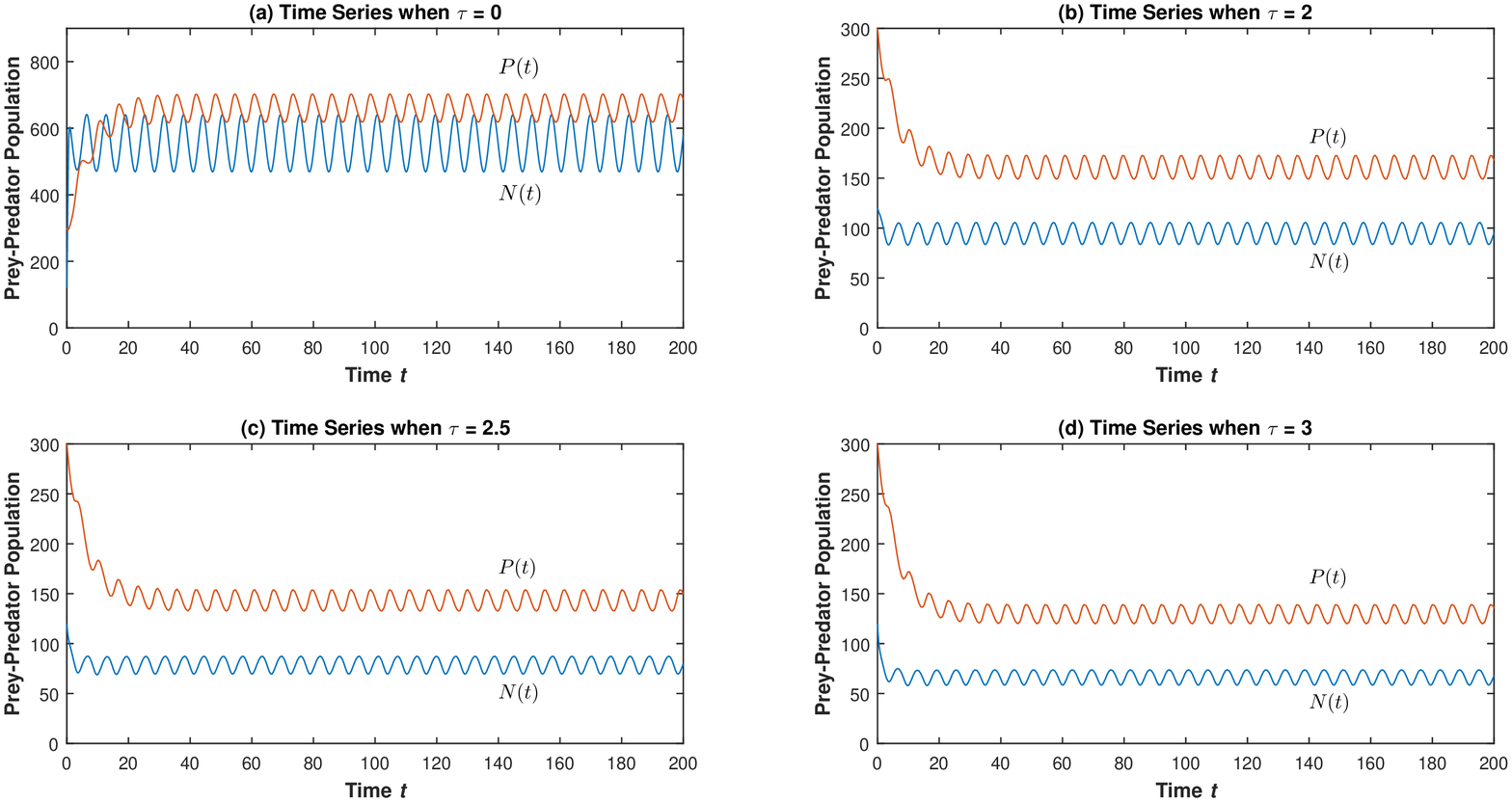}}\vspace{-0.8cm}\caption{Time series diagram of system \eqref{eq:2} when $\varsigma=0,2,2.5$ and $3$. }\label{fig:5}\vspace{0.2cm}
\end{figure}

\begin{figure}\centering
\subfloat[]{\includegraphics[width=6.7in,height=3.5in]{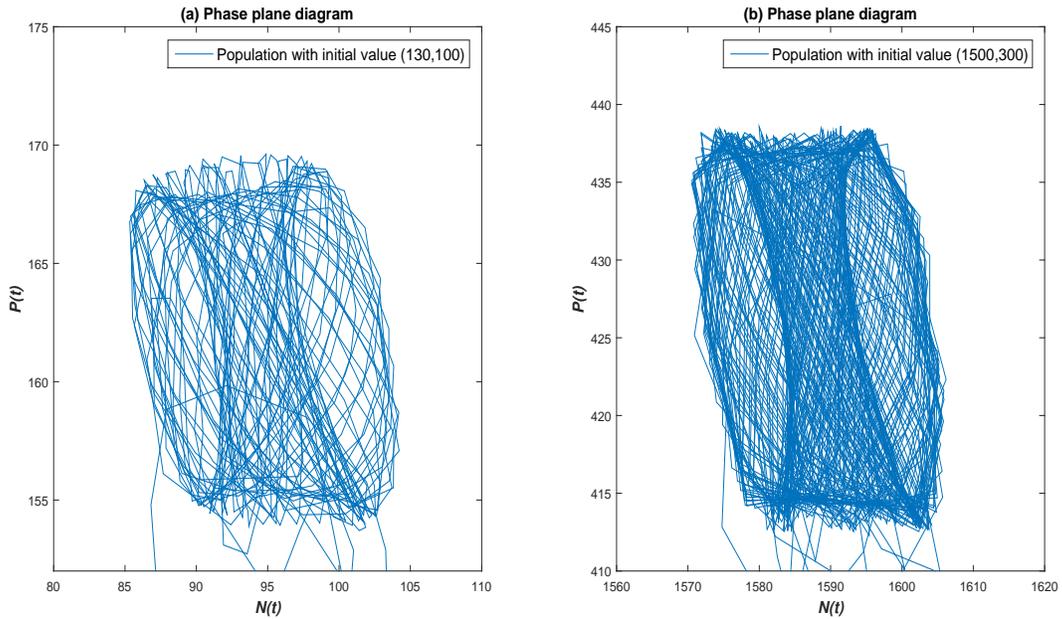}}\vspace{-0.8cm}\caption{Phase portrait of a complicated trajectory of system \eqref{eq:2} with almost periodic coefficients.} \label{fig:6}\vspace{0.2cm}
\end{figure}

\begin{figure}\centering
\subfloat[]{\includegraphics[width=6.7in,height=3.5in]{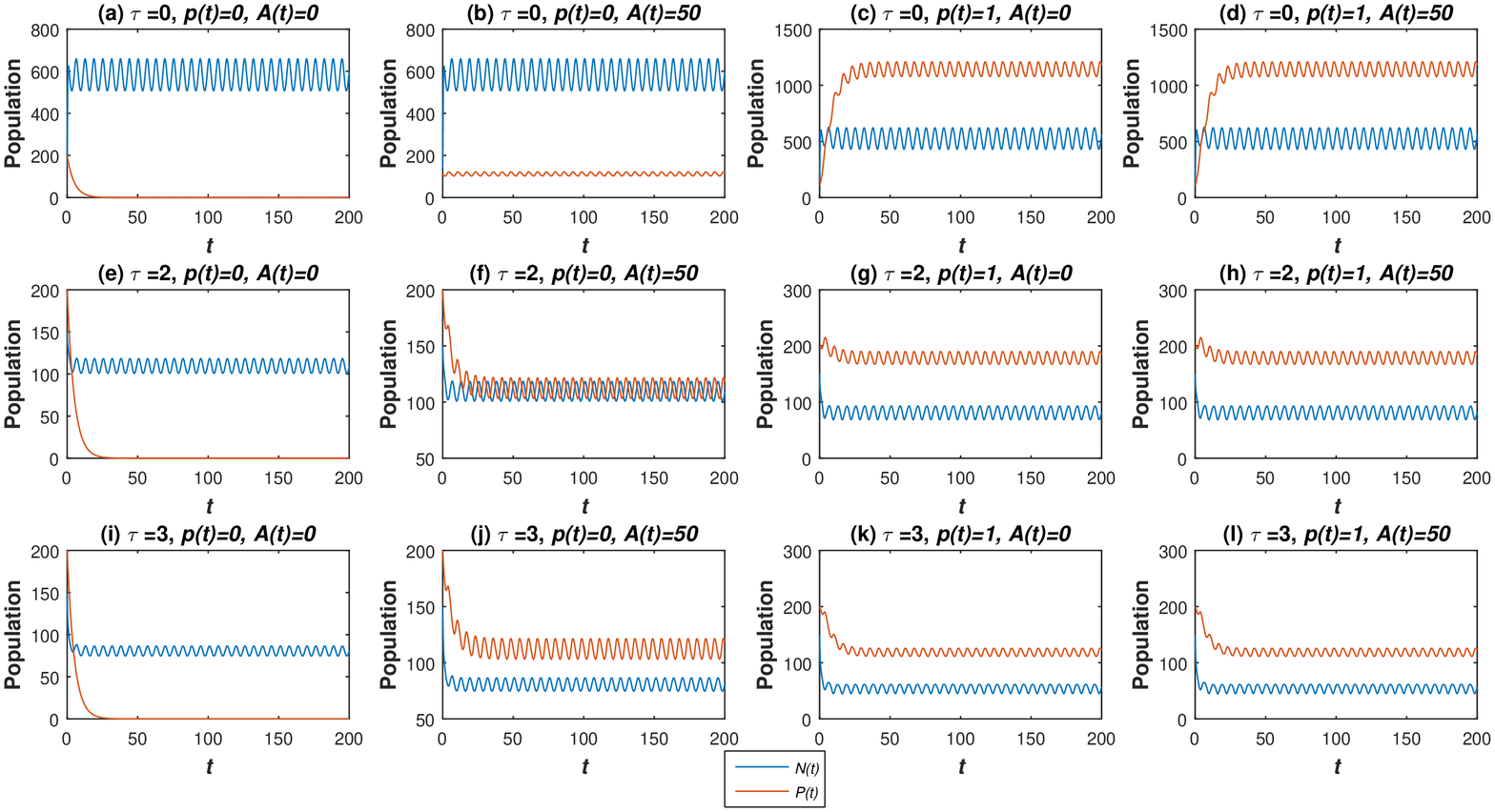}}\vspace{-0.4cm}\caption{Time series plots of trajectories of system \eqref{eq:2} with different values of $\varsigma, \zeta(t)$ and $A(t)$.}\label{fig:7}\vspace{0.2cm}
\end{figure}

\begin{figure}[]\centering 
\vspace{-0.2cm}\subfloat[]{\includegraphics[width=6.7in,height=3.5in]{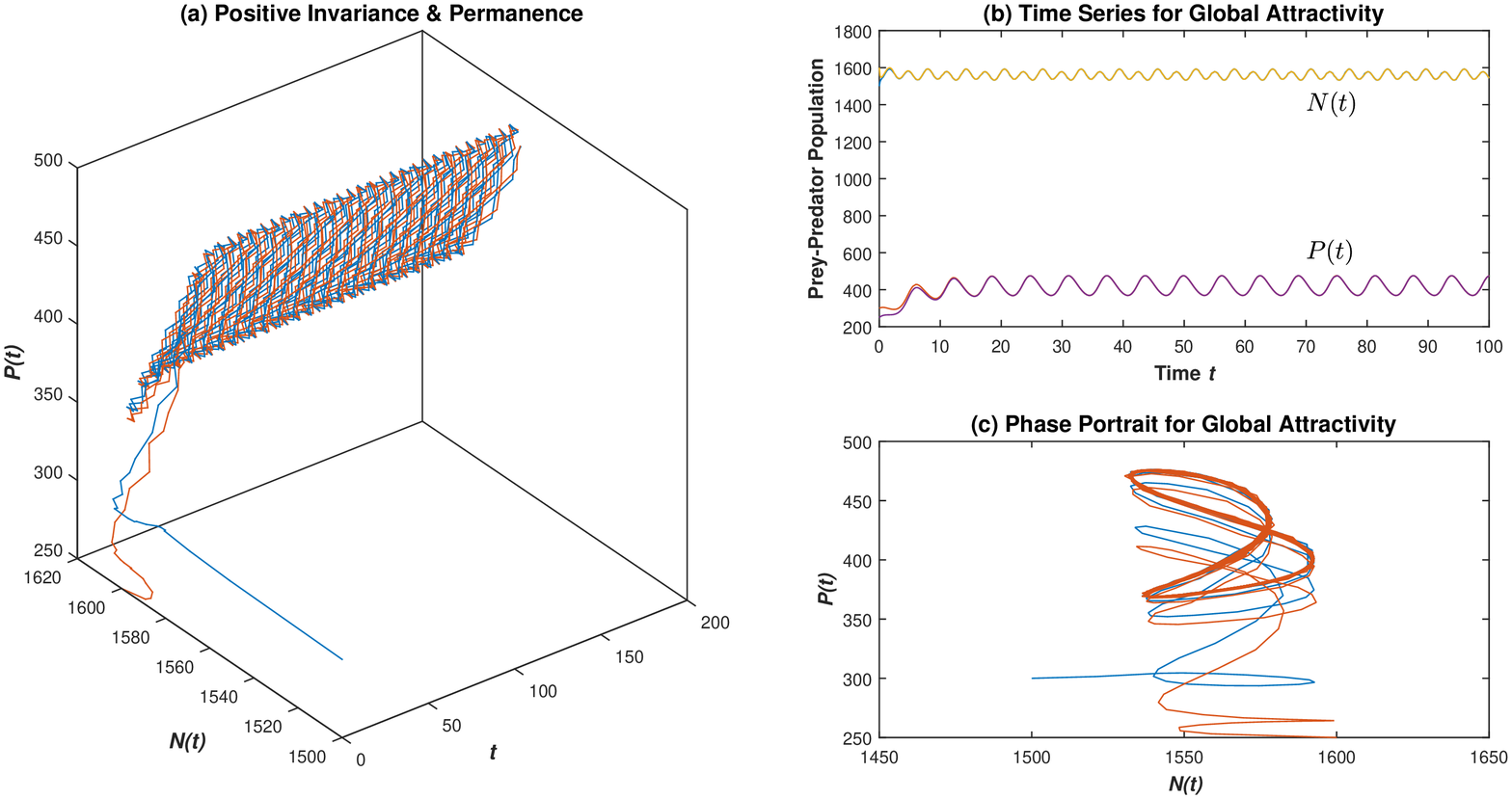}}\vspace{-0.8cm}\caption{Population curves of system \eqref{eq:k} with harvesting delay. (a) Phase space diagram of system \eqref{eq:k} showing positive invariance and permanence. (b) and (c) are time series plot and phase portrait, respectively, for a globally attractive solution of system \eqref{eq:k}.}\label{fig:8}
\end{figure}

\begin{figure}\centering
\subfloat[]{\includegraphics[width=6.7in,height=3.5in]{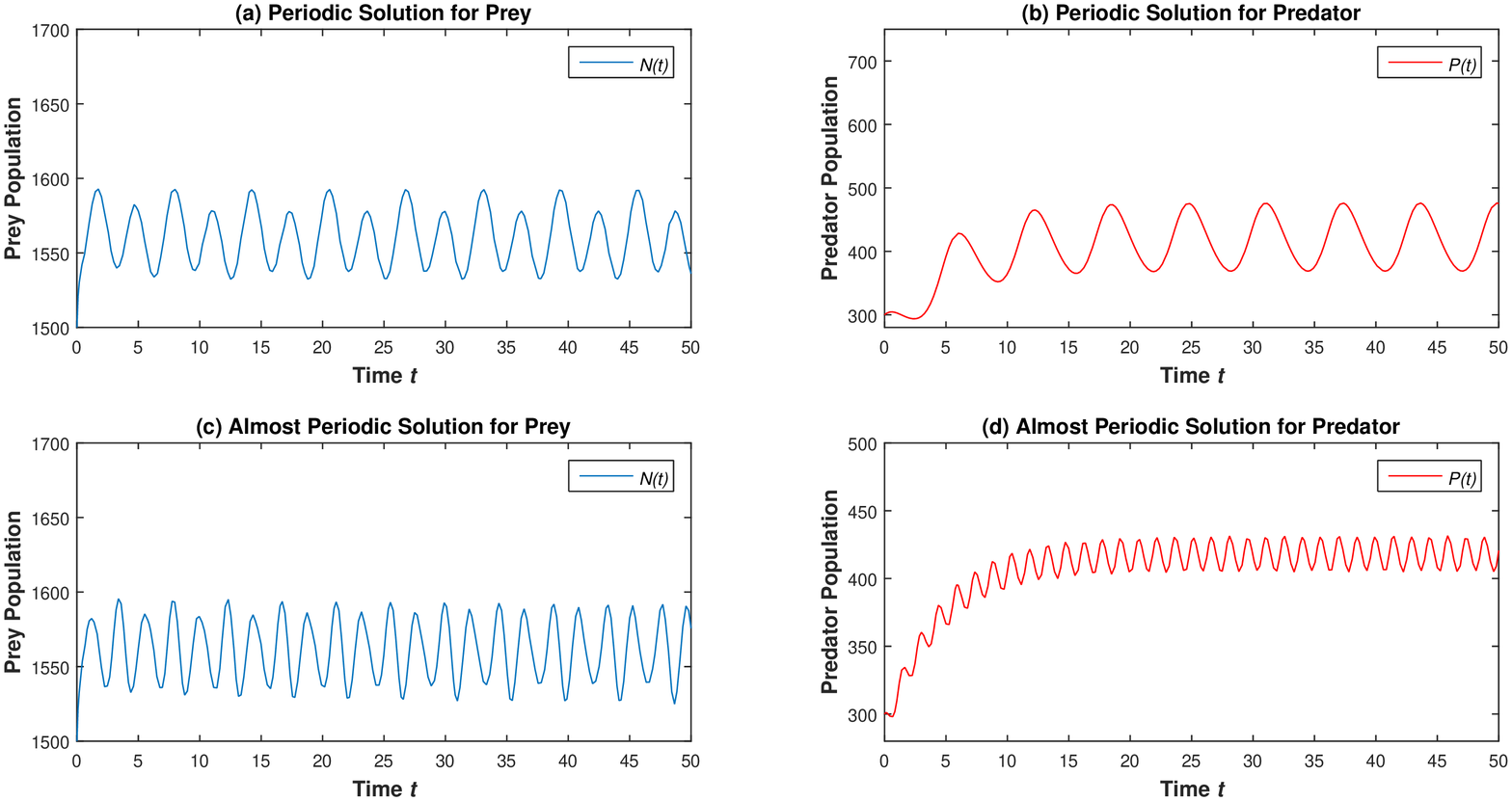}}\vspace{-0.8cm}\caption{Time series plots for system \eqref{eq:k}. Figure (a),(b) and (c),(d) show the existence of periodic and almost periodic solutions to system \eqref{eq:k} respectively.}\label{fig:9}\vspace{0.2cm}
\end{figure}

\begin{figure}\centering
\subfloat[]{\includegraphics[width=6.7in,height=3.5in]{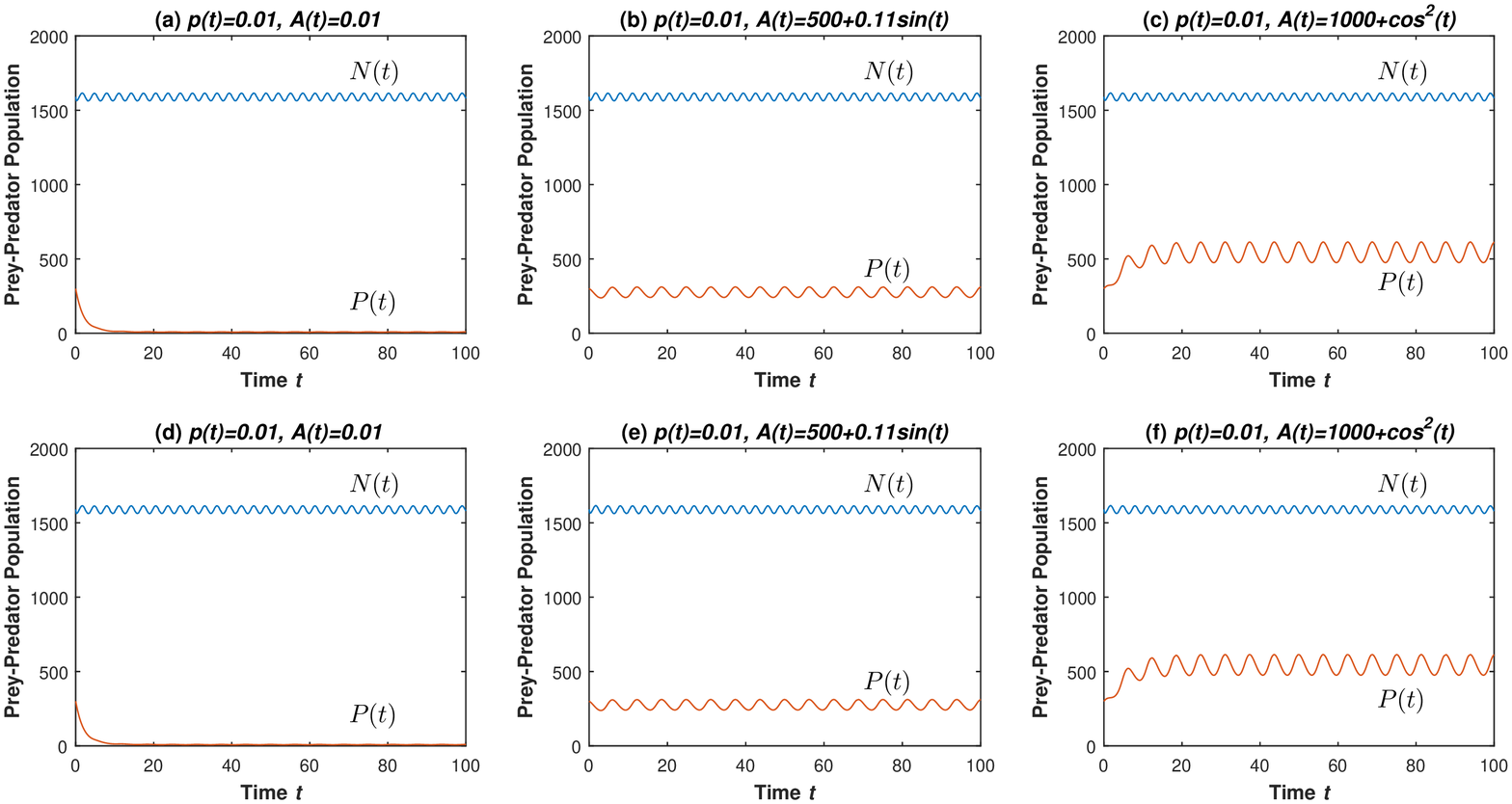}}\vspace{-0.8cm}\caption{(a),(b),(c) and (d),(e),(f) depict the time series plots for various values of $A(t)$ in systems \eqref{eq:2} and \eqref{eq:k} respectively.} \label{fig:10}\vspace{0.2cm}
\end{figure}

\begin{table}[]
\centering
\caption{Parameter sets for Figures \ref{fig:1}-\ref{fig:10}}
\label{Table:1}\scalebox{0.96}{
\begin{tabular}{|p{2cm}|p{14.5cm}|}\hline
\textbf{Figure} & \textbf{Parameters and Initial conditions}\\ \hline
Fig. \ref{fig:1} & $b(t)=6+0.125(cos(t))$, $c(t)=0.01-0.0011(cos(t))$, $\varphi(t)=0.25$, $\zeta(t)=0.5$, $\nu(t)=1.2+0.10(sin(t))$, $\kappa(t)=1$, $\xi(t)=2+cos(t)$, $m(t)=0.20$, $A(t)=50$, $a(t)=0.025$ and $\varsigma=2$. Initial conditions: $\chi(0)=130,
\Upsilon(0)=100$ and $\chi(0)=140, \Upsilon(0)=150$.\\ \hline
Fig. \ref{fig:2} & $b(t)=6+0.125(cos(t))$, $c(t)=0.01-0.0011(cos(t))$, $\varphi(t)=0.25$, $\zeta(t)=0.5$, $\nu(t)=1.2+0.10(sin(t))$, $\kappa(t)=1$, $\xi(t)=2+cos(t)$, $m(t)=0.20$, $A(t)=50$, $a(t)=0.025$, $\varsigma=2$ and $\rho=2 \pi$. Initial conditions:
$\chi(0)=130, \Upsilon(0)=100$ and $\chi(0)=140, \Upsilon(0)=150$. In Fig. \ref{fig:2} (b), only $b(t)=6+0.125(cos(\sqrt2 \pi t))$, $c(t)=0.01-0.0011(cos(\sqrt2 t))$, 
$\nu(t)=1.2+0.1(sin(7 \sqrt2 t))$ and $\xi(t)=2+cos(\sqrt7 t)$. 
\\ \hline
Fig. \ref{fig:3} & $b(t)=170+sin(t)$, $c(t)=0.0029$, $\varphi(t)=0.02+0.01(cos(t))$, $\zeta(t)=0.5$, $\nu(t)=1.2$, $\kappa(t)=1$, $\xi(t)=2+si\chi(t)$, $m(t)=0.5$, $A(t)=0.01$, $a(t)=0.01$ and $\varsigma=2$. Initial conditions: $\chi(0)=1500$, $\Upsilon(0)=300$ and
$\chi(0)=1600$, $\Upsilon(0)=250$. \\ \hline
Fig. \ref{fig:4} & $b(t)=170+sin(t)$, $c(t)=0.0029$, $\varphi(t)=0.02+0.01(cos(t))$, $\zeta(t)=0.5$, $\nu(t)=1.2$, $\kappa(t)=1$, $\xi(t)=2+si\chi(t)$, $m(t)=0.5$, $A(t)=0.01$, $a(t)=0.01$, $\varsigma=2$ and $\rho=2 \pi$. Initial condition: $\chi(0)=1500$, $\Upsilon(0)=300$.
In Fig. \ref{fig:4} (c),(d), only $\varphi(t)=0.02+0.01(cos(3 \sqrt7 t))$, $\xi(t)=2+sin((2.5) \sqrt3 t)$. 
\\ \hline
Fig. \ref{fig:5} & $b(t)=6+0.125(cos(t))$, $c(t)=0.01-0.0011(cos(t))$, $\varphi(t)=0.25$, $\zeta(t)=0.5$, $\nu(t)=1.2+0.10(sin(t))$, $\kappa(t)=1$, $\xi(t)=2+\cos(t)$, $m(t)=0.20$, $A(t)=50$, $a(t)=0.025$ and $\varsigma= 0, 2, 2.5$ and $3$. Initial condition: $(120,
300)$. \\ \hline
Fig. \ref{fig:6} & In (a) $b(t)=6+0.125(cos(\sqrt2 \pi t))$, $c(t)=0.01-0.0011(cos(\sqrt2 t))$, $\varphi(t)=0.25$, $\zeta(t)=0.5$, $\nu(t)=1.2+0.1(sin(7 \sqrt2 t))$, $\kappa(t)=1$, $\xi(t)=2+cos(\sqrt7 t)$, $m(t)=0.20$, $A(t)=50$, $a(t)=0.025$, $\varsigma=2$ and $
\rho=2 \pi$. In (b) $b(t)=170+sin(t)$, $c(t)=0.0029$, $\varphi(t)=0.02+0.01(cos(3 \sqrt7 t))$, $\zeta(t)=0.5$, $\nu(t)=1.2$, $\kappa(t)=1$, $\xi(t)=2+sin((2.5) \sqrt3 t)$, $m(t)=0.5$, $A(t)=0.01$, $a(t)=0.01$, $\varsigma=2$ and $\rho=2 \pi$. \\ \hline
Fig. \ref{fig:7} & $b(t)=6+0.125(cos(t))$, $c(t)=0.01-0.0011(cos(t))$, $\varphi(t)=0.25$, $\nu(t)=1.2+0.10(sin(t))$, $\kappa(t)=1$, $\xi(t)=2+\cos(t)$, $m(t)=0.20$, $a(t)=0.025$ and various values of $\varsigma, \zeta(t)$ and $A(t)$. Initial condition: (150,200). \\ 
\hline
Fig. \ref{fig:8} & $b(t)=170+sin(t)$, $c(t)=0.0029$, $\varphi(t)=0.02+0.01(cos(t))$, $\zeta(t)=0.5$, $\nu(t)=1.2$, $\kappa(t)=1$, $\xi(t)=2+sin(t)$, $m(t)=0.5$, $A(t)=0.01$, $a(t)=0.01$, $q(t)=0.01+cos(t)/300$, $E(t)=2+sin(t)$, $\varsigma_1=2$ and $
\varsigma_2=1$. Initial conditions: $\chi(0)=1500$, $\Upsilon(0)=300$ and $\chi(0)=1600$, $\Upsilon(0)=250$. \\ \hline
Fig. \ref{fig:9} & $b(t)=170+sin(t)$, $c(t)=0.0029$, $\varphi(t)=0.02+0.01(cos(t))$, $\zeta(t)=0.5$, $\nu(t)=1.2$, $\kappa(t)=1$, $\xi(t)=2+sin(t)$, $m(t)=0.5$, $A(t)=0.01$, $a(t)=0.01$, $q(t)=0.01+cos(t)/300$, $E(t)=2+sin(t)$, $\varsigma_1=2$ and $\varsigma_2=1$. 
Initial condition: $\chi(0)=1500$, $\Upsilon(0)=300$.
In Fig. \ref{fig:9} (c),(d), only $\varphi(t)=0.02+0.01(cos(3 \sqrt7 t))$, $\xi(t)=2+sin((2.5) \sqrt3 t)$. 
\\ \hline
Fig. \ref{fig:10} & In (a,b,c): $b(t)=170+sin(t)$, $c(t)=0.0029$, $\varphi(t)=0.02+0.01(cos(t))$, $\zeta(t)=0.01$, $\nu(t)=1.2$, $\kappa(t)=1$, $\xi(t)=2+sin(t)$, $m(t)=0.5$, $a(t)=0.01$, $\varsigma=2$ and various values of $A(t)$. Initial condition: $\chi(0)=1500$, 
$\Upsilon(0)=300$. In (d,e,f) $b(t)=170+sin(t)$, $c(t)=0.0029$, $\varphi(t)=0.02+0.01(cos(t))$, $\zeta(t)=0.01$, $\nu(t)=1.2$, $\kappa(t)=1$, $\xi(t)=2+sin(t)$, $m(t)=0.5$, $a(t)=0.01$, $q(t)=0.01+cos(t)/300$, $E(t)=2+si\chi(t)$, $\varsigma_1=2$, $\varsigma_2=1$ and various 
values of $A(t)$. Initial condition: $\chi(0)=1500$, $\Upsilon(0)=300$. \\ \hline
\end{tabular}}
\end{table}

\begin{table}[]
\centering
\caption{Prey-predator population for various values of $\varsigma$, $\zeta(t)$ and $A(t)$ in system \eqref{eq:2}}
\label{Table:2}
\begin{tabular}{|p{2cm}|p{0.8cm}|p{0.8cm}|p{10cm}|p{1.2cm}|}\hline
$\mathbf{\varsigma}$ & $\mathbf{\zeta(t)}$ & $\mathbf{A(t)}$ & \textbf{Nature of the system} & \textbf{Figure}\\ \hline
$\varsigma=0$ & $0$ & $0$ & \text{Eel depends only on alternative food which} & \text{\ref{fig:7} (a)}\\
\text{(Hilsa } & & &\text{does not exist. Thus eel becomes extinct.}& \\ \cline{2-5}
\text{spawns}& $0$ & $50$ & \text{Eel depends solely on the limited alternative food} & \text{\ref{fig:7} (b)}\\
\text{instantly} & & &\text{and thus manages to sustain.} &\\ \cline{2-5}
\text{after birth.}& $1$ & $0$ & \text{Eel feeds only on hilsa. Thus the lack of alternative} & \text{\ref{fig:7} (c)}\\
\text{i.e, Prey} & & &\text{food does not effect it. Overabundance of prey } &\\
\text{population} & & &\text{leads to superabundance of the predator.} &\\
\cline{2-5}
\text{grows} & $1$ & $50$ & \text{Eel feeds only on hilsa. Limited amount of alternative} & \text{\ref{fig:7} (d)}\\ 
\text{immensely.)} & & &\text{food does not affect the predator. Overabundance} &\\ 
& & &\text{of prey leads to superabundance of the predator.} &\\\hline 
$\varsigma=2$ & $0$ & $0$ & \text{The sole dependence of eel on alternative food which} & \text{\ref{fig:7} (e)}\\ 
\text{(Maturity}& & &\text{does not exist, leads to the extinction of the predator.} &\\ \cline{2-5}
\text{delay in}& $0$ & $50$ & \text{Eel depends only on alternative food (which is} & \text{\ref{fig:7} (f)}\\ 
\text{prey }& & &\text{limited). Thus, eel manages to sustain.} &\\ \cline{2-5}
\text{population.)}& $1$ & $0$ & \text{Eel feeds only on hilsa and thus the lack of alternative} & \text{\ref{fig:7} (g)}\\
& & &\text{food does not affect the predator. Prey-predator} &\\
& & & \text{population is well balanced.} & \\ \cline{2-5}
& $1$ & $50$ & \text{Eel feeds only on hilsa. Thus, the limited availability} & \text{\ref{fig:7} (h)}\\
& & & \text{of alternative food does not affect the predator.}& \\
& & & \text{Prey-predator population is well balanced.}&\\ \hline 
$\varsigma=3$ & $0$ & $0$ & \text{The sole dependence of eel on alternative food (which} & \text{\ref{fig:7} (i)}\\
\text{(Delay in} & & & \text{does not exist) leads to the extinction of the predator.} & \\ \cline{2-5}
\text{maturity is} & $0$ & $50$ & \text{The sole dependence of predator on limited alternative} & \text{\ref{fig:7} (j)}\\ 
\text{a little} & & & \text{food allows predator sustain.} & \\
\cline{2-5}
\text{longer. A} & $1$ & $0$ & \text{Eel feeds only on hilsa. Thus the limited availability} & \text{\ref{fig:7} (k)}\\ 
\text{slight} & & & \text{of alternative food does not affect the predator.} & \\
\text{decline is} & & & \text{Prey-predator population is well balanced. A slight} &\\
\text{observed} & & & \text{decline is observed in eel population when compared to} & \\
\text{in prey} & & & \text{Figure \ref{fig:7} (g).} &\\
\cline{2-5}
\text{population} & $1$ & $50$ & \text{Eel consumes only hilsa and thus the limited} & \text{\ref{fig:7} (l)}\\
\text{when} & & & \text{availability of alternative food does not have any} & \\
\text{compared} & & & \text{impact on the predator. Prey-predator population is} &\\
\text{to Figures \ref{fig:7}} & & & \text{well balanced but, a slight decline is observed} & \\
\text{(e, f, g, h).)} & & & \text{in eel population when compared to Figure \ref{fig:7} (h).} & \\
\hline 
\end{tabular}
\end{table}
\end{document}